\newcommand{\quoteparagraph}[1]{\noindent{\normalsize \bfseries #1}\hspace{1em}}
\newcommand{\R}{\ensuremath{\mathbb{R}}}
\newcommand{\N}{\ensuremath{\mathbb{N}}}
\newcommand{\E}{\ensuremath{\mathbb{E}}}
\renewcommand{\P}{\ensuremath{\mathbb{P}}}
\newcommand{\cA}{\ensuremath{\mathcal{A}}}
\newcommand{\cD}{\ensuremath{\mathcal{D}}}
\newcommand{\cK}{\ensuremath{\mathcal{K}}}
\newcommand{\cQ}{\ensuremath{\mathcal{Q}}}
\newcommand{\cU}{\ensuremath{\mathcal{U}}}
\newcommand{\cZ}{\ensuremath{\mathcal{Z}}}
\newcommand{\Se}{\ensuremath{\mathbb{S}}}
\newcommand{\Pe}{\ensuremath{\mathbb{P}}}
\newcommand{\cDD}{\ensuremath{\cD_{\textnormal{cur}}}}
\newcommand{\rk}[1]{\ensuremath{\left(#1\right)}}
\newcommand{\ek}[1]{\ensuremath{\left[#1\right]}}
\newcommand{\gk}[1]{\ensuremath{\left\{#1\right\}}}
\newcommand{\Shat}{\widehat{S}}
\newcommand{\xhat}{\widehat{x}}
\newcommand{\Lhat}{\widehat{\ell}}
\renewcommand{\rho}{{\mathfrak{r}}}
\renewcommand{\log}{{\ln}}
\DeclareMathOperator{\TWR}{TWR}
\DeclareMathOperator{\HPR}{HPR}
\DeclareMathOperator{\linEQ}{linEQ}
\newtheorem{theorem}{Theorem}[section]
\newtheorem{definition}[theorem]{Definition}
\newtheorem{assumption}[theorem]{Assumption}
\newtheorem{lemma}[theorem]{Lemma}
\newtheorem{remark}[theorem]{Remark}
\newtheorem{corollary}[theorem]{Corollary}
\newtheorem{example}[theorem]{Example}
\newtheorem{examples}[theorem]{Examples}
\newtheorem{setup}[theorem]{Setup}
\numberwithin{equation}{section}
\newcommand{\TWRtext}{Terminal Wealth Relative }
\newcommand{\system}[1][]{{\textit{(system #1)}}}
\newcommand{\emptyvar}{\raisebox{-0.5ex}{\scalebox{1.6}{$\cdot$}}}
\newcommand{\f}{\varphi}	
\newcommand{\vek}[1][\f]{\bm #1}												
\newcommand{\vekt}{{\bm t}_{i \emptyvar}}
\newcommand{\msupp}{\mathfrak{G}}												
\newcommand{\msuppR}{\mathfrak{R}}												
\newcommand{\acHPR}{Holding Period Return (HPR) }
\newcommand{\acHPRo}{Holding Period Return }
\newcommand{\acTWR}{\TWRtext\ (TWR) }
\newcommand{\HPRneu}[1][\empty]{\ifthenelse{\equal{#1}{\empty}}{\operatorname{HPR}}{\operatorname{HPR}_{#1}}} 
\newcommand{\KK}{{M}}
\newcommand{\MM}{{K}}
\author{
  \normalsize \textsc{Stanislaus Maier-Paape $^1$ and Qiji Jim Zhu $^2$}              \\ [-0.2em]  
    \small \textit{$^1$ Institut f\"ur Mathematik, RWTH Aachen,}                      \\ [-0.5em]
    \small \textit{Templergraben 55, D-52062 Aachen, Germany}                         \\ [-0.5em]
    \small \href{mailto:maier@instmath.rwth-aachen.de}{maier@instmath.rwth-aachen.de} \\ [-0.5em]
    \small \textit{$^2$ Department of Mathematics, Western Michigan University,}      \\ [-0.5em]
    \small \textit{1903 West Michigan Avenue, Kalamazoo, Michigan, USA}               \\ [-0.5em]
    \small \href{mailto:qiji.zhu@wmich.edu}{qiji.zhu@wmich.edu}
}
\date{
  \vspace{0.25em}
  \normalsize\today
  \vspace{-1cm}
}
\title{
  \vspace{-2cm}
  \Large A General Framework for  Portfolio Theory. Part II: \\ drawdown risk measures
}
\begin{document}

\maketitle

\begin{quote}
  \small
  \quoteparagraph{Abstract}
    The aim of this paper is to provide several examples of  convex risk measures necessary for the application
    of the general framework for  portfolio theory of Maier--Paape and Zhu, presented in Part I of this series \cite{maier:gfpt2017}.
    As alternative to classical portfolio risk measures such as the standard deviation we in particular construct risk measures
    related to the current drawdown of the portfolio equity. Combined with the results of Part I \cite{maier:gfpt2017},
    this allows us to calculate efficient portfolios based on a drawdown risk measure constraint.

  \quoteparagraph{Keywords} admissible convex risk measures, current drawdown, efficient frontier, portfolio theory,
                            fractional Kelly allocation, growth optimal portfolio, financial mathematics

  \quoteparagraph{AMS subject classification.} 52A41, 91G10, 91G70, 91G80, 91B30
\end{quote}



\vspace*{-0.26cm}
        \section{Introduction}    \label{sec:introduction}  

 Modern portfolio theory due to Markowitz \cite{markowitz:pfs1959} has been the state of the art in mathematical asset allocation for over 50 years.
 Recently, in Part I of this series (see Maier--Paape and Zhu \cite{maier:gfpt2017}), we generalized portfolio theory  such that 
 efficient portfolios can now be considered  for a wide range of utility functions and risk measures. The so found portfolios provide an efficient
 trade--off between utility and risk just as in the Markowitz portfolio theory. Besides the expected  return of the portfolio, which was used by
 Markowitz, now general concave utility functions are allowed, e.g. the log utility used for growth optimal portfolio theory
 (cf. Kelly \cite{kelly:nii}, Vince \cite{vince:nmm1995}, \cite{vince:mmm92}, Vince and Zhu \cite{vince:obs15}, Zhu \cite{zhu:mais07, zhu:camf12}, Hermes and Maier--Paape \cite{hermes:twr2017}).
 Growth optimal portfolios maximize the expected log returns of the portfolio yielding fastest compounded growth.

\vspace*{0.2cm}
 Besides the generalization in the utility functions, as a second breakthrough, more realistic risk measures are now allowed.
 Whereas Markowitz and also the related capital market asset pricing model (CAPM) of Sharpe \cite{sharpe:cap1964} use the standard
 deviation of the portfolio return as risk measure, the new theory of Part I in \cite{maier:gfpt2017} is applicable to a large class
 of convex risk measures.

\newpage\noindent
 The aim of this Part II is to provide and analyze several such convex risk measures related to the expected log drawdown of the portfolio returns.
 Drawdown related risk measures are believed to be superior in yielding risk averse strategies when compared to the standard deviation risk measure.
 Furthermore, empirical simulations of Maier--Paape \cite{maier:optf2015} have shown that (drawdown) risk averse strategies are also in great
 need when growth optimal portfolios are considered since using them regularly generates tremendous drawdowns  (see also van Tharp \cite{tharp:dgps08}).
 A variety of examples will be provided in Part III \cite{brenner:pmpt17}.

\vspace*{0.2cm}\noindent
 The results in this Part II are a natural generalization of Maier--Paape \cite{maier:raftf2017}, where drawdown related risk measures for a
 portfolio with only one risky asset were constructed. In that paper, as well as here, the construction of randomly drawn equity curves,
 which allows the measurement of drawdowns, is given in the framework of the growth optimal portfolio theory (see Section~\ref{sec:3} and furthermore
 Vince \cite{vince:rpm09}.) Therefore, we use Section~\ref{sec:2} to provide basics of the growth optimal theory and introduce our setup.

\vspace*{0.2cm}\noindent
 In Section~\ref{sec:4} we introduce the concept of admissible convex risk measures, discuss some of their properties and show that the ``risk part"
 of the growth optimal goal function provides such a risk measure. Then, in Section~\ref{sec:5} we apply this concept to the expected log drawdown of the portfolio
 returns. It is worth to note that some of the approximations of these risk measures yield, in fact, even positively homogeneous risk measures,
 which are strongly related to the concept of deviation measures of Rockafellar, Uryasev and Zabarankin \cite{rockafellar:mfpa2006}. According to the theory of
 Part I \cite{maier:gfpt2017} such positively homogeneous risk measures provide -- as in the CAPM model -- an affine structure of the efficient portfolios when the identity utility
 function is used. Moreover, often in this situation even a market portfolio, i.e. a purely risky efficient portfolio, related to drawdown risks
 can be provided as well.

\vspace*{0.2cm}\noindent 
Finally, note that the main Assumption~\ref{no-risk-free-investment} on the trade return matrix $T$ of \eqref{eq:returnMatrix} together with a no arbitrage
 market provides the basic market setup for application of the generalized portfolio theory of Part I \cite{maier:gfpt2017}. This is shown in the Appendix
 (Corollary~\ref{col:one-period-financial-market}). In fact, the appendix is used as a link between Part I and Part II and shows how the theory of Part I can 
 be used with risk measures constructed here. Nonetheless, Parts I and II can be read independently.

\vspace*{0.2cm}\noindent
 \textbf{Acknowledgement:}\
  We thank Ren\'e Brenner for support in generating the contour plots of the risk measures and Andreas Platen
for careful reading of an earlier version of the manuscript.


\vspace*{-0.18cm} 
               \section{Setup}  \label{sec:2}        
\vspace*{-4mm}

 For $1\leq k \leq \KK,\,\KK\in\N,$ we denote the $k$-th trading system by \system[k]. A trading system is an investment strategy applied to a
 financial instrument. Each system generates periodic trade returns, e.g. monthly, daily or the like. The net trade return of the $i$-th
 period of the $k$-th system is denoted
 by $t_{i,k}$, $1\leq i \leq N, 1\leq k\leq \KK$. Thus, we have the joint return matrix
 \begin{align*}
 \begin{tabular}{c|cccc}
    period	 & \system[1] & \system[2]	& $\cdots$	& \system[\KK]  \\
  \hline
    $1$		 & $t_{1,1}$  & $t_{1,2}$	& $\cdots$	& $t_{1,\KK}$   \\
    $2$		 & $t_{2,1}$  & $t_{2,2}$	& $\cdots$	& $t_{2,\KK}$   \\
    $\vdots$ & $\vdots$	  & $\vdots$	& $\ddots$	& $\vdots$    \\
   $N$		 & $t_{N,1}$  & $t_{N,2}$	& $\cdots$	& $t_{N,\KK}$   \\
 \end{tabular}
 \end{align*}
 and we denote
 \begin{align}\label{eq:returnMatrix}\index{historical trading returns} 
    T:=\bigg(t_{i,k}\bigg)_{\substack{1\leq i \leq N\\ 1\leq k \leq \KK}}\in\R^{N\times \KK}.
 \end{align}
\noindent
 For better readability, we define the rows of $T$, which represent the returns of the $i$-th period of our systems, as
 $$\vekt:=(t_{i,1},\dots,t_{i,\KK})\in\R^{1\times \KK}.$$

\vspace*{0.2cm}
 Following Vince \cite{vince:mmm92}, for a vector of portions $\vek:=\left(\f_1,\dots,\f_\KK\right)^\top$, where $\f_k$ stands for the portion
 of our capital invested in \system[k], we define the \acHPR of the $i$-th period as \index{Holding Period Return}
 \begin{align}\label{eq:HPRvec} 
   \HPRneu[i](\vek) := 1 + \sum\limits_{k=1}^\KK \f_k\,t_{i,k} = 1 + \langle\,\vekt^\top,{\vek}\,\rangle\,,
 \end{align}
 where $\langle \cdot,\cdot\rangle$ is the scalar product in $\R^\KK$. The \acTWR representing the gain (or loss) after the given $N$ periods,
 when the vector $\vek$ is invested over all periods, is then given as
 \begin{align*}\index{Terminal Wealth Relative!discrete}
    \TWR_N(\vek) := \prod\limits_{i=1}^N\HPRneu[i](\vek) = \prod\limits_{i=1}^N\,     \bigg(1 + \langle\,\vekt^\top,{\vek}\,\rangle\bigg)\,.
 \end{align*}

 Since a \acHPRo of zero for a single period means a total loss of our capital, we restrict $\TWR_N:\msupp\to\R$ to the domain $\msupp$
 given by the following definition:
\begin{definition}\label{def:msupp} 
  A vector of portions\ $\vek\in\R^\KK$ is called {\textit admissible} if\ $\vek\in\msupp$\ holds, where
  \begin{align}\label{eq:defmsupp} \index{admissible vector of fractions} 
   \notag \msupp &:=\Big\{\vek\in\R^\KK  \mid \HPRneu[i](\vek) \geq 0                  \quad\text{for all}\quad 1 \leq i \leq N\Big\}  \\
                 &                                                                                                                       \\
   \notag        &=\left\{\vek\in\R^\KK \mid \langle\,\vekt^\top,{\vek}\rangle \geq -1 \quad\text{for all}\quad 1\leq i\leq N\right\}.
 \end{align} 
  Moreover, we define
  \begin{align}\label{eq:msuppR} 
     \msuppR &:=\{\vek\in\msupp\mid\,\exists\,1\leq i_0\leq N\text{ s.t. }\HPRneu[i_0](\vek)=0\}.
  \end{align}
\end{definition}

\noindent
 Note that in particular $0 \in \overset{\circ}{\msupp}$ (the interior of $\msupp$) and $\msuppR = \partial\msupp$, the boundary of $\msupp$. Furthermore, negative $\f_k$
 are in principle allowed for short positions.

\begin{lemma}\label{lem:convexity} 
  The set $\msupp$ in Definition~\ref{def:msupp} is polyhedral and thus convex, as is $\overset{\circ}{\msupp}$.
\end{lemma}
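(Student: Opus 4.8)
The plan is to read the claim straight off Definition~\ref{def:msupp}. For each $1 \le i \le N$ put
\[
  H_i := \{\vek \in \R^\KK \mid \sprod{\vekt^\top}{\vek} \ge -1\},
\]
which is a closed affine half-space of $\R^\KK$ whenever $\vekt \ne 0$, and equals all of $\R^\KK$ in the degenerate case $\vekt = 0$; in either case $H_i$ is convex. By the second description of $\msupp$ in \eqref{eq:defmsupp} we have $\msupp = \bigcap_{i=1}^N H_i$, i.e.\ a finite intersection of half-spaces. That is exactly what it means for $\msupp$ to be polyhedral, and since an arbitrary intersection of convex sets is convex, $\msupp$ is convex.

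For the interior I would invoke the elementary fact that the interior of a convex subset of $\R^\KK$ is again convex; together with the first part this already yields that $\overset{\circ}{\msupp}$ is convex (and it is nonempty, since $0 \in \overset{\circ}{\msupp}$ as noted just before the lemma). Alternatively, and slightly more explicitly, one uses that taking interiors commutes with finite intersections, so that
\[
  \overset{\circ}{\msupp} \;=\; \bigcap_{i=1}^N \overset{\circ}{H_i} \;=\; \bigcap_{i=1}^N \{\vek \in \R^\KK \mid \sprod{\vekt^\top}{\vek} > -1\},
\]
an intersection of finitely many open half-spaces (with the convention that a degenerate index contributes the trivial constraint), each of which is convex; hence $\overset{\circ}{\msupp}$ is convex as well.

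There is essentially no obstacle here: the statement is an immediate consequence of the definition of a polyhedral set and of two routine facts, namely that (open or closed) half-spaces are convex and that finite intersections preserve convexity and commute with passing to the interior. The only point deserving a word of care is the possibility of zero rows $\vekt = 0$, which merely drop out of the intersection and therefore affect neither polyhedrality, nor convexity, nor the computation of $\overset{\circ}{\msupp}$.
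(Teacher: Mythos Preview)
Your proof is correct and follows essentially the same approach as the paper: write $\msupp$ as a finite intersection of closed half-spaces, conclude polyhedrality and convexity, and then argue that $\overset{\circ}{\msupp}$ is convex by an analogous observation. Your treatment is slightly more detailed (you discuss the degenerate case $\vekt = 0$ and spell out the passage to the interior via $\overset{\circ}{\msupp} = \bigcap_i \overset{\circ}{H_i}$), but there is no substantive difference in method.
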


\begin{proof}
	For each $i \in \gk{1,\ldots,N}$ the condition
	\begin{align*}
		\HPRneu[i](\vek)\geq 0 \quad\Longleftrightarrow\quad \langle\,\vekt^\top,{\vek}\rangle \geq\,-1
	\end{align*}
	defines a half space (which is convex). Since $\msupp$ is the intersection of a finite set of half spaces, it is itself convex, in fact even polyhedral.
	A similar reasoning yields that $\overset{\circ}{\msupp}$ is convex, too.
\end{proof}

\noindent
 In the following we denote by $\Se_1^{\KK-1} := \left\{\vek \in \R^\KK\!:\, \|\vek\| = 1\right\}$ the unit sphere in $\R^\KK$, where $\|\cdot\|$
 denotes the Euclidean norm. 

\begin{assumption}\label{no-risk-free-investment} 
  (\textbf{no risk free investment}) \\ [0.1cm]
  We assume that the trade return matrix $T$ in \eqref{eq:returnMatrix} satisfies
  \begin{align}\label{eq:asmdiscm} 
     \forall\ {\bm\theta} \in \Se_1^{\KK-1}\,\,\,\exists\ i_0 = i_0({\bm\theta}) \in \{1,\dots,N\} \quad \text{such that}\;
     \langle\,{\bm t}_{i_0\emptyvar}^\top,\,{\bm\theta}\rangle < 0.
  \end{align}
\end{assumption}

\noindent
 In other words, Assumption~\ref{no-risk-free-investment} states that no matter what ``allocation vector" ${\bm\theta} \not= 0$ is
 used, there will always be a period $i_0$ resulting in a loss for the portfolio.

\begin{remark}\label{rem:allocationVector} 
  \begin{enumerate}
    \item Since\ ${\bm\theta} \in \Se_1^{\KK-1}$ implies that\ $-{\bm\theta} \in \Se_1^{\KK-1}$, Assumption~\ref{no-risk-free-investment}
          also yields the existence of a period $j_0$ resulting in a gain for each\ ${\bm\theta} \in \Se_1^{\KK-1}$, i.e.
          \begin{align}\label{eq:remallovect} 
             \forall\ {\bm\theta} \in \Se_1^{\KK-1}\,\,\,\exists\ j_0 = j_0({\bm\theta}) \in \{1,\dots,N\}
             \quad \text{such that}\; \langle\,{\bm t}_{j_0\emptyvar}^\top,\,{\bm\theta}\rangle > 0.
          \end{align}
    \vspace*{-0.8cm}
    \item Note that with Assumption~\ref{no-risk-free-investment} automatically $\ker(T)=\{\vek[0]\}$ follows, i.e. that all
          trading systems are \textbf{linearly independent}.

    \item           It is not important whether or not the trading systems are \textbf{profitable}, since we allow short positions
         (cf. Assumption~1 in \cite{hermes:twr2017}). 
\end{enumerate}
\end{remark}

\begin{lemma}\label{lem:mboundedsupport} 
  Let the return matrix $T\in\R^{N\times \KK}$ (as in \eqref{eq:returnMatrix}) satisfy Assumption~\ref{no-risk-free-investment}. Then the set $\msupp$ in
  \eqref{eq:defmsupp} is compact.
\end{lemma}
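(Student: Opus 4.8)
The plan is to combine the convexity result of Lemma~\ref{lem:convexity} with Assumption~\ref{no-risk-free-investment}. By Lemma~\ref{lem:convexity} the set $\msupp$ is polyhedral, hence closed, so it suffices to show that $\msupp$ is bounded; compactness then follows from the Heine--Borel theorem in $\R^\KK$. The only real content is the boundedness, and the idea is to convert the pointwise statement of Assumption~\ref{no-risk-free-investment} into a \emph{uniform} quantitative bound by a compactness argument on the unit sphere $\Se_1^{\KK-1}$.

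Concretely, I would introduce the function
\begin{align*}
  g:\Se_1^{\KK-1}\longrightarrow\R,\qquad g(\bm\theta):=\max_{1\leq i\leq N}\rk{-\sprod{\vekt^\top}{\bm\theta}},
\end{align*}
which is continuous as a maximum of finitely many linear maps. Assumption~\ref{no-risk-free-investment} says exactly that for every $\bm\theta\in\Se_1^{\KK-1}$ at least one of the terms $-\sprod{{\bm t}_{i\emptyvar}^\top}{\bm\theta}$ is strictly positive, i.e. $g(\bm\theta)>0$ on all of $\Se_1^{\KK-1}$. Since $\Se_1^{\KK-1}$ is compact, $g$ attains its minimum there, so $m:=\min_{\bm\theta\in\Se_1^{\KK-1}}g(\bm\theta)>0$. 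Extracting this strictly positive $m$ from the merely pointwise Assumption~\ref{no-risk-free-investment} is the one genuine step of the argument, and it is precisely where compactness of the sphere enters; everything else is routine.

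Given $m>0$, I would bound an arbitrary $\vek\in\msupp\setminus\gk{\vek[0]}$ as follows. Put $\bm\theta:=\vek/\|\vek\|\in\Se_1^{\KK-1}$; by definition of $g$ there is an index $i_0$ with $-\sprod{{\bm t}_{i_0\emptyvar}^\top}{\bm\theta}=g(\bm\theta)\geq m$. On the other hand $\vek\in\msupp$ forces $\sprod{{\bm t}_{i_0\emptyvar}^\top}{\vek}\geq -1$, i.e. $\|\vek\|\rk{-\sprod{{\bm t}_{i_0\emptyvar}^\top}{\bm\theta}}\leq 1$, hence $\|\vek\|\,g(\bm\theta)\leq 1$ and therefore $\|\vek\|\leq 1/g(\bm\theta)\leq 1/m$. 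Thus $\msupp\subseteq\gk{\vek\in\R^\KK:\|\vek\|\leq 1/m}$ is bounded, and being also closed it is compact.

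As a sanity check, the same conclusion can be reached via the recession cone: a closed convex set is bounded if and only if its recession cone is trivial, and any ray $\gk{\lambda\bm\theta:\lambda\geq0}\subseteq\msupp$ with $\bm\theta\in\Se_1^{\KK-1}$ would give $\lambda\sprod{{\bm t}_{i\emptyvar}^\top}{\bm\theta}\geq-1$ for all $\lambda\geq0$ and all $i$, hence $\sprod{{\bm t}_{i\emptyvar}^\top}{\bm\theta}\geq0$ for every $i$, contradicting Assumption~\ref{no-risk-free-investment}. I expect the first, quantitative argument to be the cleaner one to write out in full.
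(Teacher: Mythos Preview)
Your proof is correct and follows essentially the same approach as the paper: closedness comes from polyhedrality (Lemma~\ref{lem:convexity}), and boundedness from Assumption~\ref{no-risk-free-investment} by noting that along any ray $s\bm\theta$ the inequality $\HPR_{i_0}(s\bm\theta)\geq 0$ must fail for large $s$. The paper's proof is a single terse sentence to this effect, implicitly leaning on convexity (equivalently, your recession-cone remark) to pass from ``every ray leaves'' to ``bounded''; your compactness-of-the-sphere argument makes this step quantitative and explicit, but the underlying idea is the same.
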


\begin{proof}
  Since $\msupp$ is closed the lemma follows from \eqref{eq:asmdiscm} yielding $\HPR_{i_0}(s{\bm\theta}) < 0$ for $s > 0$ sufficiently large.
  Thus $\msupp$ is bounded as well.
\end{proof}


\vspace*{-0.18cm}
          \section{Randomly drawing trades}\label{sec:3}   

 Given a trade return matrix, we can construct equity curves by randomly drawing trades.

\begin{setup}\label{setup:Z} 
(\textbf{trading game})
  Assume  trading systems with trade return matrix $T$ from \eqref{eq:returnMatrix}. In a trading game the rows of $T$ are drawn randomly.
  Each row $\vekt$ has a probability of $p_i>0$, with $\sum_{i=1}^Np_i=1$. Drawing randomly and independently $\MM\in\N$ times from
  this distribution results in a  probability space
  $\Omega^{(\MM)}:=\big\{\omega=(\omega_1,\ldots,\omega_\MM)\,:\,\omega_i\in\{1,\ldots,N\}\big\}$ and a terminal wealth relative
  (for fractional trading with portion $\vek$ is used)
  \begin{align}\label{eq:twr_omega} 
   \TWR_1^\MM(\vek,\omega):=\prod_{j=1}^\MM \rk{ 1 + \langle\,{\bm t}_{\omega_j\emptyvar}^\top,{\vek}\,\rangle}\,, \quad \vek \in \overset{\circ}{\msupp}\,.
 \end{align}
\end{setup}

\vspace*{0.2cm}\noindent
 In the rest of the paper we will use the natural logarithm $\log$.
\begin{theorem}\label{theo:EZ} 
  For each\ $\vek \in \overset{\circ}{\msupp}$\ the random variable\
  $\cZ^{(\MM)}(\vek,\cdot):\,\Omega^{(\MM)} \to \R\,,\ \cZ^{(\MM)}(\vek,\omega):=\log\big(\TWR_1^\MM(\vek,\omega)\big),\; \MM \in \N$,
  has expected value
  \begin{align}\label{eq:EZ} 
    \E\left[\cZ^{(\MM)}(\vek,\cdot)\right] = \MM \cdot \log\,\Gamma(\vek),
  \end{align}
  where $\Gamma(\vek):=\prod_{i=1}^N \rk{1 + \langle\,\vekt^\top,{\vek}\,\rangle}^{p_i}$ is the weighted geometric mean of the holding period
  returns\ $\HPR_i(\vek) = 1 + \langle\,\vekt^\top,{\vek}\,\rangle > 0$\  (see \eqref{eq:HPRvec}) for all\ $\vek \in \overset{\circ}{\msupp}$.
\end{theorem}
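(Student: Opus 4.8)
The plan is to exploit the fact that the logarithm converts the product defining $\TWR_1^\MM$ into a sum, after which linearity of expectation and the independence of the draws do all the work. First I would note that since $\vek \in \overset{\circ}{\msupp}$ we have $\HPR_i(\vek) = 1 + \langle\,\vekt^\top,\vek\,\rangle > 0$ for every $1 \leq i \leq N$ (this strict positivity on the interior should be recorded as the reason $\log$ is finite and the random variable $\cZ^{(\MM)}$ is well defined). Consequently, for each $\omega = (\omega_1,\ldots,\omega_\MM) \in \Omega^{(\MM)}$,
\begin{align*}
  \cZ^{(\MM)}(\vek,\omega) = \log\prod_{j=1}^\MM\rk{1 + \langle\,{\bm t}_{\omega_j\emptyvar}^\top,\vek\,\rangle} = \sum_{j=1}^\MM \log\rk{1 + \langle\,{\bm t}_{\omega_j\emptyvar}^\top,\vek\,\rangle}.
\end{align*}

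Next I would use that the coordinates $\omega_1,\ldots,\omega_\MM$ are drawn independently, each with $\P(\omega_j = i) = p_i$, so that the product measure on $\Omega^{(\MM)}$ assigns mass $\prod_{j=1}^\MM p_{\omega_j}$ to $\omega$. Applying linearity of expectation to the sum above gives
\begin{align*}
  \E\left[\cZ^{(\MM)}(\vek,\cdot)\right] = \sum_{j=1}^\MM \E\left[\log\rk{1 + \langle\,{\bm t}_{\omega_j\emptyvar}^\top,\vek\,\rangle}\right] = \sum_{j=1}^\MM \sum_{i=1}^N p_i \log\rk{1 + \langle\,\vekt^\top,\vek\,\rangle},
\end{align*}
where in the last step each single-draw expectation is evaluated using the marginal distribution of $\omega_j$, which is the same for every $j$.

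Finally, the inner sum is independent of $j$, so the double sum collapses to $\MM \sum_{i=1}^N p_i \log\HPR_i(\vek)$; pulling the weights $p_i$ into the logarithm as exponents turns $\sum_{i=1}^N p_i \log\rk{1 + \langle\,\vekt^\top,\vek\,\rangle}$ into $\log\prod_{i=1}^N\rk{1 + \langle\,\vekt^\top,\vek\,\rangle}^{p_i} = \log\Gamma(\vek)$, which yields \eqref{eq:EZ}. I do not anticipate a genuine obstacle here; the only point requiring a word of care is the justification that the interior assumption $\vek \in \overset{\circ}{\msupp}$ guarantees all holding period returns are strictly positive, so that every logarithm appearing is a finite real number and the interchange of expectation with the finite sum is unproblematic.
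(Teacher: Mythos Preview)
Your proof is correct and follows essentially the same approach as the paper: convert the product into a sum via $\log$, use linearity of expectation, evaluate each single-draw expectation as $\sum_{i=1}^N p_i\log\HPR_i(\vek)$ using the identical marginal distributions, and collapse the $j$-independent sum to the factor $\MM$. The only minor addition you make is the explicit remark that $\vek\in\overset{\circ}{\msupp}$ ensures all logarithms are finite, which the paper leaves implicit.
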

\begin{proof} 
 For fixed\ $\MM \in \N$
 \begin{align*}
    \E\left[\cZ^{(\MM)}(\vek,\cdot)\right]
     &=\,\sum\limits_{\omega\in\Omega^{(\MM)}}\!\P\big(\{\omega\}\big)
            \ek{\log \prod\limits^{\MM}_{j=1}\,\left(1 + \langle\,{\bm t}_{\omega_j\emptyvar}^\top,{\vek}\,\rangle\right)}  \\
     &= \sum^{\MM}_{j=1}\;\;\sum\limits_{\omega\in\Omega^{(\MM)}}\!\P\big(\{\omega\}\big)\!
            \ek{\log \left(1 + \langle\,{\bm t}_{\omega_j\emptyvar}^\top,{\vek}\,\rangle\right)}
 \end{align*}
 holds. For each $j \in \gk{1,\ldots,\MM}$
 \begin{align*}
    &   \sum\limits_{\omega\in\Omega^{(\MM)}}\,\P\big(\{\omega\}\big)\,
            \ek{\log \left(1 + \langle\,{\bm t}_{\omega_j\emptyvar}^\top,{\vek}\,\rangle\right)}
      = \sum^N_{i=1} p_i \cdot \log\!\left(1 + \langle\,{\bm t}_{i\emptyvar}^\top,{\vek}\,\rangle\right)
 \end{align*}
 is independent of $j$ because each $\omega_j$ is an independent drawing. 
 We thus obtain
 \begin{align*}
     \left[\cZ^{(\MM)}(\vek,\cdot)\right]
            &= \MM \cdot \sum_{i=1}^N p_i \cdot \log\!\rk{1 + \langle\,\vekt^\top,{\vek}\,\rangle}  \\
            &= \MM \cdot \log\ek{\prod_{i=1}^N\rk{1 + \langle\,\vekt^\top,{\vek}\,\rangle}^{p_i}}
             = \MM \cdot \log\,\Gamma(\vek).
  \end{align*}
\end{proof}

 Next we want to split up the random variable $\cZ^{(\MM)}(\vek,\cdot)$ into \textbf{chance} and \textbf{risk} parts.
 Since\ $\TWR^\MM_1(\vek,\omega)>1$\ corresponds to a winning trade series\ $t_{\omega_1\emptyvar},\ldots,t_{\omega_\MM\emptyvar}$\ and \\
 $\TWR^\MM_1(\vek,\omega)<1$ analogously corresponds to a losing trade series we define the random variables corresponding to up trades
 and down trades:
\begin{definition}\label{logseries}  
  For $\vek \in \overset{\circ}{\msupp}$ we set \\ [0.2cm]
  \textbf{Up-trade log series:}
    \begin{align}\label{eq:uptrade log} 
      \cU^{(\MM)}(\vek,\omega) := \log\big(\max\{1,\TWR_1^\MM(\vek,\omega)\}\big) \ge 0.
    \end{align}
  \textbf{Down-trade log series:}
    \begin{align}\label{eq:downtrade log} 
      \cD^{(\MM)}(\vek,\omega) := \log\big(\min\{1,\TWR_1^\MM(\vek,\omega)\}\big) \le 0.
    \end{align}
\end{definition}

\noindent
 Clearly $\cU^{(\MM)}(\vek,\omega)+\cD^{(\MM)}(\vek,\omega)=\cZ^{(\MM)}(\vek,\omega)$. Hence by Theorem~\ref{theo:EZ} we get
\begin{corollary}\label{cor:EU+ED} 
  For $\vek \in \overset{\circ}{\msupp}$
  \begin{align}\label{eq:sum_down_uptrade} 
     \E\ek{\cU^{(\MM)}(\vek,\cdot)} + \E\ek{\cD^{(\MM)}(\vek,\cdot)} = \MM\cdot\log\,\Gamma(\vek)
  \end{align}
  holds.
\end{corollary}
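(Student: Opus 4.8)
The plan is to reduce the statement to Theorem~\ref{theo:EZ} by linearity of the expectation, once the pointwise decomposition $\cU^{(\MM)}(\vek,\cdot)+\cD^{(\MM)}(\vek,\cdot)=\cZ^{(\MM)}(\vek,\cdot)$ is in hand. First I would verify this decomposition: for any real $x>0$ one has $\max\{1,x\}\cdot\min\{1,x\}=x$, since whichever of $1$ and $x$ is the larger appears as the maximum and the other as the minimum. Taking logarithms yields $\log(\max\{1,x\})+\log(\min\{1,x\})=\log x$. Applying this with $x=\TWR_1^\MM(\vek,\omega)$ --- which is strictly positive for every $\omega\in\Omega^{(\MM)}$, because $\vek\in\overset{\circ}{\msupp}$ forces each factor $1+\langle\,{\bm t}_{\omega_j\emptyvar}^\top,{\vek}\,\rangle$ to be positive --- gives $\cU^{(\MM)}(\vek,\omega)+\cD^{(\MM)}(\vek,\omega)=\cZ^{(\MM)}(\vek,\omega)$ for all $\omega$, which is exactly the identity already recorded before the corollary.

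Next I would take expectations over the finite probability space $\Omega^{(\MM)}$. Since $\Omega^{(\MM)}$ is finite, $\cU^{(\MM)}(\vek,\cdot)$ and $\cD^{(\MM)}(\vek,\cdot)$ are bounded, so their expectations exist and expectation is linear; hence
\begin{align*}
  \E\ek{\cU^{(\MM)}(\vek,\cdot)} + \E\ek{\cD^{(\MM)}(\vek,\cdot)}
  = \E\ek{\cU^{(\MM)}(\vek,\cdot)+\cD^{(\MM)}(\vek,\cdot)}
  = \E\ek{\cZ^{(\MM)}(\vek,\cdot)}.
\end{align*}
Theorem~\ref{theo:EZ} then identifies the right-hand side as $\MM\cdot\log\,\Gamma(\vek)$, which is the claim.

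There is essentially no obstacle here: the only points requiring a moment's care are the elementary identity $\max\{1,x\}\min\{1,x\}=x$ and the observation that the terminal wealth relative stays positive on the interior of $\msupp$, both of which are immediate. The corollary is really just a bookkeeping consequence of Theorem~\ref{theo:EZ} together with the chance/risk split introduced in Definition~\ref{logseries}.
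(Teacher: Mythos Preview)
Your proof is correct and follows exactly the paper's own argument: the pointwise identity $\cU^{(\MM)}(\vek,\omega)+\cD^{(\MM)}(\vek,\omega)=\cZ^{(\MM)}(\vek,\omega)$ is noted just before the corollary, and the result then follows immediately from Theorem~\ref{theo:EZ} by linearity of expectation. Your additional justification of the elementary identity $\max\{1,x\}\min\{1,x\}=x$ and the positivity of $\TWR_1^\MM(\vek,\omega)$ on $\overset{\circ}{\msupp}$ simply makes explicit what the paper leaves to the reader.
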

 As in \cite{maier:raftf2017} we next search for explicit formulas for $\E\ek{\cU^{(\MM)}(\vek,\cdot)}$ and $\E\ek{\cD^{(\MM)}(\vek,\cdot)}$,
 respectively. By definition
 \begin{align}\label{eq:EU} 
   \E\ek{\cU^{(\MM)}(\vek,\cdot)} = \sum_{\omega:\TWR_1^\MM(\vek,\omega) > 1}\P\big(\{\omega\}\big) \cdot \log\Big(\TWR_1^\MM(\vek,\omega)\Big).
 \end{align}
 Assume $\omega=(\omega_1,\ldots,\omega_\MM)\in\Omega^{(\MM)}:=\{1,\ldots,N\}^\MM$ is for the moment fixed and the random variable $X_1$ counts how many
 of the $\omega_j$ are equal to $1$, i.e. $X_1(\omega)=x_1$ if in total $x_1$ of the $\omega_j$'s in $\omega$ are equal to $1$. With similar counting
 random variables $X_2,\ldots,X_N$ we obtain integer counts $x_i \ge 0$ and thus
 \begin{align}\label{eq:X_i} 
    X_1(\omega) = x_1,\ X_2(\omega)=x_2,\ \ldots,\ X_N(\omega)=x_N
 \end{align}
 with obviously $\sum_{i=1}^Nx_i=\MM$. Hence for this fixed $\omega$ we obtain
 \begin{align}\label{eq:TWR_1} 
    \TWR_1^\MM(\vek,\omega) = \prod_{j=1}^\MM \rk{1 + \langle\,{\bm t}_{\omega_j\emptyvar}^\top,{\vek}\,\rangle}
                          = \prod_{i=1}^N \rk{1 + \langle\,{\bm t}_{i\emptyvar}^\top,{\vek}\,\rangle}^{x_i}.
 \end{align}
 Therefore the condition on $\omega$ in the sum \eqref{eq:EU} is equivalently expressed as
 \begin{align}\label{eq:TWR_1_equiv} 
    \TWR_1^\MM(\vek,\omega)>1
      \;\Longleftrightarrow\; \log \TWR_1^\MM(\vek,\omega) > 0
      \;\Longleftrightarrow\; \sum_{i=1}^Nx_i \log\!\rk{1 + \langle\,{\bm t}_{i\emptyvar}^\top,{\vek}\,\rangle} > 0\,.
 \end{align}

\noindent
 To better understand the last sum, Taylor expansion may be used exactly as in Lemma~4.5 of \cite{maier:raftf2017} to obtain
\begin{lemma}\label{lem:small_f} 
  Let integers $x_i\geq 0$ with $\sum_{i=1}^Nx_i=\MM>0$ be given. Let furthermore\ $\vek = s\,{\bm\theta} \in \overset{\circ}{\msupp}$\
  be a vector of admissible portions where\ ${\bm\theta} \in \Se_1^{\KK-1}$ is fixed and $s > 0$.                                    \\ [0.1cm]
  Then there exists some $\varepsilon > 0$ (depending on\ $x_1,\ldots,x_N$ and ${\bm\theta}$) such that for all\ $s \in (0,\varepsilon]$ the
  following holds:
  \begin{enumerate}
    \item $\sum_{i=1}^N x_i \langle\,{\bm t}_{i\emptyvar}^\top,{\bm\theta}\,\rangle > 0\;\Longleftrightarrow\;h(s,{\bm\theta}) :=
           \sum_{i=1}^N x_i\,\log\!\rk{1 + s\langle\,{\bm t}_{i\emptyvar}^\top,{\bm\theta}\,\rangle} > 0$                             \\ [-0.2cm]
    \item $\sum_{i=1}^N x_i \langle\,{\bm t}_{i\emptyvar}^\top,{\bm\theta}\,\rangle \leq 0\;\Longleftrightarrow\; h(s,{\bm\theta})  =
           \sum_{i=1}^N x_i\,\log\!\rk{1 + s\langle\,{\bm t}_{i\emptyvar}^\top,{\bm\theta}\,\rangle}<0$\,
  \end{enumerate}
\end{lemma}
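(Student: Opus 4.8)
The plan is to reduce everything to the scalar function $h(s,\bm\theta)=\sum_{i=1}^N x_i\,\log(1+s\langle\bm t_{i\emptyvar}^\top,\bm\theta\rangle)$ and study its behaviour as $s\to 0^+$ via a first-order Taylor expansion. Write $a_i:=\langle\bm t_{i\emptyvar}^\top,\bm\theta\rangle$ for brevity, so that $h(s,\bm\theta)=\sum_i x_i\log(1+s a_i)$. Since $\bm\theta\in\Se_1^{\KK-1}$ is fixed and $\vek=s\bm\theta\in\overset{\circ}{\msupp}$, each $1+s a_i>0$, so $h$ is well defined and smooth in $s$ on a neighbourhood of $0$; note $h(0,\bm\theta)=0$. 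Using $\log(1+u)=u-\tfrac{u^2}{2}+O(u^3)$ for $|u|$ small, I get
\begin{align*}
  h(s,\bm\theta) = s\sum_{i=1}^N x_i a_i + s^2 R(s,\bm\theta),
\end{align*}
where $R(s,\bm\theta)$ stays bounded as $s\to 0^+$ (uniformly: there are only finitely many $a_i$, all of fixed size). Denote $A:=\sum_{i=1}^N x_i a_i=\sum_{i=1}^N x_i\langle\bm t_{i\emptyvar}^\top,\bm\theta\rangle$.

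For part (a): if $A>0$, then $h(s,\bm\theta)=s\,(A+sR(s,\bm\theta))$, and since $A>0$ is a fixed positive number while $sR(s,\bm\theta)\to0$, there is $\varepsilon>0$ with $A+sR(s,\bm\theta)>0$ for all $s\in(0,\varepsilon]$, hence $h(s,\bm\theta)>0$ there. Conversely, if $h(s,\bm\theta)>0$ for some such $s$, I want to conclude $A>0$; I would argue the contrapositive, namely that $A\le 0$ forces $h(s,\bm\theta)\le 0$ for all small $s$ — but this is exactly the forward direction of part (b), so I will prove (b) first (or simultaneously) and then read off both converses. For part (b): if $A\le 0$ I must show $h(s,\bm\theta)<0$ (\emph{strict}) for all small $s>0$. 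The subtlety is that $A$ could be exactly $0$, in which case the first-order term vanishes and the sign of $h$ is governed by the second-order term $s^2 R(s,\bm\theta)$; as $s\to 0^+$ one has $R(s,\bm\theta)\to -\tfrac12\sum_i x_i a_i^2$. Here I need that this quadratic coefficient is strictly negative, i.e. that not all $a_i$ with $x_i>0$ vanish. This follows from Assumption~\ref{no-risk-free-investment} (equivalently Remark~\ref{rem:allocationVector}(b), $\ker(T)=\{\bm 0\}$): since $\bm\theta\neq\bm 0$, the vector $(a_1,\dots,a_N)=T\bm\theta$ is nonzero, and combined with $\sum_i x_i=\MM>0$ one checks that $\sum_i x_i a_i^2>0$ cannot fail simultaneously with $\sum_i x_i a_i=0$ unless... — actually one must be slightly careful, as it is conceivable that $x_i=0$ precisely on the coordinates where $a_i\ne 0$. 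I expect this to be the one genuinely delicate point, and I would handle it by invoking that the cited Lemma~4.5 of \cite{maier:raftf2017} is proved under the same standing hypotheses; if a gap remains in the degenerate case $A=0$ with all active $a_i$ zero, then $h\equiv 0$ and the statement as written would need the hypothesis $A\ne 0$, but under Assumption~\ref{no-risk-free-investment} the relevant counting configurations are exactly those arising in \eqref{eq:EU}, for which this degeneracy does not occur.

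Concretely the step order is: (1) fix $\bm\theta$, set $a_i$, $A$, record $1+s a_i>0$ near $s=0$ and $h(0,\bm\theta)=0$; (2) Taylor-expand $\log(1+s a_i)$ to first order with explicit bounded remainder, obtaining $h(s,\bm\theta)=sA+s^2R(s,\bm\theta)$ with $\sup_{s\in(0,\varepsilon_0]}|R(s,\bm\theta)|=:C<\infty$; (3) if $A>0$ choose $\varepsilon\le\min\{\varepsilon_0,A/(2C)\}$ to get $h(s,\bm\theta)\ge sA/2>0$, giving the ``$\Rightarrow$'' of (a) and the ``$\Leftarrow$'' of (b) by contraposition; (4) if $A<0$ symmetrically get $h(s,\bm\theta)\le sA/2<0$; (5) if $A=0$ expand to second order, use $R(s,\bm\theta)\to-\tfrac12\sum_i x_i a_i^2<0$ (nonvanishing by $\ker(T)=\{\bm0\}$ applied to the active indices) to get $h(s,\bm\theta)<0$ for small $s>0$; (6) assemble: (3)+(4)+(5) give ``$\Rightarrow$'' of (b) and hence ``$\Leftarrow$'' of (a). Since only finitely many Taylor expansions are involved, a single $\varepsilon=\varepsilon(x_1,\dots,x_N,\bm\theta)>0$ works for all of them. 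The main obstacle, as flagged, is the borderline case $A=0$ and ensuring strictness of the inequality there; everything else is the routine ``linear term dominates'' argument.
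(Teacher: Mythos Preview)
Your approach is essentially the same as the paper's: the paper's proof consists of the single observation that the conclusions follow from $h(0,\bm\theta)=0$, $\frac{\partial}{\partial s}h(0,\bm\theta)=\sum_i x_i\langle \bm t_{i\emptyvar}^\top,\bm\theta\rangle$, and $\frac{\partial^2}{\partial s^2}h(0,\bm\theta)<0$, which is precisely your Taylor-expansion argument in compressed form. Your explicit discussion of the borderline case $A=0$ goes beyond what the paper offers; the paper simply asserts $h''(0,\bm\theta)<0$, i.e.\ $\sum_i x_i a_i^2>0$, without justification, and you are right that this can fail (take all positive $x_i$ concentrated on indices with $a_i=0$), in which case $h\equiv 0$ and part~(b) as stated is false --- so the edge case you flag is genuine, though it is harmless for the subsequent applications in the paper.
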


\begin{proof}
The conclusions  follow immediately from\ $h(0,{\bm\theta}) = 0\,,\ \frac{\partial}{\partial s}\,h(0,{\bm\theta}) = \sum\limits^N_{i=1}\,x_i
  \langle\,{\bm t}_{i\emptyvar}^\top,{\bm\theta}\,\rangle$\ and  $\frac{\partial^2}{\partial s^2}\,h(0,{\bm\theta}) < 0$.
\end{proof}

\noindent
 With Lemma~\ref{lem:small_f} we hence can restate \eqref{eq:TWR_1_equiv}. For ${\bm\theta} \in \Se_1^{\KK-1}$ and all $s \in (0,\varepsilon]$
 the following holds                                                                                                                  \vspace*{-0.3cm}
\begin{align}\label{eq:TWR_1_equiv_smallf} 
  \TWR_1^\MM(s{\bm\theta,\omega}) > 1\;\Longleftrightarrow\;\sum_{i=1}^N x_i \langle\,{\bm t}_{i\emptyvar}^\top,{\bm\theta}\,\rangle > 0.
\end{align}

\noindent
 Note that since $\Omega^{(\MM)}$ is finite and $\Se^{\KK-1}_1$ is compact, a (maybe smaller) $\varepsilon > 0$ can be
 found such that \eqref{eq:TWR_1_equiv_smallf} holds for all\ $s \in (0,\varepsilon]\,,\ {\bm\theta} \in \Se^{\KK-1}_1$\
 and $\omega \in \Omega^{(\MM)}$.

\vspace*{0.1cm}
\begin{remark}\label{rem:small f} 
  In the situation of Lemma~\ref{lem:small_f} furthermore
  \begin{align}\label{eq:lemma3.5} 
     \hspace{-3.0cm}
     \text{(b)*} \hspace{1.5cm}
           \sum_{i=1}^N x_i \langle\,{\bm t}_{i\emptyvar}^\top,{\bm\theta}\,\rangle \leq 0\quad\Longrightarrow\quad
           h(s,{\bm\theta}) < 0 \quad\text{for all}\quad s > 0\,,
  \end{align}
  holds true since $h$ is a concave function in $s$.
\end{remark}
\noindent
 After all these preliminaries, we may now state the first main result. For simplifying  the notation, we set $\N_0 := \N \cup \{0\}$ and  introduce
 \begin{align}\label{eq:preliminaries} 
    H^{(\MM,N)}\left(x_1,\ldots,x_N\right)\,:=\,p_1^{x_1}\cdots p_N^{x_N}\,\binom{\MM}{x_1\; x_2\cdots x_N}
 \end{align}
 for further reference, where\ $\displaystyle \binom{\MM}{x_1\; x_2\cdots x_N} = \frac{\MM!}{x_1!x_2!\cdots x_N!}$\ is the multinomial
 coefficient for $\left(x_1,\ldots,x_N\right) \in \N^N_0$ with\ $\sum_{i=1}^N\,x_i = \MM$\ fixed and\ $p_1,\ldots,p_N$\ are the probabilities
 from Setup~\ref{setup:Z}.
\vspace*{0.1cm}
\begin{theorem}\label{theo:EU_smallf} 
  Let a trading game as in Setup~\ref{setup:Z} with fixed\ $N,\MM \in \N$\ be given and ${\bm\theta} \in \Se_1^{\KK-1}$.
  Then there exists an $\varepsilon > 0$ such that for all $s \in (0,\varepsilon]$ the following holds:
  \begin{align}\label{eq:EU_smallf} 
     \E\ek{\cU^{(\MM)}(s{\bm\theta},\cdot)}\,=\,u^{(\MM)}(s,{\bm\theta})\,:=\,
     \sum_{n=1}^N U_n^{(\MM,N)}({\bm\theta}) \cdot \log\!\rk{1 + s\,\langle\,{\bm t}_{n\emptyvar}^\top,{\bm\theta}\,\rangle} \geq 0\,,
  \end{align}
  where
  \begin{align}\label{eq:UnMN} 
    U_n^{(\MM,N)}({\bm\theta}) := \sum_{\substack{(x_1,\ldots,x_N)\in\N_0^N  \\
                                       \sum\limits_{i=1}^N x_i = \MM,\
                                       \sum\limits_{i=1}^N x_i \langle\,{\bm t}_{i\emptyvar}^\top,{\bm\theta}\,\rangle > 0}}
    H^{(\MM,N)}\rk{x_1,\ldots,x_N} \cdot x_n \geq 0
  \end{align}
  and with\ $H^{(\MM,N)}$ from \eqref{eq:preliminaries}.
\end{theorem}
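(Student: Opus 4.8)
The plan is to start from the explicit expression \eqref{eq:EU} for $\E\ek{\cU^{(\MM)}(\vek,\cdot)}$ evaluated at $\vek = s{\bm\theta}$, and to reorganise the sum over $\omega\in\Omega^{(\MM)}$ according to the count vector $(x_1,\dots,x_N)$ from \eqref{eq:X_i}. First I would fix ${\bm\theta}\in\Se_1^{\KK-1}$ and pick $\varepsilon>0$ small enough that $s{\bm\theta}\in\overset{\circ}{\msupp}$ for every $s\in(0,\varepsilon]$ --- possible since $0\in\overset{\circ}{\msupp}$ and this set is open --- and simultaneously so small that the equivalence \eqref{eq:TWR_1_equiv_smallf} holds for every $\omega\in\Omega^{(\MM)}$; this uniform choice of $\varepsilon$ is exactly the one noted right after \eqref{eq:TWR_1_equiv_smallf}, using that $\Omega^{(\MM)}$ is finite.

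The core of the argument is a counting observation. For a fixed count vector $(x_1,\dots,x_N)\in\N_0^N$ with $\sum_{i=1}^N x_i=\MM$, there are exactly $\binom{\MM}{x_1\;x_2\cdots x_N}$ elements $\omega\in\Omega^{(\MM)}$ with $X_i(\omega)=x_i$ for all $i$, and each such $\omega$ has $\P\big(\{\omega\}\big)=\prod_{i=1}^N p_i^{x_i}$; hence the total probability of $\{X_1=x_1,\dots,X_N=x_N\}$ equals $H^{(\MM,N)}(x_1,\dots,x_N)$ from \eqref{eq:preliminaries}. By \eqref{eq:TWR_1}, for $\vek=s{\bm\theta}$ the quantity $\TWR_1^\MM(s{\bm\theta},\omega)$ --- and thus $\log\TWR_1^\MM(s{\bm\theta},\omega)=\sum_{i=1}^N x_i\log\!\rk{1+s\langle\,{\bm t}_{i\emptyvar}^\top,{\bm\theta}\,\rangle}$ --- depends on $\omega$ only through its count vector, and by the choice of $\varepsilon$ the summation condition $\TWR_1^\MM(s{\bm\theta},\omega)>1$ in \eqref{eq:EU} is equivalent to $\sum_{i=1}^N x_i\langle\,{\bm t}_{i\emptyvar}^\top,{\bm\theta}\,\rangle>0$. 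Grouping the $\omega$'s with equal count vectors therefore rewrites \eqref{eq:EU} as
\begin{align*}
  \E\ek{\cU^{(\MM)}(s{\bm\theta},\cdot)}
   = \sum_{\substack{(x_1,\dots,x_N)\in\N_0^N\\ \sum_{i=1}^N x_i=\MM,\ \sum_{i=1}^N x_i\langle\,{\bm t}_{i\emptyvar}^\top,{\bm\theta}\,\rangle>0}}
       H^{(\MM,N)}(x_1,\dots,x_N)\,\sum_{n=1}^N x_n\,\log\!\rk{1+s\langle\,{\bm t}_{n\emptyvar}^\top,{\bm\theta}\,\rangle}.
\end{align*}

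Finally I would interchange the two finite sums and pull $\log\!\rk{1+s\langle\,{\bm t}_{n\emptyvar}^\top,{\bm\theta}\,\rangle}$ out of the sum over count vectors; the coefficient that remains in front of it is precisely $U_n^{(\MM,N)}({\bm\theta})$ as defined in \eqref{eq:UnMN}, which is $\ge 0$ because it is a sum of products of the nonnegative numbers $H^{(\MM,N)}\ge 0$ and $x_n\ge 0$. This yields the asserted identity $\E\ek{\cU^{(\MM)}(s{\bm\theta},\cdot)} = u^{(\MM)}(s,{\bm\theta})$, while the remaining inequality $u^{(\MM)}(s,{\bm\theta})\ge 0$ is immediate from $\cU^{(\MM)}\ge 0$ in \eqref{eq:uptrade log}. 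There is no serious obstacle here beyond careful bookkeeping; the only place where the earlier preparation is genuinely needed is the replacement of the nonlinear condition $\TWR_1^\MM(s{\bm\theta},\omega)>1$ by the linear one $\sum_{i=1}^N x_i\langle\,{\bm t}_{i\emptyvar}^\top,{\bm\theta}\,\rangle>0$, which is exactly what forces the restriction to small $s\in(0,\varepsilon]$ through Lemma~\ref{lem:small_f} and \eqref{eq:TWR_1_equiv_smallf}.
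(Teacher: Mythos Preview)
Your proposal is correct and follows essentially the same route as the paper's proof: both start from \eqref{eq:EU}, group the $\omega$'s by their count vector $(x_1,\dots,x_N)$, use the multinomial count together with $\P(\{\omega\})=\prod_i p_i^{x_i}$ to obtain the factor $H^{(\MM,N)}$, replace the nonlinear condition $\TWR_1^\MM(s{\bm\theta},\omega)>1$ by the linear one via \eqref{eq:TWR_1_equiv_smallf}, and then swap the finite sums to read off $U_n^{(\MM,N)}({\bm\theta})$. Your write-up is slightly more explicit about why the regrouping is legitimate (namely that $\TWR_1^\MM(s{\bm\theta},\omega)$ depends on $\omega$ only through the counts), but the argument is the same.
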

\vspace*{0.1cm}
\begin{proof}
 $\E\ek{\cU^{(\MM)}(s{\bm\theta},\cdot)} \geq 0$\ is clear from \eqref{eq:uptrade log} even for all $s \geq 0$.
  The rest of the proof is along the lines of the proof of the univariate case Theorem~4.6 in \cite{maier:raftf2017}, but will be given
  for convenience. Starting with \eqref{eq:EU} and using \eqref{eq:X_i} and \eqref{eq:TWR_1_equiv_smallf} we get for $s \in (0,\varepsilon]$
  \begin{align*}
    \E\ek{\cU^{(\MM)}(s{\bm\theta},\cdot)}\,=\,
       \sum_{\substack{(x_1,\ldots,x_N) \in \N_0^N                           \\ \sum\limits_{i=1}^N x_i\,=\,\MM} }  \hspace{2mm}
       \sum_{\substack{\omega:X_1(\omega)\,=\,x_1,\ldots,X_N(\omega)\,=\,x_N \\ \sum\limits_{i=1}^N x_i \langle\,{\bm t}_{i\emptyvar}^\top,{\bm\theta}\,\rangle > 0}}
    \P(\{\omega\})\cdot \log\!\rk{\TWR_1^\MM(s{\bm\theta},\omega)}.
  \end{align*}
\noindent
  Since there are\ $\binom{\MM}{x_1\; x_2\cdots x_N}=\frac{\MM!}{x_1!x_2!\cdots x_N!}$\ many\ $\omega\in\Omega^{(\MM)}$\ for which\
  $X_1(\omega)=x_1,\ldots,X_N(\omega)=x_N$\ holds we furthermore get from \eqref{eq:TWR_1}
  \begin{align*}
    \E\ek{\cU^{(\MM)}(s{\bm\theta},\cdot)}
      &= \sum_{\substack{(x_1,\ldots,x_N) \in \N_0^N  \\
           \sum\limits_{i=1}^N x_i\,=\,\MM,\,\sum\limits_{i=1}^N x_i \langle\,{\bm t}_{i\emptyvar}^\top,{\bm\theta}\,\rangle > 0}} \hspace{-0.5cm}
          H^{(\MM,N)}\rk{x_1,\ldots,x_N}\,\sum_{n=1}^Nx_n \cdot \log\!\rk{1 + s \cdot \langle\,{\bm t}_{n\emptyvar}^\top,{\bm\theta}\,\rangle } \\ \\
      &=\;\sum_{n=1}^N U_n^{(\MM,N)}({\bm\theta}) \cdot \log\!\rk{1 + s \cdot \langle\,{\bm t}_{n\emptyvar}^\top,{\bm\theta}\,\rangle}
  \end{align*}
  as claimed.
\end{proof}
\noindent
 A similar result holds for $\E\ek{\cD^{(\MM)}(s{\bm\theta},\cdot)}$.
\begin{theorem}\label{theo:ED_smallf} 
  We assume that the conditions of Theorem~\ref{theo:EU_smallf} hold. Then:
  \begin{enumerate}
  \item For ${\bm\theta} \in \Se_1^{\KK-1}$ and $s \in (0,\varepsilon]$
        \begin{align} \label{eq:ED_smallf} 
           \E\ek{\cD^{(\MM)}(s{\bm\theta},\cdot)} = d^{(\MM)}(s,{\bm\theta}) :=
           \sum_{n=1}^N D_n^{(\MM,N)}({\bm\theta}) \cdot \log\!\rk{1 + s\,\langle\,{\bm t}_{n\emptyvar}^\top,{\bm\theta}\,\rangle} \leq 0
        \end{align}

        holds, where
        \begin{align}\label{eq:DnMN} 
          D_n^{(\MM,N)}({\bm\theta})
           := \sum_{\substack{(x_1,\ldots,x_N)\in\N_0^N \\ \sum\limits_{i=1}^N x_i\,=\,\MM,\,
              \sum\limits_{i=1}^N x_i \langle\,{\bm t}_{i\emptyvar}^\top,{\bm\theta}\,\rangle \leq 0} }
          H^{(\MM,N)}\rk{x_1,\ldots,x_N} \cdot x_n \geq 0\,.
       \end{align}

  \item For all $s > 0$ and\ ${\bm\theta} \in \Se_1^{\KK-1}$ with\ $s{\bm\theta} \in \overset{\circ}{\msupp}$
        \begin{align}\label{eq:EDM} 
           \E\ek{\cD^{(\MM)}(s{\bm\theta},\cdot)}\,\leq\,d^{(\MM)} (s,{\bm\theta})\,\leq\,0\,,
        \end{align}
        i.e. $d^{(\MM)} (s,{\bm\theta})$ is always an upper bound for the expectation of the down--trade \\ log series.
  \end{enumerate}
\end{theorem}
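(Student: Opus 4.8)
The plan is to treat (a) as a carbon copy of the proof of Theorem~\ref{theo:EU_smallf} — only with the up-trade set replaced by its complement — and then to deduce (b) from (a) by invoking the concavity of $s\mapsto h(s,{\bm\theta})$ recorded in Remark~\ref{rem:small f}. For part (a), I would start from the definition \eqref{eq:downtrade log}, which yields $\E\ek{\cD^{(\MM)}(s{\bm\theta},\cdot)}=\sum_{\omega:\,\TWR_1^\MM(s{\bm\theta},\omega)<1}\P(\{\omega\})\cdot\log\TWR_1^\MM(s{\bm\theta},\omega)$, the exact analogue of \eqref{eq:EU}. Grouping the $\omega\in\Omega^{(\MM)}$ by their count vectors $(x_1,\dots,x_N)$ as in \eqref{eq:X_i} and using \eqref{eq:TWR_1_equiv_smallf} (valid for all $s\in(0,\varepsilon]$ with the $\varepsilon$ supplied by Theorem~\ref{theo:EU_smallf}), the down-trade condition $\TWR_1^\MM(s{\bm\theta},\omega)<1$ becomes $\sum_{i=1}^N x_i\langle\,{\bm t}_{i\emptyvar}^\top,{\bm\theta}\,\rangle\le 0$, up to count vectors that contribute $0$ to the sum. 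Since there are $\binom{\MM}{x_1\,x_2\cdots x_N}$ many $\omega$ realising a given count vector, plugging in \eqref{eq:TWR_1} and \eqref{eq:preliminaries} makes the double sum collapse, exactly as for Theorem~\ref{theo:EU_smallf}, into $\sum_{n=1}^N D_n^{(\MM,N)}({\bm\theta})\cdot\log(1+s\langle\,{\bm t}_{n\emptyvar}^\top,{\bm\theta}\,\rangle)$, which is \eqref{eq:ED_smallf}; the bound $\le 0$ is immediate from \eqref{eq:downtrade log}, and $D_n^{(\MM,N)}({\bm\theta})\ge 0$ from $x_n\ge 0$ and $H^{(\MM,N)}\ge 0$.

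For part (b), fix $s>0$ with $s{\bm\theta}\in\overset{\circ}{\msupp}$, so that every $1+s\langle\,{\bm t}_{i\emptyvar}^\top,{\bm\theta}\,\rangle>0$ and all logarithms below make sense. For a count vector $(x_1,\dots,x_N)$ with $\sum_i x_i=\MM$ write, as in Lemma~\ref{lem:small_f}, $h(s,{\bm\theta})=\sum_{i=1}^N x_i\log(1+s\langle\,{\bm t}_{i\emptyvar}^\top,{\bm\theta}\,\rangle)$, so that $\log\TWR_1^\MM(s{\bm\theta},\omega)=h(s,{\bm\theta})$ whenever $\omega$ has counts $(x_1,\dots,x_N)$. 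Split the count vectors into $A_{\le}:=\{(x_1,\dots,x_N):\sum_i x_i\langle\,{\bm t}_{i\emptyvar}^\top,{\bm\theta}\,\rangle\le 0\}$ and its complement $A_{>}$. Expanding $D_n^{(\MM,N)}$ via \eqref{eq:DnMN} gives $d^{(\MM)}(s,{\bm\theta})=\sum_{(x_1,\dots,x_N)\in A_{\le}}H^{(\MM,N)}(x_1,\dots,x_N)\,h(s,{\bm\theta})$; by \eqref{eq:lemma3.5} each summand has $h(s,{\bm\theta})<0$ — this is exactly where concavity of $h$ in $s$ is used, $s$ now being arbitrary — so with $H^{(\MM,N)}\ge 0$ we get $d^{(\MM)}(s,{\bm\theta})\le 0$. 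For the inequality, I would decompose $\E\ek{\cD^{(\MM)}(s{\bm\theta},\cdot)}$ by count vectors: every $\omega$ with counts in $A_{\le}$ is a down-trade (again by \eqref{eq:lemma3.5}), and these blocks contribute precisely $d^{(\MM)}(s,{\bm\theta})$; the $\omega$'s with counts in $A_{>}$ are down-trades only when $h(s,{\bm\theta})<0$, which may happen for large $s$ because $h$ is concave rather than monotone, and such blocks contribute a nonpositive amount. Hence $\E\ek{\cD^{(\MM)}(s{\bm\theta},\cdot)}=d^{(\MM)}(s,{\bm\theta})+(\text{nonpositive remainder})\le d^{(\MM)}(s,{\bm\theta})\le 0$, which is \eqref{eq:EDM}.

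The routine part is (a); the only genuine difficulty — the main obstacle — is the global statement (b). For $s\in(0,\varepsilon]$ the index sets $A_{>}$ and $A_{\le}$ are exactly the up- and down-trade count vectors, but for large $s$ a count vector in $A_{>}$ can already have turned into a down-trade, and this is precisely why $d^{(\MM)}$ is no longer equal to $\E\ek{\cD^{(\MM)}}$ but only an upper bound. The property that makes the inequality point in the right direction is the concavity of $s\mapsto h(s,{\bm\theta})$ from Remark~\ref{rem:small f}: it guarantees that no count vector in $A_{\le}$ ever leaves the down-trade regime, so the $A_{\le}$-block always reproduces $d^{(\MM)}(s,{\bm\theta})$ in full, while any extra down-trades coming from $A_{>}$ can only push the expectation further below that value.
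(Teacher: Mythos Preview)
Your proposal is correct and follows essentially the same route as the paper: part~(a) is obtained exactly as in Theorem~\ref{theo:EU_smallf} with the up--trade equivalence replaced by its complement (the paper cites Lemma~\ref{lem:small_f}(b) directly rather than negating \eqref{eq:TWR_1_equiv_smallf}, but this is the same thing), and part~(b) is deduced from Remark~\ref{rem:small f} by observing that every count vector in $A_{\le}$ stays a down--trade for all $s>0$, so that $d^{(\MM)}$ is reproduced in full inside $\E\ek{\cD^{(\MM)}}$ while any further down--trades from $A_{>}$ only add nonpositive terms. Your explicit $A_{\le}/A_{>}$ decomposition makes the mechanism of the inequality in~(b) a bit more transparent than the paper's phrasing, but the argument is identical.
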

\begin{remark}\label{rem:theo3.8ad-b} 
  For large $s > 0$ either\ $\E\ek{\cD^{(\MM)}(s{\bm\theta},\cdot)}$\ or\ $d^{(\MM)}(s,{\bm\theta})$\ or both shall assume
  the value $-\infty$ in case that at least one of the logarithms in their definition is not defined. Then \eqref{eq:EDM}
  holds for all\ $s{\bm\theta} \in \R^{\KK}$.
\end{remark}
\begin{proof} of Theorem~\ref{theo:ED_smallf}: \\ [0.1cm]
  \textbf{ad(a)}\;$\E\ek{\cD^{(\MM)}(s{\bm\theta},\cdot)} \leq 0$\ follows from \eqref{eq:downtrade log} again for all $s \geq 0$.
      Furthermore, \\ by definition
      \begin{align}\label{eq:TWR 1} 
         \E\ek{\cD^{(\MM)}(s{\bm\theta},\cdot)} = \sum_{\omega:\TWR_1^\MM(s{\bm\theta},\omega) < 1} \P\big(\{\omega\}\big) \cdot
         \log\!\rk{\TWR_1^\MM(s{\bm\theta},\omega)}.
      \end{align}

  The arguments given in the proof of Theorem~\ref{theo:EU_smallf} apply similarly, where instead of \eqref{eq:TWR_1_equiv_smallf} we use
  Lemma~\ref{lem:small_f} (b) to get for\ $s \in (0,\varepsilon]$
  \begin{align}\label{eq:TWR_1_equiv_smallf_b} 
    \TWR_1^\MM(s{\bm\theta},\omega) < 1\;\Longleftrightarrow\;\sum_{i=1}^N x_i \langle\,{\bm t}_{i\emptyvar}^\top,{\bm\theta}\,\rangle\,\leq\,0
  \end{align}
  for all $\omega$ with \vspace*{-0.5cm}
  \begin{align}\label{eq:omega x1} 
     X_1(\omega) = x_1\,,\ X_2(\omega) = x_2,\ldots,\,X_N(\omega) = x_N\,.
  \end{align}
\vspace*{0.15cm}
  \textbf{ad(b)}\;
      According to the extension of Lemma~\ref{lem:small_f} in Remark~\ref{rem:small f}, we also get
      \begin{align}\label{eq:ad(b)} 
         \sum\limits^N_{i=1} x_i\,\langle\,{\bm t}_{i\emptyvar}^\top,{\bm\theta}\,\rangle\,\leq\,0 \quad\Longrightarrow\quad
         \TWR_1^\MM(s{\bm\theta},\omega) < 1 \quad\text{for all}\quad s > 0
      \end{align}
      for all $\omega$ with \eqref{eq:omega x1}.
  Therefore, no matter how large $s > 0$ is, the summands of\ $d^{(\MM)}(s,{\bm\theta})$\ in \eqref{eq:ED_smallf} will always contribute
  to\ $\E\ek{\cD^{(\MM)}(s{\bm\theta},\cdot)}$\ in \eqref{eq:TWR 1}, but --- at least for large $s > 0$ --- there may be even more (negative)
  summands from other $\omega$. Hence \eqref{eq:EDM} follows for all $s > 0$.

\end{proof}
\begin{remark}\label{rem:distribution theory} 
  Using multinomial distribution theory and \eqref{eq:preliminaries}
  \begin{align*}
     \sum_{\substack{(x_1,\ldots,x_N)\in\N_0^N \\ \sum\limits_{i=1}^Nx_i\,=\,\MM}} \hspace{2mm}
     H^{(\MM,N)}\rk{x_1,\ldots,x_N}\,x_n = p_n\cdot \MM \quad \text{for all $n = 1,\ldots,N$}
  \end{align*}
\noindent
  holds and yields (again) with Theorem~\ref{theo:EU_smallf} and \ref{theo:ED_smallf} for $s \in (0,\varepsilon]$
  \begin{align*}
    \E\ek{\cU^{(\MM)}(s{\bm\theta},\cdot)}+\E\ek{\cD^{(\MM)}(s{\bm\theta},\cdot)}
      = \sum_{n=1}^N p_n \cdot \MM \cdot \log\!\rk{1 + s\langle\,{\bm t}_{n\emptyvar}^\top,{\bm\theta}\,\rangle} = \MM \cdot \log\Gamma(s{\bm\theta}).
  \end{align*}
\end{remark}
\begin{remark}\label{rem:taylor expansion}  
  Using Taylor expansion in \eqref{eq:ED_smallf} we therefore obtain a first order approximation in $s$ of the expected down-trade log series
  $\cD^{(\MM)}(s{\bm\theta},\cdot)$ (\ref{eq:downtrade log}), i.e. for $s \in (0,\varepsilon]$ and ${\bm\theta} \in \Se_1^{\KK-1}$ the
  following holds:
  \begin{align}\label{eq:logseries smallf b} 
    \E\ek{\cD^{(\MM)}(s{\bm\theta},\cdot)}\,\approx\,\tilde{d}^{(\MM)}(s,{\bm\theta})\,:=\,s\cdot\sum\limits^N_{n=1}\,D^{(\MM,N)}_n({\bm\theta})\,\cdot\,
    \langle\,{\bm t}_{n\emptyvar}^\top,{\bm\theta}\,\rangle\,.
  \end{align}
\end{remark}
 In the sequel we call $d^{(\MM)}$ the first and $\widetilde{d}^{(\MM)}$ the second approximation of the expected down--trade log series.
 Noting that\ $\log(1+x) \leq x$\ for $x \in \R$ when we extend\ $\log_{\big|_{(-\infty,0]}}\!\!:= -\infty$,\ we can
 improve part \textbf{(b)} of Theorem~\ref{theo:ED_smallf}:
\begin{corollary}\label{coro:theorem3.8} 
  In the situation of Theorem~\ref{theo:ED_smallf} for all $s \ge 0$ and\ ${\bm\theta} \in \Se_1^{\KK-1}$ such that $s{\bm\theta} \in \overset{\circ}{\msupp}$,
  we get:\vspace*{-0.18cm}
  \begin{enumerate}
  \item \begin{align}\label{eq:coro(a)}\hspace{-2.0cm} 
           \E\ek{\cD^{(\MM)}(s{\bm\theta},\cdot)} \leq d^{(\MM)}(s,{\bm\theta}) \leq \widetilde{d}^{(\MM)}(s,{\bm\theta}).
        \end{align}
  \item Furthermore $\widetilde{d}^{(\MM)}$ is continuous in $s$ and ${\bm\theta}$ (in $s$ even positive homogeneous) and
        \begin{align}\label{eq:coro(b)} 
           \widetilde{d}^{(\MM)}(s,{\bm\theta})\,\leq\,0\,.
        \end{align}
  \end{enumerate}
\end{corollary}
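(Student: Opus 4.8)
The plan is to read both statements off the explicit formulas \eqref{eq:ED_smallf}, \eqref{eq:DnMN}, \eqref{eq:logseries smallf b} together with Theorem~\ref{theo:ED_smallf}; the only genuine work is a reorganisation of the double sum defining $\widetilde{d}^{(\MM)}$.

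\textbf{Part (a).} For the first inequality in \eqref{eq:coro(a)} I would simply invoke Theorem~\ref{theo:ED_smallf}(b) when $s>0$, and observe that at $s=0$ one has $\TWR_1^\MM(0,\omega)=1$, hence $\cD^{(\MM)}(0,\omega)=0$ and also $d^{(\MM)}(0,{\bm\theta})=\widetilde{d}^{(\MM)}(0,{\bm\theta})=0$, so the whole chain is trivial there. For the second inequality the key point is that $s{\bm\theta}\in\overset{\circ}{\msupp}$ forces $1+s\,\langle\,{\bm t}_{n\emptyvar}^\top,{\bm\theta}\,\rangle>0$ for every $n$ (an interior point satisfies all defining constraints strictly, since $\msuppR=\partial\msupp$), so every logarithm in \eqref{eq:ED_smallf} is a finite real number. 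Then the elementary estimate $\log(1+x)\le x$ applied summand by summand, together with $D_n^{(\MM,N)}({\bm\theta})\ge0$ from \eqref{eq:DnMN}, yields
\begin{align*}
  d^{(\MM)}(s,{\bm\theta}) &= \sum_{n=1}^N D_n^{(\MM,N)}({\bm\theta})\,\log\!\rk{1 + s\,\langle\,{\bm t}_{n\emptyvar}^\top,{\bm\theta}\,\rangle} \\
  &\le s\sum_{n=1}^N D_n^{(\MM,N)}({\bm\theta})\,\langle\,{\bm t}_{n\emptyvar}^\top,{\bm\theta}\,\rangle \;=\; \widetilde{d}^{(\MM)}(s,{\bm\theta}).
\end{align*}

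\textbf{Part (b).} Positive homogeneity in $s$ is immediate from \eqref{eq:logseries smallf b}, since that expression is even linear in $s$ (which also subsumes continuity in $s$). For continuity in ${\bm\theta}$ and for \eqref{eq:coro(b)} I would \emph{not} argue from the form $s\sum_n D_n^{(\MM,N)}({\bm\theta})\langle\,{\bm t}_{n\emptyvar}^\top,{\bm\theta}\,\rangle$: the coefficients $D_n^{(\MM,N)}$ are only piecewise constant in ${\bm\theta}$ (they jump when a tuple crosses a hyperplane $\{{\bm\theta}:\sum_i x_i\langle\,{\bm t}_{i\emptyvar}^\top,{\bm\theta}\,\rangle=0\}$), and $\langle\,{\bm t}_{n\emptyvar}^\top,{\bm\theta}\,\rangle$ has no fixed sign. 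Instead I would unfold the definition \eqref{eq:DnMN} and interchange the two summations, obtaining
\[
  \widetilde{d}^{(\MM)}(s,{\bm\theta}) \;=\; s\!\!\sum_{\substack{(x_1,\ldots,x_N)\in\N_0^N\\ \sum_{i=1}^N x_i = \MM}}\!\! H^{(\MM,N)}\rk{x_1,\ldots,x_N}\cdot\min\Big\{0,\ \sum_{n=1}^N x_n\langle\,{\bm t}_{n\emptyvar}^\top,{\bm\theta}\,\rangle\Big\}\,,
\]
the point being that the constraint $\sum_i x_i\langle\,{\bm t}_{i\emptyvar}^\top,{\bm\theta}\,\rangle\le0$ selecting the tuples that contribute to $D_n^{(\MM,N)}$ is exactly the condition under which $\sum_n x_n\langle\,{\bm t}_{n\emptyvar}^\top,{\bm\theta}\,\rangle$ coincides with its negative part (and the omitted tuples contribute $0$). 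In this form the index set no longer depends on ${\bm\theta}$; each summand is $s\,H^{(\MM,N)}(x)$ times $\min\{0,\cdot\}$ of a linear function of ${\bm\theta}$, hence continuous in $(s,{\bm\theta})$, and it is $\le0$ because $H^{(\MM,N)}(x)\ge0$, $s\ge0$ and the minimum is $\le0$. Summing the finitely many terms gives continuity of $\widetilde{d}^{(\MM)}$ and \eqref{eq:coro(b)}.

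The main obstacle is precisely this interchange-of-summation step: one must recognise that absorbing the ${\bm\theta}$-dependent index set of $D_n^{(\MM,N)}$ into a $\min\{0,\cdot\}$ simultaneously removes the apparent discontinuity — the jumps of the individual products $D_n^{(\MM,N)}({\bm\theta})\langle\,{\bm t}_{n\emptyvar}^\top,{\bm\theta}\,\rangle$ cancel once summed over $n$, because on the jump hyperplane the relevant inner sum vanishes — and at the same time displays the sign of $\widetilde{d}^{(\MM)}$. Everything else (quoting Theorem~\ref{theo:ED_smallf}(b), the pointwise estimate $\log(1+x)\le x$, and linearity in $s$) is routine.
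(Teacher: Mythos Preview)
Your proposal is correct and follows essentially the same route as the paper: for (a) the paper likewise quotes Theorem~\ref{theo:ED_smallf}(b) and the estimate $\log(1+x)\le x$, and for (b) it performs exactly your interchange of summation to rewrite $\widetilde{d}^{(\MM)}(s,{\bm\theta})$ as $s\cdot L^{(\MM,N)}({\bm\theta})$ with $L^{(\MM,N)}({\bm\theta})=\sum_{(x_1,\ldots,x_N)} H^{(\MM,N)}(x)\,\min\{\sum_n x_n\langle{\bm t}_{n\emptyvar}^\top,{\bm\theta}\rangle,0\}$, reading off continuity in ${\bm\theta}$ and non-positivity from this form. Your explicit remark that the individual products $D_n^{(\MM,N)}({\bm\theta})\langle{\bm t}_{n\emptyvar}^\top,{\bm\theta}\rangle$ can jump but the jumps cancel upon summation is a nice gloss the paper leaves implicit.
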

\begin{proof}
 \textbf{(a)}\;
  is already clear with the statement above. To show \textbf{(b)}, the continuity in $s$ of the second approximation \vspace*{-0.18cm}
  \begin{align*}
     \widetilde{d}^{(\MM)}(s,{\bm\theta})\,=\,s \cdot \sum\limits^N_{n=1}\,D^{(\MM,N)}_n({\bm\theta}) \cdot \langle\,{\bm t}_{n\emptyvar}^\top,{\bm\theta}\,\rangle\,,\; s > 0
     \end{align*}
  in \eqref{eq:logseries smallf b} is clear. But even continuity in ${\bm\theta}$ follows with a short argument:
  Using \eqref{eq:DnMN}
  \begin{align}\label{eq:using TWR_1 equivsmallf}  
     \notag \tilde{d}^{(\MM)}(s,{\bm\theta})\,
     & =\,s \cdot \sum\limits^N_{n=1}\;
      \sum_{\substack{\left(x_1,\ldots,x_N\right)\,\in\,\N_0^N  \\
            \sum\limits_{i=1}^N x_i = \MM,\ \sum\limits_{i=1}^N x_i \langle\,{\bm t}_{i\emptyvar}^\top,{\bm\theta}\,\rangle \leq 0}}        \hspace{-0.7cm}
      H^{(\MM,N)}\left(x_1,\ldots,x_N\right) \cdot x_n \cdot \langle\,{\bm t}_{n\emptyvar}^\top,{\bm\theta}\,\rangle                                      \\
     & =\,s \cdot \sum_{\substack{\left(x_1,\ldots,x_N\right)\,\in\,\N_0^N  \\
                  \sum\limits_{i=1}^N x_i = \MM,\ \sum\limits_{i=1}^N x_i \langle\,{\bm t}_{i\emptyvar}^\top,{\bm\theta}\,\rangle \leq 0}} \hspace{-0.7cm}
       H^{(\MM,N)}\left(x_1,\ldots,x_N\right) \cdot
                 {\underbrace{\sum\limits^N_{n=1}\,x_n\,\langle\,{\bm t}_{n\emptyvar}^\top,{\bm\theta}\,\rangle}_{\leq\,0}}                               \\
     &\notag
       =\,\,s \cdot \sum_{\substack{\left(x_1,\ldots,x_N\right)\,\in\,\N_0^N  \\
                    \sum\limits_{i=1}^N x_i = \MM}}, \hspace{-0.2cm} H^{(\MM,N)}\left(x_1,\ldots,x_N\right) \cdot \min\,
                    \left\{\sum\limits^N_{n=1}\,x_n\,\langle\,{\bm t}_{n\emptyvar}^\top,{\bm\theta}\,\rangle\,,0\right\}                              \\ 
       &\notag
       =:\,s \cdot L^{(\MM,N)}({\bm\theta})\,\leq\,0\,.
  \end{align}
  Since $\sum\limits^N_{n=1}\,x_n\,\langle\,{\bm t}_{n\emptyvar}^\top,{\bm\theta}\,\rangle$ is continuous in ${\bm\theta}\,,\ L^{(\MM,N)}({\bm\theta})$ is
  continuous, too, and clearly $\widetilde{d}^{(\MM)}$ is non--positive.
\end{proof}


      \section{Admissible convex risk measures}  \label{sec:4}    

 For the measurement of risk, various different approaches have been taken (see for instance \cite{foellmer:staf2002} for an introduction).
 For simplicity, we collect all for us important properties of risk measures in the following three definitions.
\begin{definition}\label{def1:admissConvex}  
  \textbf{(admissible convex risk measure)} \\ [0.1cm]
  Let $\cQ \subset \R^\KK$ be a convex set with $0 \in \cQ$. A function $\rho\!:\cQ \to \R^+_0$ is called an
  \textbf{admissible convex risk measure (ACRM)} if the following properties are satisfied:
  \begin{enumerate}
  \item $\rho(0)=0\,,\ \rho({\vek})\geq 0$\ for all\ ${\vek} \in \cQ$.
  \item $\rho$ is a convex and continuous function.
  \item For any\ ${\bm\theta} \in \Se_1^{\KK-1}$\ the function $\rho$ restricted to the set\
        $\big\{s{\bm\theta}\,:\,s > 0\big\}  \cap \cQ \subset \R^\KK$\ is strictly increasing in $s$,
        and hence in particular\ $\rho({\vek}) > 0$\ for all\ ${\vek} \in \cQ \setminus \{0\}$.
  \end{enumerate}
\end{definition}
\vspace*{0.1cm}
\begin{definition}\label{def2:admissStrict}  
  \textbf{(admissible strictly convex risk measure)} \\ [0.1cm]
  If in the situation of Definition~\ref{def1:admissConvex} the function\ $\rho\!:\,\cQ \to \R^+_0$ satisfies only
  \textit{(a)} and \textit{(b)} but is moreover strictly convex, then $\rho$ is called an
  \textbf{admissible strictly convex risk measure (ASCRM)}.
\end{definition}
\noindent
 Some of the here constructed risk measures are moreover positive homogeneous.
\begin{definition}\label{def:positivehomogeneous}  
  \textbf{(positive homogeneous)} \\ [0.1cm]
  The risk function $\rho\!:\,\R^{\KK} \to \R^+_0$ is positive homogeneous if
  \begin{align*}
     \rho(s{\vek})\,=\,s\rho({\vek})\;\;\text{for all}\;\; s > 0\;\;\text{and}\;{\vek} \in \R^{\KK}.
  \end{align*}
\end{definition}
\vspace*{0.1cm}
\begin{remark}\label{rem:AdmissStrict} 
  It is easy to see that an admissible strictly convex risk measure automatically satisfies \textit{(c)} in
  Definition~\ref{def1:admissConvex} and thus it is also an admissible convex risk measure.
  In fact, if\ $u > s > 0$\ then\ $s = \lambda u$\ for some\ $\lambda \in (0,1)$ and we obtain for ${\bm\theta} \in \Se_1^{\KK-1}$
  \begin{align*}
     \rho(s{\bm\theta}) = \rho(\lambda\,u{\bm\theta} + (1 - \lambda) \cdot 0 \cdot {\bm\theta}) \leq
     \lambda\,\rho(u{\bm\theta}) + (1 - \lambda) \rho(0 \cdot {\bm\theta}) = \lambda\,\rho(u{\bm\theta}) <
     \rho(u{\bm\theta})\,.
  \end{align*}
\end{remark}
\vspace*{0.1cm}
\begin{examples}\label{exa:ASCRM} 
  \begin{enumerate}
  \item The function $\rho_1$ with\ $\rho_1({\vek}) := {\vek}^\top \Lambda{\vek}\,,\ {\vek} \in \R^\KK$,\ for some symmetric
        positive definite matrix\ $\Lambda \in \R^{\KK \times \KK}$ is an admissible strictly
        convex risk measure (ASCRM).
  \item For a fixed vector\ $c=\left(c_1,\ldots,c_\KK\right) \in \R^\KK$, with $c_j > 0$ for $j=1,\ldots,\KK$, both,      \vspace*{-0.2cm}
       \begin{align*}
           \rho_2({\vek}) := \|{\vek}\|_{1,c}\,:=\,\sum\limits^\KK_{j=1}\,c_j|{\f}_j|\;\;\text{and}\;\;
           \rho_3({\vek}) := \|{\vek}\|_{\infty,c}\,:=\,\max\limits_{1 \leq j \leq \KK}\, \left\{c_j|{\f}_j|\right\}\,,\
       \end{align*}
        define admissible convex risk measures (ACRM).
  \end{enumerate}
\end{examples}

\noindent\vspace*{0.4cm}
 The structure of the ACRM implies nice properties about their level sets:                                                \vspace*{-0.3cm}
\begin{lemma} \label{lem:ACRM} 
  Let $\rho: \cQ \to \R^+_0$ be an admissible convex risk measure. Then the following holds:                              \vspace*{-0.2cm}
  \begin{enumerate}
   \item[\textit{(a)}]
         The set ${\cal M}(\alpha) := \gk{\vek \in \cQ: \rho(\vek) \leq \alpha}\,,\ \alpha \geq 0$, is convex
         and contains $0 \in \cQ$.
   \item[ ]\hspace{-1.0cm}\noindent
         Furthermore, if $\overline{{\cal M}(\alpha)}$ is bounded and  $\overline{{\cal M}(\alpha)} \subset \cQ$ we have: \vspace*{-0.18cm}
   \item[\textit{(b1)}]
         The boundary of ${\cal M}(\alpha)$ is characterized by
         $\,\,\partial {\cal M}(\alpha) = \gk{\vek \in \cQ: \rho(\vek) = \alpha} \not= \emptyset$.
   \item[\textit{(b2)}]
          $\partial {\cal M}(\alpha)$ is a codimension one manifold which varies  continuously in $\alpha$.
  \end{enumerate}
\end{lemma}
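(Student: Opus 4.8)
The plan is to prove the four assertions in order, exploiting convexity of $\rho$ and property (c) (strict monotonicity along rays) together with continuity.

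\textbf{Part (a).} Convexity of $\mathcal{M}(\alpha)$ is immediate: if $\rho(\vek[1]),\rho(\vek[2])\le\alpha$ then for $\lambda\in[0,1]$ convexity of $\rho$ (Definition~\ref{def1:admissConvex}(b)) gives $\rho(\lambda\vek[1]+(1-\lambda)\vek[2])\le\lambda\rho(\vek[1])+(1-\lambda)\rho(\vek[2])\le\alpha$, and the segment stays in $\cQ$ since $\cQ$ is convex. That $0\in\mathcal{M}(\alpha)$ follows from $\rho(0)=0\le\alpha$ by Definition~\ref{def1:admissConvex}(a). From here on assume $\overline{\mathcal{M}(\alpha)}$ bounded and $\overline{\mathcal{M}(\alpha)}\subset\cQ$; I would also note $\alpha>0$ is the interesting case since for $\alpha=0$ property (c) forces $\mathcal{M}(0)=\{0\}$, and handle that degenerate case separately (it is a $0$-dimensional manifold only if one adopts that convention, so one may wish to tacitly assume $\alpha>0$ in (b1)--(b2)).

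\textbf{Part (b1).} The inclusion $\{\rho=\alpha\}\subset\partial\mathcal{M}(\alpha)$ uses continuity: any point with $\rho(\vek)=\alpha$ is a limit of points just outside (move slightly along the ray through $\vek$ away from $0$: by (c), $\rho$ strictly increases, so those points have $\rho>\alpha$), hence $\vek\notin\operatorname{int}\mathcal{M}(\alpha)$, while $\vek\in\mathcal{M}(\alpha)$, so $\vek\in\partial\mathcal{M}(\alpha)$. Conversely, take $\vek\in\partial\mathcal{M}(\alpha)$; then $\rho(\vek)\le\alpha$ by closedness of the sublevel set (continuity of $\rho$ on $\cQ\supset\overline{\mathcal{M}(\alpha)}$). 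If $\rho(\vek)<\alpha$ then by continuity $\rho<\alpha$ on a neighbourhood of $\vek$ within $\cQ$, and since $\vek\in\overline{\mathcal{M}(\alpha)}\subset\cQ$ that neighbourhood lies in $\mathcal{M}(\alpha)$, contradicting $\vek\in\partial\mathcal{M}(\alpha)$. Hence $\rho(\vek)=\alpha$. Non-emptiness: along any ray $s\mapsto\rho(s{\bm\theta})$ we have $\rho(0)=0<\alpha$; by (c) this is strictly increasing, and because $\overline{\mathcal{M}(\alpha)}$ is bounded the ray must leave $\mathcal{M}(\alpha)$, so by the intermediate value theorem there is $s_0$ with $\rho(s_0{\bm\theta})=\alpha$.

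\textbf{Part (b2).} This is the step I expect to be the main obstacle, since $\rho$ is only assumed convex and continuous, not differentiable, so ``codimension one manifold'' cannot be read via the implicit function theorem directly. The natural device is the \emph{radial} (Minkowski gauge) parametrisation. For each ${\bm\theta}\in\Se_1^{\KK-1}$ define $s(\alpha,{\bm\theta})$ to be the unique $s>0$ with $\rho(s{\bm\theta})=\alpha$; uniqueness and existence come from (b1)'s argument plus strict monotonicity in (c). One shows $\partial\mathcal{M}(\alpha)=\{\,s(\alpha,{\bm\theta}){\bm\theta}:{\bm\theta}\in\Se_1^{\KK-1}\,\}$, so the map ${\bm\theta}\mapsto s(\alpha,{\bm\theta}){\bm\theta}$ is a bijection from the sphere onto the boundary; its continuity (hence homeomorphism, the sphere being compact) follows from continuity of $\rho$ together with strict monotonicity along rays (a standard gauge-function argument: if $s(\alpha,{\bm\theta}_k)\to s^*\ne s(\alpha,{\bm\theta})$ along ${\bm\theta}_k\to{\bm\theta}$, evaluate $\rho$ at the limit and contradict uniqueness). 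This exhibits $\partial\mathcal{M}(\alpha)$ as homeomorphic to $\Se_1^{\KK-1}$, a codimension-one (topological) manifold. Finally, continuous variation in $\alpha$: the same uniqueness/monotonicity argument shows $(\alpha,{\bm\theta})\mapsto s(\alpha,{\bm\theta})$ is jointly continuous on $(0,\alpha_{\max})\times\Se_1^{\KK-1}$ (again by a subsequence argument using continuity of $\rho$), so the boundaries $\partial\mathcal{M}(\alpha)$ are the images of a continuous family of embeddings of the fixed sphere, which is exactly the claimed continuous variation. The one point requiring care is that the hypotheses ``$\overline{\mathcal{M}(\alpha)}$ bounded, $\overline{\mathcal{M}(\alpha)}\subset\cQ$'' must hold on a neighbourhood of the given $\alpha$ for the joint-continuity statement to make sense; I would either assume this or restrict the continuity claim to the open set of such $\alpha$.
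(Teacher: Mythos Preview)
Your proposal is correct and follows the same underlying idea as the paper, namely that convexity of $\rho$ gives (a) immediately and that property~(c) of Definition~\ref{def1:admissConvex} (strict monotonicity along rays) drives (b1) and (b2). The paper's own proof is extremely terse: it simply asserts that the boundedness assumption ``immediately yields'' $\overset{\circ}{\cal M}(\alpha)=\{\rho<\alpha\}$ and $\partial{\cal M}(\alpha)=\{\rho=\alpha\}\neq\emptyset$, and then states without further argument that the latter is a codimension-one manifold varying continuously in $\alpha$ ``due to Definition~\ref{def1:admissConvex}(c)''.

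Your treatment is considerably more careful: you spell out both inclusions in (b1), supply the intermediate-value argument for non-emptiness, and --- most usefully --- make explicit in (b2) the radial (gauge) parametrisation ${\bm\theta}\mapsto s(\alpha,{\bm\theta}){\bm\theta}$ that renders $\partial{\cal M}(\alpha)$ homeomorphic to $\Se_1^{\KK-1}$ and shows joint continuity in $(\alpha,{\bm\theta})$. You also flag two genuine subtleties the paper passes over in silence: that $\rho$ is merely continuous (so ``manifold'' should be read topologically, not smoothly), and that the continuous-variation statement tacitly requires the boundedness/containment hypotheses to persist for nearby $\alpha$. These are legitimate caveats; the paper does not address them.
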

\begin{proof}
  ${\cal M}(\alpha)$ is a convex set, because $\rho$ is a convex function on the convex domain $\cQ$. \\
  Thus \textit{(a)} is already clear. \\ [0.1cm]
  \textit{ad\,(b):}\
    Assuming $\overline{{\cal M}(\alpha)} \subset \cQ$ is bounded immediately yields
    $\overset{\circ}{\cal M}(\alpha) = \gk{\vek \in \cQ: \rho(\vek) < \alpha}$ and
    $\partial {\cal M}(\alpha) = \gk{\vek \in \cQ: \rho(\vek) = \alpha} \not= \emptyset$, the latter being a codimension
    one manifold and continuously varying in $\alpha$ due to Definition~\ref{def1:admissConvex}\textit{(c)}.
\end{proof}

\vspace*{0.2cm}\noindent
 In order to define a nontrivial ACRM, we use the down--trade log series of \eqref{eq:downtrade log}.
\vspace*{0.1cm}
\begin{theorem}\label{theo1:ACRM} 
  For a trading game as in Setup~\ref{setup:Z} satisfying Assumption~\ref{no-risk-free-investment} the \\
  function\ $\rho_{\textnormal{down}}\!: \overset{\circ}{\msupp} \to \R^+_0$,
  \begin{align}\label{eq:varACRM} 
    \rho_{\textnormal{down}}({\vek})\,=\,\rho^{(\MM)}_{\textnormal{down}}({\vek})\,:=\,-\,\E\left(\cD^{(\MM)}({\vek},\cdot)\right) \geq 0\,,
  \end{align}
  stemming from the down--trade log series in \eqref{eq:downtrade log}, is an admissible convex risk measure (ACRM).
\end{theorem}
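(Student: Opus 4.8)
**The plan is to verify properties (a), (b), (c) of Definition~\ref{def1:admissConvex} for $\rho_{\textnormal{down}}$ directly from its representation as $-\E[\cD^{(\MM)}(\cdot,\cdot)]$.** Throughout we work on the convex set $\cQ = \overset{\circ}{\msupp}$, which contains $0$ by the remark after Definition~\ref{def:msupp}, so the domain requirement is met.

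For property (a), nonnegativity $\rho_{\textnormal{down}}(\vek)\ge 0$ is immediate from $\cD^{(\MM)}(\vek,\omega)\le 0$ (see \eqref{eq:downtrade log}), and at $\vek = 0$ we have $\TWR_1^\MM(0,\omega) = 1$ for every $\omega$, hence $\cD^{(\MM)}(0,\cdot)\equiv 0$ and $\rho_{\textnormal{down}}(0) = 0$. For property (b), I would write out $-\E[\cD^{(\MM)}(\vek,\cdot)] = -\sum_{\omega}\P(\{\omega\})\,\log\big(\min\{1,\TWR_1^\MM(\vek,\omega)\}\big)$, which is a finite nonnegative combination of the functions $\vek\mapsto -\log\big(\min\{1,\prod_{i=1}^N(1+\langle \vekt^\top,\vek\rangle)^{x_i}\}\big)$. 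Each such function equals $\max\{0, -\sum_{i=1}^N x_i\log(1+\langle \vekt^\top,\vek\rangle)\}$; since $\vek\mapsto\log(1+\langle \vekt^\top,\vek\rangle)$ is concave on $\overset{\circ}{\msupp}$, the inner sum $-\sum x_i\log(\cdot)$ is convex, and the pointwise maximum of this convex function with $0$ is again convex. A nonnegative sum of convex functions is convex, giving convexity of $\rho_{\textnormal{down}}$; continuity follows because each $\log(1+\langle \vekt^\top,\vek\rangle)$ is continuous on the open set $\overset{\circ}{\msupp}$ and the finite operations $\min$, $\max$, sum preserve continuity.

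The substantive step is property (c): strict monotonicity of $s\mapsto\rho_{\textnormal{down}}(s{\bm\theta})$ for fixed ${\bm\theta}\in\Se_1^{\KK-1}$ on the segment $\{s{\bm\theta}: s>0\}\cap\overset{\circ}{\msupp}$. Here I would invoke Assumption~\ref{no-risk-free-investment}: there is some $i_0 = i_0({\bm\theta})$ with $\langle {\bm t}_{i_0\emptyvar}^\top,{\bm\theta}\rangle < 0$. Consider the event $\omega = (i_0,i_0,\ldots,i_0)$ (all draws equal to $i_0$), which has probability $p_{i_0}^\MM > 0$. On this $\omega$, $\TWR_1^\MM(s{\bm\theta},\omega) = (1 + s\langle {\bm t}_{i_0\emptyvar}^\top,{\bm\theta}\rangle)^\MM < 1$ for all $s > 0$ with $s{\bm\theta}\in\overset{\circ}{\msupp}$, and $-\log$ of it equals $-\MM\log(1 + s\langle {\bm t}_{i_0\emptyvar}^\top,{\bm\theta}\rangle)$, which is strictly increasing in $s$ (the derivative $-\MM\langle {\bm t}_{i_0\emptyvar}^\top,{\bm\theta}\rangle/(1+s\langle {\bm t}_{i_0\emptyvar}^\top,{\bm\theta}\rangle)$ is strictly positive). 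Every other summand $-\P(\{\omega\})\log(\min\{1,\TWR_1^\MM(s{\bm\theta},\omega)\})$ is nondecreasing in $s$: indeed $\log\TWR_1^\MM(s{\bm\theta},\omega) = \sum_i x_i\log(1+s\langle {\bm t}_{i\emptyvar}^\top,{\bm\theta}\rangle)$ is concave in $s$ and vanishes at $s=0$, hence the map $s\mapsto\min\{0,\log\TWR_1^\MM(s{\bm\theta},\omega)\}$ is nonincreasing, so its negative is nondecreasing. Summing, $\rho_{\textnormal{down}}(s{\bm\theta})$ is strictly increasing in $s$, and evaluating at the endpoint $s=0$ (where it is $0$) gives $\rho_{\textnormal{down}}(\vek) > 0$ for $\vek\in\overset{\circ}{\msupp}\setminus\{0\}$.

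**The main obstacle** is making the monotonicity argument airtight: one must be careful that the segment $\{s{\bm\theta}: s>0\}\cap\overset{\circ}{\msupp}$ is a (possibly bounded) interval $(0, s_{\max})$ and that the distinguished event $\omega=(i_0,\ldots,i_0)$ indeed keeps $\TWR_1^\MM$ strictly below $1$ throughout — this uses only $\langle {\bm t}_{i_0\emptyvar}^\top,{\bm\theta}\rangle < 0$, which holds independently of $s$, so no issue arises. A minor point worth stating explicitly is that monotonicity of the individual summands relies on concavity of $h(s) = \sum_i x_i\log(1+s\langle {\bm t}_{i\emptyvar}^\top,{\bm\theta}\rangle)$ together with $h(0)=0$, exactly the mechanism already used in Remark~\ref{rem:small f}; I would cite that rather than redo the computation.
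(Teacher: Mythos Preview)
Your proof is correct and follows essentially the same path as the paper's. For (a) and (b) the arguments coincide: the paper also observes that $\cD^{(\MM)}(\vek,\omega)=\min\{0,\log\TWR_1^\MM(\vek,\omega)\}$ is concave in $\vek$ (as the minimum of two concave functions), hence $\rho_{\textnormal{down}}$ is convex, with continuity immediate. For (c) the paper uses the very same distinguished event $\bar\omega=(i_0,\ldots,i_0)$ obtained from Assumption~\ref{no-risk-free-investment}, but packages the conclusion slightly differently: it notes that this summand is \emph{strictly concave} in $s$, so $\rho_{\textnormal{down}}$ is strictly convex along each ray $\{s{\bm\theta}_0:s>0\}$, and then the mechanism of Remark~\ref{rem:AdmissStrict} yields strict monotonicity. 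Your route --- one strictly increasing summand plus all others nondecreasing (via concavity of $h$ with $h(0)=0$, exactly the observation in Remark~\ref{rem:small f}) --- reaches the same end directly, without the intermediate strict-convexity step. The two arguments are equivalent in substance.
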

\begin{proof}
\; We show that $\rho_{\textnormal{down}}$ has the three properties (a), (b), and (c) from
 Definition \ref{def1:admissConvex}.
\\ [0.2cm]
  \textit{ad\,(a):}\
    $\cQ = \overset{\circ}{\msupp}$\ is a convex set with $0 \in \overset{\circ}{\msupp}$ according to Lemma~\ref{lem:convexity}. 
    Since for all $\omega \in \Omega^{(\MM)}$ and ${\vek} \in \overset{\circ}{\msupp}$
    \begin{align*}\hspace{0.6cm}
       \cD^{(\MM)}({\vek},\omega)\,=\, \log\Big(\min\left\{1,\,\TWR^\MM_1({\vek},\omega)\right\}\Big)\,=\,
                                                \min\left\{0,\,\log \TWR^\MM_1({\vek},\omega)\right\} \leq 0
    \end{align*}
    and\ $\TWR^\MM_1(0,\omega)=1$\ we obtain Definition~\ref{def1:admissConvex}\text{(a)}.
\\ [0.2cm]
  \textit{ad\,(b)}\
    For each fixed\ $\omega = \left(\omega_1,\ldots,\omega_\MM\right) \in \Omega^{(\MM)}$\ the function $ {\vek} \mapsto \TWR^\MM_1({\vek},\omega)$
    is continuous in ${\vek}$, and therefore the same holds true for $\rho_{\textnormal{down}}$. Moreover, again for $\omega \in \Omega^{(\MM)}$
    fixed,\ ${\vek} \mapsto \log\,\TWR^\MM_1({\vek},\omega) = \sum^\MM_{j=1}\,\log\!\rk{1\,+\,\langle\,{\bm t}_{\omega_j\emptyvar}^\top\,,{\vek}\rangle}$\
    is a concave function of ${\vek}$ since all summands are as composition of the concave $\log$--function with an affine function also concave.
    Thus $\cD^{(\MM)}({\vek},\omega)$ is concave as well since the minimum of two concave functions is still concave and therefore
    $\rho_{\textnormal{down}}$ is convex.
\\ [0.2cm]
  \textit{ad\,(c)}\
    It is sufficient to show that \vspace*{-0.15cm}
    \begin{align}\label{eq:thetaSC} 
       &\notag \rho_{\textnormal{down}}\ \text{from \eqref{eq:varACRM} is strictly convex along the line}\;
                              \big\{s{\bm\theta}_0:\;s > 0\big\}\,\cap\,\overset{\circ}{\msupp} \subset \R^\KK \\
       & \text{for any fixed}\ {\bm\theta}_0 \in \Se_1^{\KK-1}\,.
    \end{align}
    Therefore, let ${\bm\theta}_0 \in \Se_1^{\KK-1}$ be fixed. In order to show \eqref{eq:thetaSC} we need to find at least one
    $\overline{\omega} \in \Omega^{(\MM)}$ such that $\cD^{(\MM)}\!\left(s{\bm\theta}_0,\,\overline{\omega}\right)$ is
    strictly concave in $s > 0$.
    Using Assumption~\ref{no-risk-free-investment} we obtain some $i_0=i_0\left({\bm\theta}_0\right)$ such that\
    $\langle\,{\bm t}_{i_0\emptyvar}^\top,\,{\bm\theta}_0\,\rangle < 0$.
    Hence, for\ ${\vek}_s=s\cdot{\bm\theta}_0 \in \overset{\circ}{\msupp}$\ and\; $\overline{\omega}=\left(i_0,i_0,\ldots,i_0\right)$ we obtain
    \begin{align*}
       \cD^{(\MM)}(s{\bm\theta}_0,\overline\omega)\,=\,\MM \cdot\log \Big(1 + s\,\underbrace{\langle\,{\bm t}_{i_0\emptyvar}^\top,{\bm\theta}_0\,\rangle}_{<\,0}\,\Big) < 0
    \end{align*}
\noindent\vspace*{-0.2cm}
    which is a strictly concave function in $s > 0$.
\end{proof}
%

  \begin{figure}[htb]
    \centering
    \begin{minipage}[c]{0.95\linewidth}
        \centering
\includegraphics[width=0.9\textwidth]{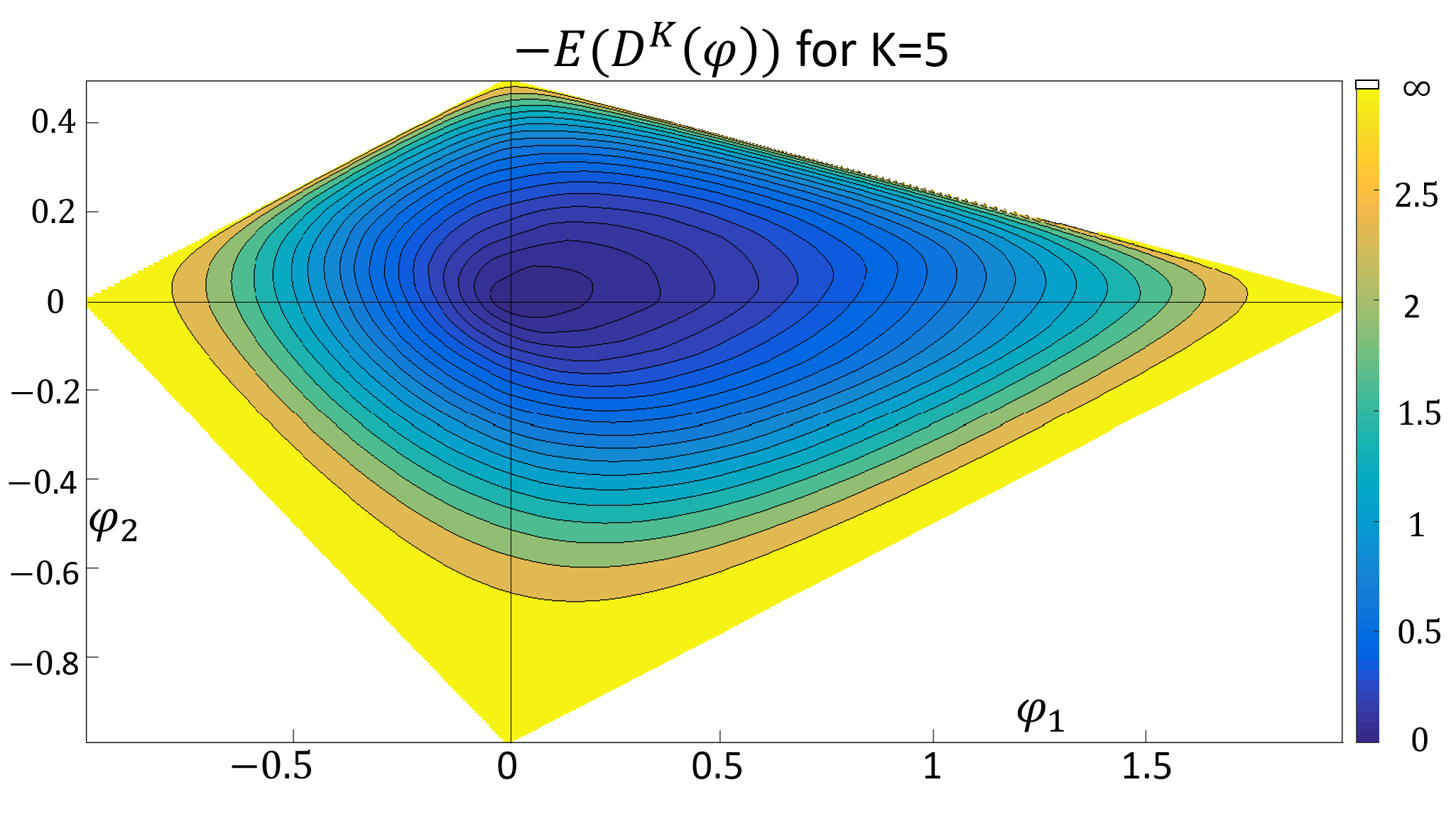}
    \end{minipage}
    \caption{Contour levels for $\rho_{\textnormal{down}}^{(K)}$ from  \eqref{eq:varACRM}
             with $\MM=5$ for $T$ from Example~\ref{exa:risk measures} }
    \label{LogDownTrades}
  \end{figure}

%
\begin{example}\label{exa:risk measures} 

  In order to illustrate $\rho_{\textnormal{down}}$ of \eqref{eq:varACRM} and the other risk measures to follow, we introduce a
  simple trading game with $M = 2$. Set
  \begin{align}\label{eq:ContourLevels} 
     T = \left(\begin{array}{r@{\hspace{0.4cm}}r} 1 & 1 \\ [0.05cm] -\,\frac{1}{2} & 1 \\ [0.05cm] 1 & -2 \\ [0.05cm] -\,\frac{1}{2} & -2 \end{array}\right) 
     \in \R^{4\times2} \quad\text{with}\quad p_1 = p_2\,=\,0.375\,,\quad p_3 = p_4 = 0.125 
  \end{align}
  It is easy to see that bets in the first system (win $1$ with probability $0.5$ or lose $-\,\frac{1}{2}$) and bets in the second system
  (win $1$ with probability $0.75$ or lose $-2$) are stochastically independent and have the same expectation value $\frac{1}{4}$.
  The contour levels of $\rho_{\textnormal{down}}$ for $K = 5$ are shown in Figure~\ref{LogDownTrades}.
\end{example}
\begin{remark}\label{rem:convex function} 
  The function $\rho_{\textnormal{down}}$ in \eqref{eq:varACRM} may or may not be an admissible strictly convex risk measure.
  To show that we give two examples: \\ [0.1cm]
  \textit{(a)}\
    For \vspace*{-0.5cm}
    \begin{align*}
       T = \begin{pmatrix} 1 & 2 \\ 2 & 1 \\ -1 & -1 \end{pmatrix} \in \R^{3\times2} \quad (N=3\,,\,\KK=2)
    \end{align*}
    the risk measure $\rho_{\textnormal{down}}$ in \eqref{eq:varACRM} for $\MM=1$ is not strictly convex. Consider for example
    $\vek_0=\alpha \cdot \binom{1}{1} \in \overset{\circ}{\msupp}$ for some fixed $\alpha >0$.
    Then for $\vek \in B_{\varepsilon}\rk{\vek_0}\,,\ \varepsilon > 0$ small, in the trading game only the third row results
    in a loss, i.e.
    \begin{align*}
       \E\rk{\cD^{(\MM=1)}(\vek,\emptyvar)} = p_3\,\log\!\rk{1 + \langle\,{\bm t}_{3\emptyvar}^\top,{\vek}\,\rangle}
    \end{align*}
    which is constant along the line $\vek_s = \vek_0 + s \cdot \binom{1}{-1} \in B_{\varepsilon}\rk{\vek_0}$
    for small $s$ and thus {\bfseries not} strictly convex.

\vspace*{0.2cm}
   \textit{(b)}\
     We refrain from giving a complete characterization for trade return matrices $T$ for which \eqref{eq:varACRM}
     results in a strictly convex function, but only note that if besides Assumption ~\ref{no-risk-free-investment} the condition                         \vspace*{-0.2cm}
     \begin{align}\label{eq:span} 
       \text{span}\gk{{\bm t}_{i \emptyvar}^\top: \langle\,{\bm t}_{i \emptyvar}^\top,{\bm\theta}\,\rangle \not =0} = \R^\KK  \quad\text{holds}
       \quad\forall\;{\bm\theta} \in \Se_1^{\KK-1}
     \end{align}
     then this is sufficient  to give strict convexity of (\ref{eq:varACRM}) and hence in this case $\rho_{\textnormal{down}}$ in \eqref{eq:varACRM}
     is actually an ASCRM.
\end{remark}
\noindent
 Now that we saw that the negative expected down--trade log series of \eqref{eq:varACRM} is an admissible convex risk measure,
 it is natural to ask whether or not the same is true for the two approximations of the expected down--trade log series
 given in \eqref{eq:ED_smallf} and \eqref{eq:logseries smallf b} as well. Starting with                                                                   \vspace*{-0.3cm}
 \begin{align*}
    d^{(\MM)}(s,{\bm\theta})\,=\,\sum\limits^N_{n=1}\,D^{(\MM,N)}_n({\bm\theta})\,\log\!\left(1 + s\,\langle\,{\bm t}_{n\emptyvar}^\top,{\bm\theta}\,\rangle\right)
 \end{align*}
 from \eqref{eq:ED_smallf}, the answer is negative. 
 The reason is simply that $D_n^{(\MM,N)}({\bm\theta})$ from \eqref{eq:DnMN} is in general not continuous for such ${\bm\theta} \in \Se_1^{\KK-1}$ for which
 $\left(x_1,\ldots,x_N\right) \in \N_0^N$ with $\sum\limits^N_{i=1}\,x_i = \MM$ exist and which satisfy\ $\sum\limits^N_{i=1}\,x_i\,\langle\,{\bm t}_{i\emptyvar}^\top,{\bm\theta}\,\rangle = 0$, but unlike in \eqref{eq:using TWR_1 equivsmallf} for $\widetilde{d}^{(\MM)}$, the sum over the log terms
 may not vanish. 
 Therefore $d^{(\MM)}(s,{\bm\theta})$ is
 in general also not continuous. A more thorough discussion of this discontinuity can be found after Theorem~\ref{theo2:ACRM}.
 On the other hand, $\widetilde{d}^{(\MM)}$ of \eqref{eq:logseries smallf b} was proved to be continuous and non--positive
 in Corollary~\ref{coro:theorem3.8}. In fact, we can obtain:

\vspace*{0.2cm}
\begin{theorem}\label{theo2:ACRM} 
  For the trading game of Setup~\ref{setup:Z} satisfying Assumption~\ref{no-risk-free-investment} the function\ $\rho_{\textnormal{down}X}\!:\,\R^\KK \to \R^+_0$,
  \vspace*{0.1cm}
  \begin{align}\label{eq:downX} 
     \rho_{\textnormal{down}X}({\vek}) = \rho^{(\MM)}_{\textnormal{down}X}(s{\bm\theta}) := -\,\tilde{d}^{(\MM)}(s,{\bm\theta}) = - s \cdot
     L^{(\MM,N)}({\bm\theta}) \geq 0\,,\; s \geq 0\; \text{and}\; {\bm\theta} \in \Se_1^{\KK-1}
  \end{align}
\noindent
  with $L^{(\MM,N)}({\bm\theta})$ from \eqref{eq:using TWR_1 equivsmallf} is an admissible convex risk measure (ACRM) according to
  Definition~\ref{def1:admissConvex} and furthermore positive homogeneous.
\end{theorem}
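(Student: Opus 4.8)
The plan is to verify the three defining properties (a), (b), (c) of Definition~\ref{def1:admissConvex} for $\rho_{\textnormal{down}X}$ on $\cQ = \R^\KK$, and then separately establish positive homogeneity. The key structural fact to exploit is the identity from \eqref{eq:using TWR_1 equivsmallf}, namely
\begin{align*}
  \rho_{\textnormal{down}X}(s{\bm\theta}) \,=\, -s\cdot L^{(\MM,N)}({\bm\theta})
  \,=\, -s\cdot\!\!\sum_{\substack{(x_1,\ldots,x_N)\in\N_0^N\\ \sum_i x_i=\MM}}\!\! H^{(\MM,N)}(x_1,\ldots,x_N)\cdot\min\Big\{\textstyle\sum_{n=1}^N x_n\langle{\bm t}_{n\emptyvar}^\top,{\bm\theta}\rangle,\,0\Big\},
\end{align*}
which expresses $\rho_{\textnormal{down}X}$ as a nonnegatively weighted sum of terms of the form $-\min\{\langle a,\vek\rangle,0\} = \max\{-\langle a,\vek\rangle,0\}$ with $a = \sum_n x_n{\bm t}_{n\emptyvar}^\top \in \R^\KK$ (after absorbing $s$ into $\vek$, using $\vek = s{\bm\theta}$). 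Indeed, $\sum_n x_n\langle {\bm t}_{n\emptyvar}^\top,{\bm\theta}\rangle = \langle \sum_n x_n{\bm t}_{n\emptyvar}^\top, {\bm\theta}\rangle$, so $s\cdot(\text{that min}) = \min\{\langle \sum_n x_n{\bm t}_{n\emptyvar}^\top, s{\bm\theta}\rangle, 0\}$ for $s>0$, and the $s=0$ case is trivially $0$. Hence $\rho_{\textnormal{down}X}(\vek) = \sum_{(x_i)} H^{(\MM,N)}(x_1,\ldots,x_N)\cdot\max\{-\langle a_{(x_i)},\vek\rangle,0\}$ on all of $\R^\KK$, where the weights $H^{(\MM,N)}\ge 0$.

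Properties (a) and (b) are then immediate: each summand $\vek\mapsto\max\{-\langle a,\vek\rangle,0\}$ is nonnegative, convex, continuous, and vanishes at $\vek=0$; a nonnegative linear combination of such functions inherits all of these, giving $\rho_{\textnormal{down}X}\ge 0$, $\rho_{\textnormal{down}X}(0)=0$, convexity, and continuity. Positive homogeneity is equally direct from the representation $\rho_{\textnormal{down}X}(s{\bm\theta}) = -s\,L^{(\MM,N)}({\bm\theta})$: for $\lambda>0$, writing $\lambda\vek = (\lambda s){\bm\theta}$ gives $\rho_{\textnormal{down}X}(\lambda\vek) = -(\lambda s)L^{(\MM,N)}({\bm\theta}) = \lambda\rho_{\textnormal{down}X}(\vek)$; equivalently, each $\max\{-\langle a,\cdot\rangle,0\}$ is positively homogeneous of degree one.

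The one property requiring actual work is (c): strict monotonicity of $s\mapsto\rho_{\textnormal{down}X}(s{\bm\theta})$ along each ray. Since $\rho_{\textnormal{down}X}(s{\bm\theta}) = -s\,L^{(\MM,N)}({\bm\theta})$ is \emph{linear} in $s$ (by positive homogeneity), it is strictly increasing in $s>0$ if and only if $L^{(\MM,N)}({\bm\theta}) < 0$, i.e. at least one summand in the defining sum is strictly negative. This is where Assumption~\ref{no-risk-free-investment} enters. The plan is: given ${\bm\theta}\in\Se_1^{\KK-1}$, invoke \eqref{eq:asmdiscm} to produce $i_0 = i_0({\bm\theta})$ with $\langle{\bm t}_{i_0\emptyvar}^\top,{\bm\theta}\rangle < 0$, then consider the multi-index $(x_1,\ldots,x_N)$ with $x_{i_0}=\MM$ and all other $x_i = 0$ (legitimate since $\MM\ge 1$ and $\sum_i x_i = \MM$). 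For this multi-index, $\sum_n x_n\langle{\bm t}_{n\emptyvar}^\top,{\bm\theta}\rangle = \MM\langle{\bm t}_{i_0\emptyvar}^\top,{\bm\theta}\rangle < 0$, so $\min\{\cdots,0\}$ equals this strictly negative value, and its weight $H^{(\MM,N)}(0,\ldots,\MM,\ldots,0) = p_{i_0}^{\MM} > 0$ since all $p_i > 0$ by Setup~\ref{setup:Z}. As every other summand contributes $\le 0$, we conclude $L^{(\MM,N)}({\bm\theta}) \le -\MM\, p_{i_0}^{\MM}\,|\langle{\bm t}_{i_0\emptyvar}^\top,{\bm\theta}\rangle| < 0$, which gives (c). The main (and only mild) obstacle is simply being careful that the rewriting of the nested double sum as a sum of $\max$-of-linear functions over $\R^\KK$ is valid — in particular that $L^{(\MM,N)}$ is exactly the continuous object identified in Corollary~\ref{coro:theorem3.8} so that no discontinuity issue (unlike for $d^{(\MM)}$) arises; everything else is routine.
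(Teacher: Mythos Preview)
Your proof is correct. For property~(c) and positive homogeneity you argue exactly as the paper does: pick the multi-index concentrated at $i_0$ to force $L^{(\MM,N)}({\bm\theta})<0$, and read off homogeneity from $\rho_{\textnormal{down}X}(s{\bm\theta})=-s\,L^{(\MM,N)}({\bm\theta})$.

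The genuine difference is in how you establish convexity. The paper does \emph{not} use the explicit representation as a sum of $\max\{-\langle a,\cdot\rangle,0\}$ terms. Instead it argues indirectly: since $d^{(\MM)}(s,{\bm\theta})=\E\big[\cD^{(\MM)}(s{\bm\theta},\cdot)\big]$ for $s\in[0,\varepsilon]$ and the right-hand side is concave (Theorem~\ref{theo1:ACRM}), the rescalings $d^{(\MM)}_\alpha(s,{\bm\theta}):=\tfrac{\alpha}{\varepsilon}\,d^{(\MM)}\!\big(\tfrac{s\varepsilon}{\alpha},{\bm\theta}\big)$ are concave on $[0,\alpha]$, and these converge uniformly to $s\,L^{(\MM,N)}({\bm\theta})$ as $\alpha\to\infty$; concavity survives the limit. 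Your route is more elementary and self-contained: once one has the identity \eqref{eq:using TWR_1 equivsmallf}, convexity, continuity, nonnegativity and homogeneity all drop out in one line from the fact that each summand is a positive multiple of a hinge function $\vek\mapsto\max\{-\langle a,\vek\rangle,0\}$. The paper's approach has the virtue of tying $\tilde{d}^{(\MM)}$ back to the genuine risk measure $\rho_{\textnormal{down}}$ (so convexity is inherited rather than re-proved), but yours is cleaner and avoids the uniform-convergence step entirely.
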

\begin{proof}   Clearly $\rho_{\textnormal{down}X}$ is positive homogeneous, since
  $\rho_{\textnormal{down}X}\!\left(s{\bm\theta}\right) = s\cdot\rho_{\textnormal{down}X}\!\left({\bm\theta}\right)$ for all $s \geq 0$.
  So we only need to check the (ACRM) properties. 

\vspace*{0.3cm}\noindent
  \textit{ad\,(a)} \text{\&} \textit{ad\,(b)}:\
  The only thing left to argue is the convexity of $\rho_{\textnormal{down}X}$ or the concavity of
  $\widetilde{d}^{(\MM)}\!\big(s,{\bm\theta}\big) = s \cdot L^{(\MM,N)}({\bm\theta}) \leq 0$. To see that, according to Theorem~\ref{theo:ED_smallf}
  \begin{align*}
     d^{(\MM)}(s,{\bm\theta}) = \E\ek{\cD^{(\MM)}(s{\bm\theta},\cdot)}\,, \quad\text{for}\quad {\bm\theta} \in \Se_1^{\KK-1}
     \quad \text{and}\quad s \in [0,\varepsilon]\,,
  \end{align*}
  is concave because the right hand side is concave (see Theorem~\ref{theo1:ACRM}). Hence
  \begin{align*}
     d^{(\MM)}_{\alpha}(s,{\bm\theta}) := \frac{\alpha}{\varepsilon}\,d^{(\MM)}\left(\frac{s\varepsilon}{\alpha}\,,\,{\bm\theta}\right)\,,
     \quad\text{for}\; {\bm\theta} \in \Se_1^{\KK-1}\; \text{and}\; s \in [0,\alpha]
  \end{align*}
  is also concave. Note that right from the definition of $d^{(\MM)}(s,{\bm\theta})$ in \eqref{eq:ED_smallf} and of $L^{(\MM,N)}({\bm\theta})$
  in \eqref{eq:using TWR_1 equivsmallf} it can readily be seen that for ${\bm\theta} \in \Se_1^{\KK-1}$ fixed
  \begin{align*}
     \frac{d^{(\MM)}(s,{\bm\theta})}{s} = \frac{d^{(\MM)}(s,{\bm\theta}) - d^{(\MM)}(0,{\bm\theta})}{s}\;\longrightarrow\,L^{(\MM,N)}({\bm\theta})
     \quad\text{for}\quad s \searrow 0\,.
  \end{align*} 

\noindent
  Therefore, some further calculation yields uniform convergence
  \begin{align*}
     d^{(\MM)}_{\alpha}(s,{\bm\theta})\,\longrightarrow\,s \cdot L^{(\MM,N)}({\bm\theta}) \quad\text{for}\quad \alpha \to \infty
  \end{align*}
  on the unit ball $B_1(0) := \big\{(s,{\bm\theta}): s \in [0,1]\,,\ {\bm\theta} \in \Se_1^{\KK-1}\big\}$. Now assuming $\widetilde{d}^{(\MM)}$
  being not concave somewhere, would immediately contradict the concavity of $d^{(\MM)}_{\alpha}$. 
\\ [0.2cm]
  \textit{ad\,(c)}:\
    In order to show that for any ${\bm\theta} \in \Se_1^{\KK-1}$ the function $s \mapsto \rho_{\textnormal{down}X}(s{\bm\theta}) = - s\,L^{(\MM,N)}({\bm\theta})$
    is strictly increasing in $s$, it suffices to show $L^{(\MM,N)}({\bm\theta}) < 0$. Since $L^{(\MM,N)}({\bm\theta}) \leq 0$ is
    already clear, we only have to find one negative summand in \eqref{eq:using TWR_1 equivsmallf}.
    According to Assumption~\ref{no-risk-free-investment} for all ${\bm\theta} \in \Se_1^{\KK-1}$ there is some $i_0 \leq N$ such that
    $\langle\,{\bm t}_{i_0\emptyvar}^\top,{\bm\theta}\,\rangle < 0$. Now let
    \begin{align*}\begin{array}{c}
       \left(x_1,\ldots,x_N\right)\,:=\,\left(0,\ldots,0,\,\MM,\,0,\ldots,0\right) \\  \hspace*{3.0cm}
                                                             \uparrow              \\  \hspace*{3.6cm}
                                                             \text{$i_0$--th place}
    \end{array}
    \end{align*}
    then\ $\sum\limits^N_{i=1}\,x_i\,\langle\,{\vekt^\top},{\bm\theta}\,\rangle\,=\,\MM\,\langle\,{\bm t}_{i_0\emptyvar}^\top,{\bm\theta}\,\rangle < 0$\
    giving\ $L^{(\MM,N)}({\bm\theta}) < 0$\ as claimed.
\end{proof}

\vspace*{0.3cm}\noindent
 We illustrate the contour of $\rho_{\textnormal{down}X}$ for Example~\ref{exa:risk measures} in Figure~\ref{SecondExpected-LogDownTrades}.
 As expected, the approximation of $\rho_{down}$ is best near $\vek = 0$ (cf. Figure~\ref{LogDownTrades}). 
%
  \begin{figure}[htb]
    \centering
    \begin{minipage}[c]{0.95\linewidth}
        \centering
\includegraphics[width=0.9\textwidth]{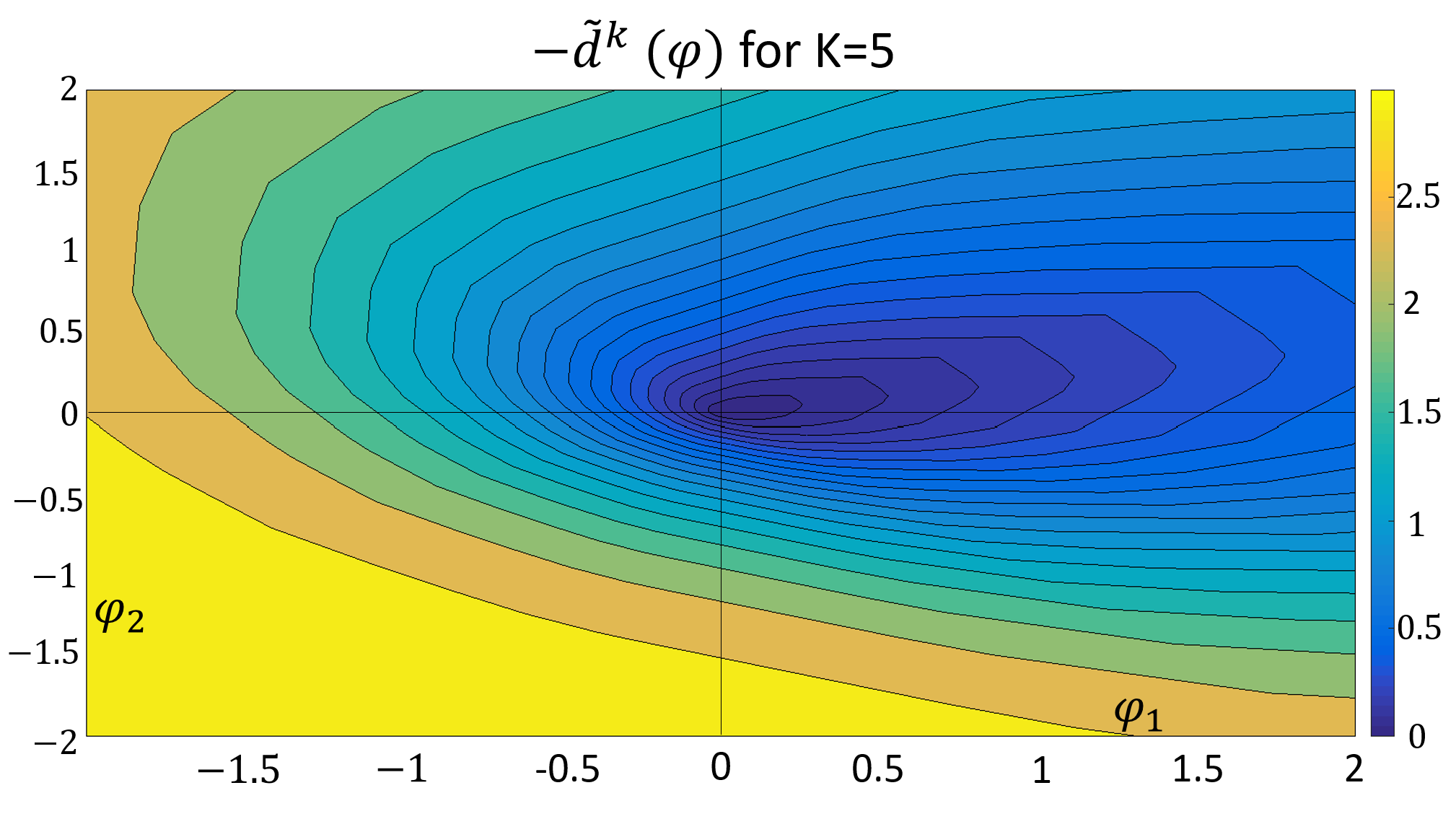}
    \end{minipage}
    \caption{Contour levels for $\rho_{downX}^{(K)}$ with $\MM=5$ for $T$ from
             Example~\ref{exa:risk measures} }
    \label{SecondExpected-LogDownTrades}
  \end{figure}


\noindent
 In conclusion, Theorems~\ref{theo1:ACRM} and \ref{theo2:ACRM} yield two ACRM stemming from expected down--trade log series $\cD^{(\MM)}$
 of \eqref{eq:downtrade log}  and its second approximation $\widetilde{d}^{(\MM)}$ from \eqref{eq:logseries smallf b}. However, the first approximation
 $d^{(\MM)}$ from \eqref{eq:ED_smallf} was not an ACRM since the coefficients\ $D_n^{(\MM,N)}$\ in \eqref{eq:DnMN} are not continuous. At first glance,
 however, this is puzzling:\ since\ $\E\left(\cD^{(\MM)}(s{\bm\theta},\cdot)\right)$ is clearly continuous and equals\ $d^{(\MM)}(s,{\bm\theta})$ for
 sufficiently small $s > 0$ according to Theorem~\ref{theo:ED_smallf}, $d^{(\MM)}(s,{\bm\theta})$ has to be continuous for small $s > 0$, too.
 So what have we missed? In order to unveil that ``mystery", we give another representation for the expected down--trade log series
 using again\ $H^{(\MM,N)}$ of \eqref{eq:preliminaries}
\vspace*{0.1cm}
\begin{lemma}\label{lem:situation-Theo3.8} 
  In the situation of Theorem~\ref{theo:ED_smallf} for all $s > 0$ and\ ${\bm\theta} \in \Se^{\KK-1}_1$\ with\
  $s{\bm\theta} \in \overset{\circ}{\msupp}$\ the following holds:
  \begin{align}\label{eq:sitTheo3.8}  
     \E\ek{\cD^{(\MM)}(s{\bm\theta},\cdot)}\,=\!\!\!\sum_{\substack{\left(x_1,\ldots,x_N\right)\in\N_0^N \\
     \sum\limits^N_{i=1} x_i = \MM}}\!\!\!H^{(\MM,N)}\left(x_1,\ldots,x_N\right) \cdot \log\!\left(\!\min\left\{1,\prod\limits^N_{n=1}\,
     \left(1 + s\langle\,{\bm t}_{n\emptyvar}^\top,{\bm\theta}\,\rangle\right)^{\!\!x_n}\!\right\}\right)\,.
  \end{align}
\end{lemma}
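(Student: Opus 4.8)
The plan is to prove \eqref{eq:sitTheo3.8} by exactly the same regrouping of the sample space $\Omega^{(\MM)}$ according to counting vectors that was already used in the proof of Theorem~\ref{theo:EU_smallf}, only this time \emph{without} splitting $\Omega^{(\MM)}$ into an ``up'' and a ``down'' part and hence without invoking Lemma~\ref{lem:small_f}. First I would record that, since $s{\bm\theta}\in\overset{\circ}{\msupp}$, every factor $1 + s\langle\,{\bm t}_{\omega_j\emptyvar}^\top,{\bm\theta}\,\rangle$ is strictly positive, so that $\TWR_1^\MM(s{\bm\theta},\omega) > 0$ for every $\omega\in\Omega^{(\MM)}$ and the quantity $\log\big(\min\{1,\TWR_1^\MM(s{\bm\theta},\omega)\}\big)$ from Definition~\ref{logseries} is a finite real number. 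Consequently $\E\ek{\cD^{(\MM)}(s{\bm\theta},\cdot)} = \sum_{\omega\in\Omega^{(\MM)}}\P(\{\omega\})\,\log\big(\min\{1,\TWR_1^\MM(s{\bm\theta},\omega)\}\big)$ is a finite sum with no convergence issues.

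Next I would partition $\Omega^{(\MM)}$ into the classes on which the counting variables $X_1,\ldots,X_N$ of \eqref{eq:X_i} take prescribed values $(x_1,\ldots,x_N)\in\N_0^N$ with $\sum_{i=1}^N x_i = \MM$. As in the proof of Theorem~\ref{theo:EU_smallf}, each such class contains exactly $\binom{\MM}{x_1\;x_2\cdots x_N}$ elements $\omega$, and each of them satisfies $\P(\{\omega\}) = p_1^{x_1}\cdots p_N^{x_N}$, so the total probability of the class equals $H^{(\MM,N)}(x_1,\ldots,x_N)$ from \eqref{eq:preliminaries}. By \eqref{eq:TWR_1} the value $\TWR_1^\MM(s{\bm\theta},\omega) = \prod_{n=1}^N\big(1 + s\langle\,{\bm t}_{n\emptyvar}^\top,{\bm\theta}\,\rangle\big)^{x_n}$ depends on $\omega$ only through the counting vector $(x_1,\ldots,x_N)$, hence the summand $\log\big(\min\{1,\TWR_1^\MM(s{\bm\theta},\omega)\}\big)$ is constant on each class. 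Collecting the $\binom{\MM}{x_1\;x_2\cdots x_N}$ equal contributions of a class into the single weight $H^{(\MM,N)}(x_1,\ldots,x_N)$ turns the sum over $\omega$ into the sum over counting vectors asserted in \eqref{eq:sitTheo3.8}, and the proof is complete.

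There is essentially no hard step; Lemma~\ref{lem:situation-Theo3.8} is a pure bookkeeping identity. The only point deserving (mild) care is that every logarithm that occurs be a finite real number, which is precisely why the hypothesis $s{\bm\theta}\in\overset{\circ}{\msupp}$ is imposed: on $\msuppR = \partial\msupp$ a factor $1 + s\langle\,{\bm t}_{n\emptyvar}^\top,{\bm\theta}\,\rangle$ could vanish and $\cD^{(\MM)}$ degenerate to $-\infty$ (cf.\ Remark~\ref{rem:theo3.8ad-b}). The conceptual payoff, which I would flag but not develop here, is that \eqref{eq:sitTheo3.8} exhibits $\E\ek{\cD^{(\MM)}(s{\bm\theta},\cdot)}$ as a finite sum of terms $H^{(\MM,N)}(x_1,\ldots,x_N)\cdot\log\big(\min\{1,\prod_{n=1}^N(1 + s\langle\,{\bm t}_{n\emptyvar}^\top,{\bm\theta}\,\rangle)^{x_n}\}\big)$, each of which is continuous in $(s,{\bm\theta})$ on the region where $s{\bm\theta}\in\overset{\circ}{\msupp}$; this is the representation one compares against the rearranged form $d^{(\MM)}$ of \eqref{eq:ED_smallf} in the subsequent discussion of why $\E\ek{\cD^{(\MM)}}$ is continuous while $d^{(\MM)}$ in general is not.
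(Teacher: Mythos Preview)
Your argument is correct and follows essentially the same route as the paper: group $\Omega^{(\MM)}$ by the counting vectors $(x_1,\ldots,x_N)$, use \eqref{eq:TWR_1} to see that $\TWR_1^\MM(s{\bm\theta},\omega)$ --- and hence $\log\big(\min\{1,\TWR_1^\MM(s{\bm\theta},\omega)\}\big)$ --- is constant on each class, and replace the class--sum by the weight $H^{(\MM,N)}(x_1,\ldots,x_N)$. The paper records one additional cosmetic step, rewriting $\log\big(\min\{1,\cdot\}\big)$ as $\chi_{\Xi_{x_1,\ldots,x_N}(s)}({\bm\theta})\cdot\sum_n x_n\log(1+s\langle{\bm t}_{n\emptyvar}^\top,{\bm\theta}\rangle)$ via an indicator set $\Xi_{x_1,\ldots,x_N}(s)$, but this is just an alternative way of expressing the same summand and does not change the argument.
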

\begin{proof}
  \eqref{eq:sitTheo3.8} can be derived from the definition in \eqref{eq:downtrade log} as follows:
  For $\omega \in \Omega^{(\MM)}$ with \eqref{eq:X_i} clearly                                                                   \vspace*{-0.6cm}
  \begin{align*}
     \TWR^{\MM}_1(s{\bm\theta},\omega)\,=\,\prod\limits^N_{n=1}\,\left(1 + s\langle\,{\bm t}_{n\emptyvar}^\top,{\bm\theta}\,\rangle\right)^{x_n}
  \end{align*}
  holds. Introducing for $s > 0$ the set
  \begin{align}\label{eq:setXi} 
     \notag \Xi_{x_1,\ldots,x_N}(s)\,
           &:=\,\bigg\{{\bm\theta} \in \Se^{\KK-1}_1\!:
                \prod\limits^N_{j=1}\left(1 + s\langle\,{\bm t}_{j\emptyvar}^\top,{\bm\theta}\,\rangle\right)^{\!x_j} < 1\bigg\} \\
           & =\ \bigg\{{\bm\theta} \in \Se^{\KK-1}_1\!:
                \sum\limits^N_{j=1}\,x_j\,\log\!\left(1 + s\langle\,{\bm t}_{j\emptyvar}^\top,{\bm\theta}\,\rangle\right) < 0\bigg\}
 \end{align}
 and using the characteristic function of a set $A,\,\chi_A$, we obtain for all\ $s{\bm\theta} \in \overset{\circ}{\msupp}$
 \begin{align}\label{eq:4.7} 
    \E\ek{\cD^{(\MM)}(s{\bm\theta},\cdot)}\,=\!\!\!\sum_{\substack{\left(x_1,\ldots,x_N\right)\in\N_0^N \\ \sum\limits^N_{i=1} x_i = \MM}}\!\!\!
    H^{(\MM,N)}\left(x_1,\ldots,x_N\right) \cdot \chi_{\Xi_{x_1,\ldots,x_N}(s)}({\bm\theta}) \cdot
    \sum\limits^N_{n=1}\,x_n \cdot \log\!\left(1 + s\langle\,{\bm t}_{n\emptyvar}^\top,{\bm\theta}\,\rangle\right)
 \end{align}
 giving \eqref{eq:sitTheo3.8}.

\end{proof}
 Observe that\ $d^{(\MM)}(s,{\bm\theta})$ has a similar representation, namely, using
 \begin{align}\label{eq:usingXi} 
    \widehat{\Xi}_{x_1,\ldots,x_N}\,:=\,\left\{{\bm\theta} \in \Se^{\KK-1}_1\!:\,
    \sum\limits^N_{j=1}\,x_j\,\langle\,{\bm t}_{j\emptyvar}^\top,{\bm\theta}\,\rangle\,\leq\,0\right\}
  \end{align}
  we get right from the definition in \eqref{eq:ED_smallf} that for all\  $s{\bm\theta} \in \overset{\circ}{\msupp}$
  \begin{align}\label{eq:theorem3.8} 
    d^{(\MM)}(s,{\bm\theta})\,=\!\!\!\sum_{\substack{\left(x_1,\ldots,x_N\right)\in\N_0^N \\ \sum\limits^N_{i=1} x_i = \MM}}\!\!\!
    H^{(\MM,N)}\left(x_1,\ldots,x_N\right) \cdot \chi_{\widehat{\Xi}_{x_1,\ldots,x_N}}({\bm\theta}) \cdot
    \sum\limits^N_{n=1}\,x_n\,\log\!\left(1 + s\langle\,{\bm t}_{n\emptyvar}^\top,{\bm\theta}\,\rangle\right)
  \end{align}
  holds. So the only difference of \eqref{eq:4.7} and \eqref{eq:theorem3.8} is that\;$\Xi_{x_1,\ldots,x_N}(s)$\
  is replaced by\;$\widehat{\Xi}_{x_1,\ldots,x_N}$ (with the latter being a half--space restricted to\ $\Se^{\KK-1}_1$).
  Observing furthermore that due to \eqref{eq:ad(b)} \vspace*{-0.3cm}
  \begin{align}\label{eq:ad-b3.20} 
     \widehat{\Xi}_{x_1,\ldots,x_N}\,\subset\,\Xi_{x_1,\ldots,x_N}(s) \qquad \forall\ s > 0\,,
  \end{align}
  the discontinuity of $d^{(\MM)}$ clearly comes from the discontinuity of the indicator function\ $\chi_{\widehat{\Xi}_{x_1,\ldots,x_N}}$,\
  because
  \begin{align*}
     \sum\limits^N_{j=1}\,x_j \cdot \langle\,{\bm t}_{j\emptyvar}^\top,{\bm\theta}\,\rangle\,=\,0 \;\not\Longrightarrow\;
     \sum\limits^N_{n=0}\,x_n\,\ln\left(1 + s\langle\,{\bm t}_{n\emptyvar}^\top,{\bm\theta}\,\rangle\right) = 0
  \end{align*}  
  and the  ``mystery" is solved since Lemma~\ref{lem:small_f}\textbf{(b)}
  implies equality in \eqref{eq:ad-b3.20} for sufficiently small $s > 0$. Finally note that for large $s > 0$ not only the
  continuity gets lost, but moreover\ $d^{(\MM)}(s,{\bm\theta})$ is no longer concave. The discontinuity can even be seen
  in Figure~\ref{FirstExpected-LogDownTrades}.

\vspace*{-0.4cm}
  \begin{figure}[htb]
    \centering
    \begin{minipage}[c]{0.95\linewidth}
        \centering
\includegraphics[width=0.9\textwidth]{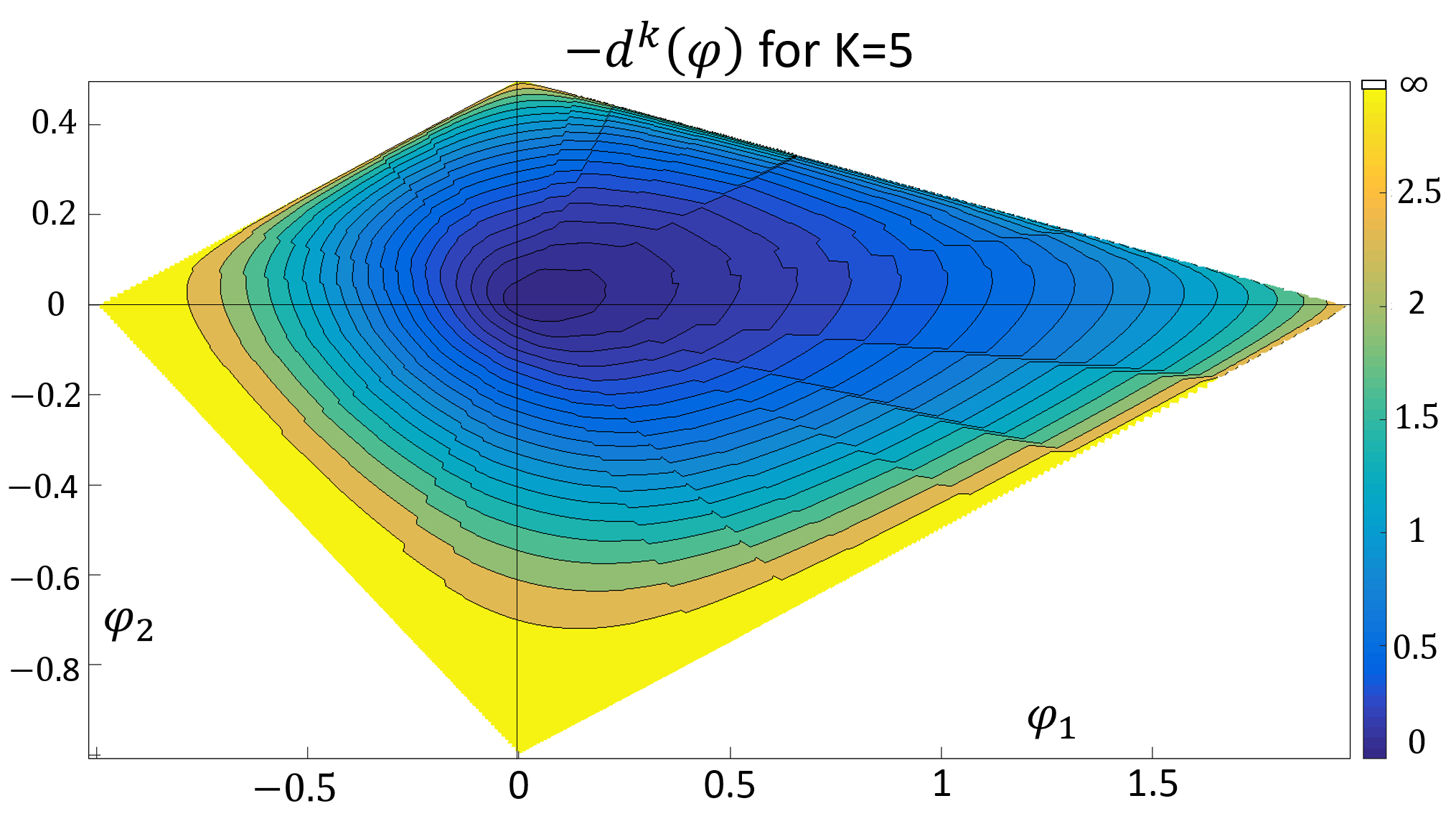}
    \end{minipage}
    \caption{Discontinuous contour levels for $-d^{(K)}$ with $\MM=5$ for $T$ from
             Example~\ref{exa:risk measures}}
    \label{FirstExpected-LogDownTrades}
  \end{figure}


\newpage\vspace*{0.1cm}
            \section{The current drawdown}  \label{sec:5}   

 We keep discussing the trading return matrix $T$ from \eqref{eq:returnMatrix} and probabilities $p_1,\ldots,p_N$ from Setup~\ref{setup:Z} for each
 row ${\vekt}$ of $T$. Drawing randomly and independently $\MM\in\N$ times such rows from that distribution results in a terminal wealth relative for
 fractional trading  \vspace*{-0.3cm}
 \begin{align*}
    \TWR_1^\MM({\vek},\omega) = \prod_{j=1}^\MM \rk{1 + \langle\,{\bm t}_{\omega_j\emptyvar}^\top\,,{\vek}\rangle}, \quad
    {\vek} \in \overset{\circ}{\msupp}\,,\ \omega\in\Omega^{(\MM)} = \gk{1,\ldots,N}^\MM\,,
 \end{align*}
 depending on the betted portions ${\vek}=\rk{\f_1,\ldots,\f_\KK}$, see \eqref{eq:twr_omega}. In order to investigate the \textit{current drawdown}
 realized after the $\MM$--th draw, we more generally use the notation
 \begin{align}\label{eq:5.1-TWR} 
    \TWR^n_m({\vek},\omega)\,:=\,\prod\limits^n_{j=m}\,\rk{1 + \langle\,{\bm t}_{\omega_j\emptyvar}^\top\,,{\vek}\rangle }\,.
 \end{align}

 The idea here is that $\TWR^n_1({\vek},\omega)$ is viewed as a discrete ``equity curve" at time $n$ (with ${\vek}$ and $\omega$ fixed).
 The current drawdown log series is defined as the logarithm of the drawdown of this equity curve realized from the maximum of the curve till
 the end (time $\MM$). We will see below that this series is the counter part of the \textit{run--up} (cf. Figure \ref{figlog-series}).


  \begin{figure}[htb]
    \centering
    \begin{minipage}[c]{0.5\linewidth}
        \centering
        \includegraphics[width=0.9\textwidth]{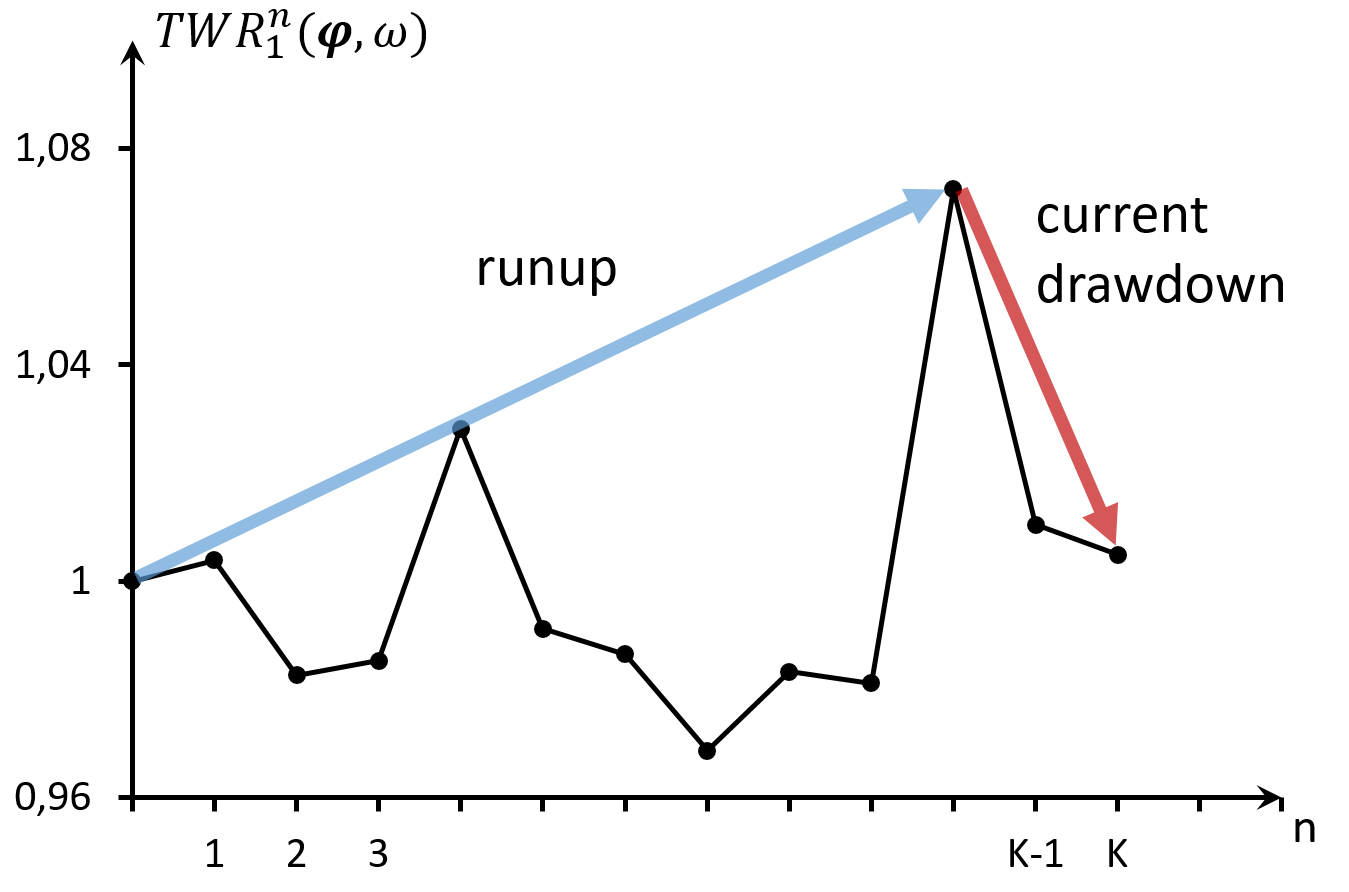}
    \end{minipage}
\hspace{-0.6cm}
    \begin{minipage}[c]{0.5\linewidth}
        \centering
       \includegraphics[width=0.9\textwidth]{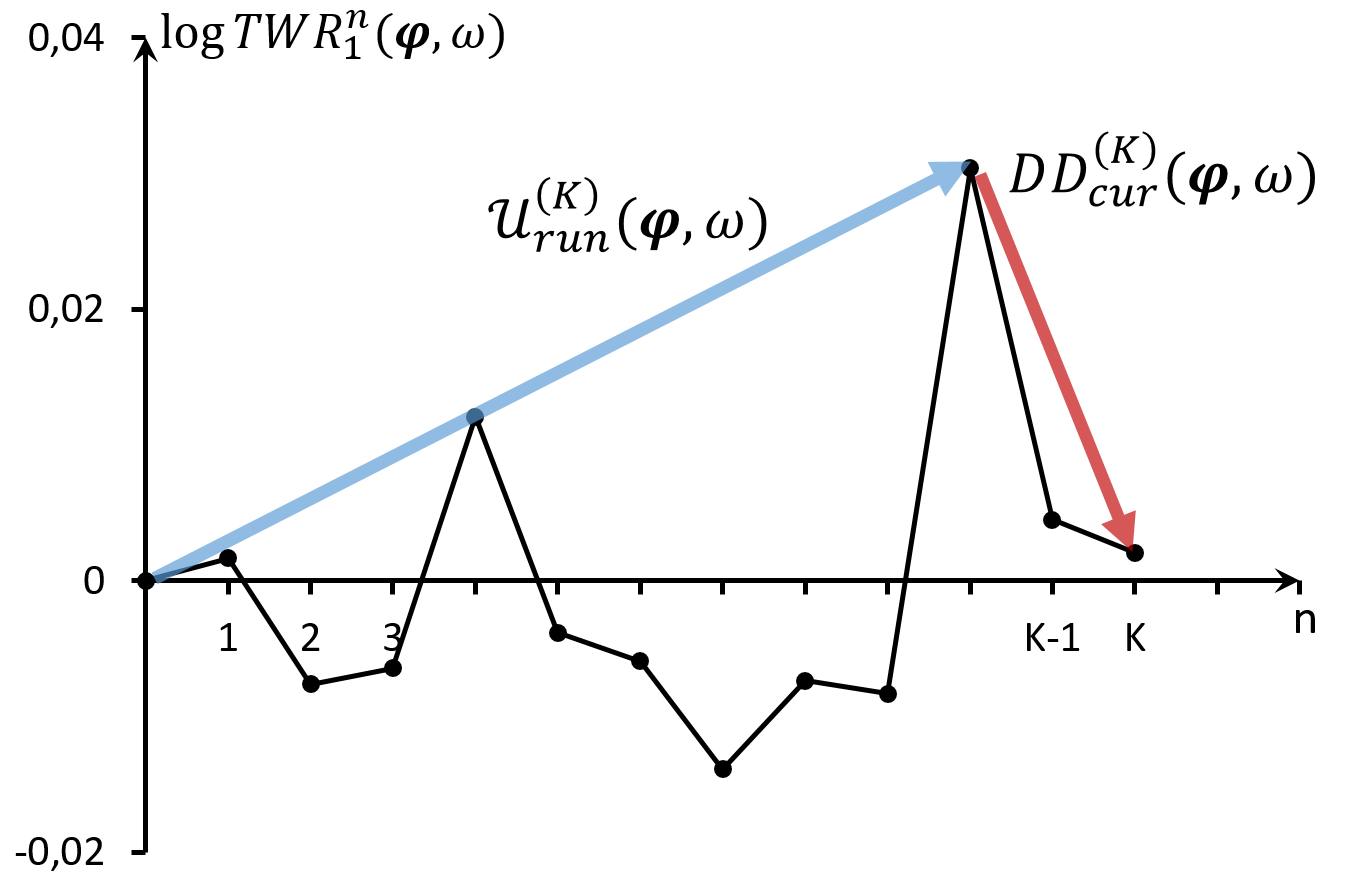}
    \end{minipage}
    \caption{In the left figure the run-up and the current drawdown is plotted for a realization
             of the \text{TWR} ``equity''--curve and to the right are their log series. }
  \label{figlog-series}
  \end{figure}

\newpage
\begin{definition}\label{def:logseries} 
  The \textbf{current drawdown log series} is set to
  \begin{align}\label{current drawdown log series} 
    \cDD^{(\MM)}({\vek},\omega) := \log\!\rk{\min_{1\,\leq\,\ell\,\leq\,\MM}\min\{1,\TWR_\ell^\MM({\vek},\omega)\}} \le 0\,,
  \end{align}
  and the \textbf{run-up log series} is defined as
  \begin{align*}
    \cU_{\textnormal{run}}^{(\MM)}({\vek},\omega) := \log\!\rk{\max_{1\,\leq\,\ell\,\leq\,\MM}\max\{1,\TWR_1^\ell({\vek},\omega)\}} \ge 0\,.
  \end{align*}
\end{definition}

 The corresponding trade series are connected because the current drawdown starts after the run--up has stopped.
 To make that more precise, we fix that $\ell$ where the run--up reached its top.
\begin{definition}\label{def:l*} 
  \textbf{(first $\TWR$ topping point)}  \\ [0.1cm]
  For fixed $\omega\in\Omega^{(\MM)}$ and ${\vek} \in \overset{\circ}{\msupp}$ define $\ell^{\ast}\!=\ell^{\ast}({\vek},\omega)\in\{0,\ldots,\MM\}$ with
  \begin{enumerate}
    \item $\ell^\ast\!= 0$\ in case\ $\max\limits_{1\,\leq\,\ell\,\leq\,\MM}\TWR_1^{\ell}({\vek},\omega) \leq 1$
    \item and otherwise choose $\ell^\ast\!\in\{1,\ldots,\MM\}$ such that
          \begin{align}\label{eq:TWR_1_l*} 
             \TWR_1^{\ell^\ast}({\vek},\omega) = \max_{1\leq\ell\leq \MM}\TWR_1^\ell({\vek},\omega) > 1\,,
          \end{align}
      where $\ell^\ast$\ should be minimal with that property.
  \end{enumerate}
\end{definition}
\noindent
 By definition one easily gets
 \begin{align}\label{eq:DD}
      \cDD^{(\MM)}({\vek},\omega) & = \begin{cases}
                                     \log\TWR_{\ell^\ast+1}^\MM({\vek},\omega), &\text{in case $\ell^\ast\!< \MM$,} \\
                                            0,                                  &\text{in case $\ell^\ast\!= \MM$},
                                  \end{cases} \\
   \intertext{and}
              \label{eq:R}
      \cU_{\textnormal{run}}^{(\MM)}({\vek},\omega)  & = \begin{cases}
                                     \log\TWR_1^{\ell^\ast}({\vek},\omega), &\text{in case $\ell^\ast\!\geq 1$,} \\
                                          0,                                &\text{in case $\ell^\ast\!= 0$}.
                               \end{cases}
 \end{align} 

\noindent
 As in Section~\ref{sec:3} we immediately obtain $\cDD^{(\MM)}({\vek},\omega) + \cU_{\textnormal{run}}^{(\MM)}({\vek},\omega) = \cZ^{(\MM)}({\vek},\omega)$
 and therefore by Theorem~\ref{theo:EZ}:
\vspace*{0.2cm}
\begin{corollary} 
  For ${\vek} \in \overset{\circ}{\msupp}$
  \begin{align}\label{eq:EDD+ER} 
    \E\ek{\cDD^{(\MM)}({\vek},\cdot)} + \E\ek{\cU_{\textnormal{run}}^{(\MM)}({\vek},\cdot)} = \MM\cdot\log\Gamma({\vek})
  \end{align}
  holds.
\end{corollary}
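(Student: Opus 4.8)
The plan is to mimic exactly the decomposition argument used in Section~\ref{sec:3} for the up-trade/down-trade split, but now applied to the run-up/current-drawdown split. The key structural fact is the pointwise identity
\begin{align*}
  \cDD^{(\MM)}({\vek},\omega) + \cU_{\textnormal{run}}^{(\MM)}({\vek},\omega) = \cZ^{(\MM)}({\vek},\omega)
  \qquad \text{for every } \omega \in \Omega^{(\MM)},\ {\vek} \in \overset{\circ}{\msupp}.
\end{align*}
Once this is in hand, taking expectations and invoking Theorem~\ref{theo:EZ} (which gives $\E[\cZ^{(\MM)}({\vek},\cdot)] = \MM\cdot\log\Gamma({\vek})$) yields \eqref{eq:EDD+ER} by linearity of expectation. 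So the whole proof reduces to establishing the pointwise identity, which is a deterministic statement about a single fixed equity curve.

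To prove the pointwise identity I would fix ${\vek}\in\overset{\circ}{\msupp}$ and $\omega\in\Omega^{(\MM)}$ and use the topping-point index $\ell^\ast = \ell^\ast({\vek},\omega)$ from Definition~\ref{def:l*} together with the case distinctions \eqref{eq:DD} and \eqref{eq:R} that the authors have already recorded. In the generic case $1 \le \ell^\ast < \MM$ we have $\cU_{\textnormal{run}}^{(\MM)} = \log\TWR_1^{\ell^\ast}({\vek},\omega)$ and $\cDD^{(\MM)} = \log\TWR_{\ell^\ast+1}^\MM({\vek},\omega)$, so their sum is
\begin{align*}
  \log\TWR_1^{\ell^\ast}({\vek},\omega) + \log\TWR_{\ell^\ast+1}^\MM({\vek},\omega)
  = \log\!\rk{\TWR_1^{\ell^\ast}({\vek},\omega)\cdot\TWR_{\ell^\ast+1}^\MM({\vek},\omega)}
  = \log\TWR_1^\MM({\vek},\omega) = \cZ^{(\MM)}({\vek},\omega),
\end{align*}
using the multiplicativity $\TWR_1^{\ell^\ast}\cdot\TWR_{\ell^\ast+1}^\MM = \TWR_1^\MM$ which follows directly from the product definition \eqref{eq:5.1-TWR}. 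The boundary cases $\ell^\ast = 0$ (run-up vanishes, drawdown equals $\log\TWR_1^\MM$) and $\ell^\ast = \MM$ (drawdown vanishes, run-up equals $\log\TWR_1^\MM$) are then checked separately and each trivially gives the same identity.

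The only genuinely non-routine point — and hence the main obstacle — is verifying that the two case-split formulas \eqref{eq:DD} and \eqref{eq:R} are in fact consistent with the original $\min$/$\max$ definitions in Definition~\ref{def:logseries}, i.e. that $\min_{1\le\ell\le\MM}\min\{1,\TWR_\ell^\MM\}$ is attained starting right after the run-up top and equals $\TWR_{\ell^\ast+1}^\MM$ when $\ell^\ast<\MM$ (and equals $1$ when $\ell^\ast=\MM$). This is really the observation that the deepest drawdown of the equity curve is measured from its running maximum, and the running maximum up to the end is attained at the (first) topping point $\ell^\ast$; a short argument comparing $\TWR_\ell^\MM = \TWR_1^\MM/\TWR_1^{\ell-1}$ across $\ell$ shows the minimum over $\ell$ is achieved where $\TWR_1^{\ell-1}$ is maximal. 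Since the authors have already asserted \eqref{eq:DD} and \eqref{eq:R} as consequences of the definitions, in the write-up I would simply cite those, note the pointwise additive identity, take expectations, and apply Theorem~\ref{theo:EZ} to conclude.
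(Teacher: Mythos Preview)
Your proposal is correct and follows essentially the same approach as the paper: the authors state the pointwise identity $\cDD^{(\MM)}({\vek},\omega) + \cU_{\textnormal{run}}^{(\MM)}({\vek},\omega) = \cZ^{(\MM)}({\vek},\omega)$ (citing the analogous argument in Section~\ref{sec:3}) and then invoke Theorem~\ref{theo:EZ}, without giving a separate proof environment. Your write-up is in fact more detailed than the paper's one-line justification, spelling out the case split via $\ell^\ast$ and the multiplicativity of $\TWR$, but the underlying argument is identical.
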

 Explicit formulas for the expectation of $\cDD^{(\MM)}$ and $\cU_{\textnormal{run}}^{(\MM)}$ are again of interest. \\
 By definition and with \eqref{eq:DD}
 \begin{align}\label{eq:EDD_def} 
  \E\ek{\cDD^{(\MM)}({\vek},\cdot)}
   = \sum_{\ell=0}^{\MM-1}\sum_{\substack{\omega\in\Omega^{(\MM)}\\ \ell^\ast({\vek},\omega) = \ell}}
     \P\big(\{\omega\}\big) \cdot \log\TWR_{\ell+1}^\MM\big({\vek},\omega\big).
 \end{align}
 Before we proceed with this calculation we need to discuss $\ell^\ast\!=\ell^\ast\big({\vek},\omega\big)$ further for some fixed $\omega$.
 By Definition~\ref{def:l*}, in case $\ell^{\ast}\!\ge1$, we get

 \begin{align}
        \TWR_k^{\ell^\ast}\big({\vek},\omega\big)             & > 1\quad\text{for $k=1,\ldots,\ell^\ast$,}
        \label{eq:run-up5.8)}  \\
   \intertext{since $\ell^\ast$ is the first time the run-up topped, and, in case $\ell^{\ast}\!<\MM$,}
        \TWR_{\ell^\ast+1}^{\tilde{k}}\big({\vek},\omega\big) & \leq 1\quad\text{for $\tilde{k}=\ell^\ast\!+ 1,\ldots,\MM$}.
        \label{eq:run-up5.9}                                                                                                                    \vspace*{0.3cm}
 \end{align} 

\noindent
 Similarly as in Section~\ref{sec:3} we again write ${\vek} \not= 0$ as ${\vek} = s{\bm\theta}$ for ${\bm\theta} \in \Se_1^{\KK-1}$ and $s > 0$.
 The last inequality then may be rephrased for $s \in (0,\varepsilon]$ and some sufficiently small $\varepsilon > 0$ as
 \begin{align} 
  \notag
        \TWR_{\ell^\ast+1}^{\tilde{k}}(s{\bm\theta},\omega) \leq 1
          \quad &\Longleftrightarrow \quad \log\TWR_{\ell^\ast+1}^{\tilde{k}}((s{\bm\theta},\omega) \leq 0                                                 \\
  \notag        &\Longleftrightarrow \quad \sum_{j=\ell^\ast+1}^{\tilde{k}}\log\!\rk{1 + s\,\langle\,{\bm t}_{\omega_j\emptyvar}^\top,{\bm\theta}\,\rangle} \leq 0  \\
  \label{eq:TWR-5.10}
                &\Longleftrightarrow \quad \sum_{j=\ell^\ast+1}^{\tilde{k}} \langle\,{\bm t}_{\omega_j\emptyvar}^\top,{\bm\theta}\,\rangle \leq 0
\end{align}
 by an argument similar as in Lemma~\ref{lem:small_f}. Analogously one finds for all $s \in (0,\varepsilon]$
\begin{align}\label{eq:twr k} 
   \TWR_k^{\ell^\ast}(s{\bm\theta},\omega) > 1\;\Longleftrightarrow\;\sum_{j=k}^{\ell^\ast} \langle\,{\bm t}_{\omega_j\emptyvar}^\top,{\bm\theta}\,\rangle > 0.
\end{align}
 This observation will become crucial to proof the next result on the expectation of the current drawdown.

\begin{theorem}\label{theo:EDD} 
  Let a trading game as in  Setup~\ref{setup:Z} with $N, \MM \in \N$ be fixed.
  Then for ${\bm\theta} \in \Se_1^{\KK-1}$ and $s \in (0,\varepsilon]$  the following holds:
  \begin{align}\label{eq:EDD} 
    \E\ek{\cDD^{(\MM)}(s{\bm\theta},\cdot)} = d_{\text{cur}}^{(\MM)}(s,{\bm\theta}) :=
      \sum_{n=1}^N\;\rk{\sum_{\ell=0}^\MM\Lambda_n^{(\ell,\MM,N)}({\bm\theta})} \cdot \log\!\rk{1 + s\,\langle\,{\bm t}_{n\emptyvar}^\top,{\bm\theta}\,\rangle}
  \end{align}
  where $\Lambda_n^{(\MM,\MM,N)}:=0$ is independent of ${\bm\theta}$ and for $\ell\in\{0,1,\ldots,\MM-1\}$ the functions $\Lambda^{(\ell,\MM,N)}_n({\bm\theta}) \ge 0$
  are defined by
  \begin{align}\label{eq:Lambda} 
    \Lambda_n^{(\ell,\MM,N)}({\bm\theta}) :=
      \hspace{-0.5cm}\sum_{\substack{\omega\in\Omega^{(\MM)}                                                                               \\
                     \sum\limits_{j=k}^\ell \langle\,{\bm t}_{\omega_j\emptyvar}^\top,\,{\bm\theta}\,\rangle\,>\,0\ \text{for\,$k=1,\ldots,\ell$} \\ \\
                     \sum\limits_{j=\ell+1}^{\tilde{k}} \langle\,{\bm t}_{\omega_j\emptyvar}^\top,\, {\bm\theta}\,\rangle\,\leq\,0\
                                                                                               \text{for\,$\tilde{k}=\ell+1,\ldots,\MM$}}} \hspace{-0.5cm}
    \P\big(\{\omega\}\big) \cdot \# \gk{i\,\big|\,\omega_i = n,\,i \geq \ell + 1}.
  \end{align}
\end{theorem}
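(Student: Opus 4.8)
The plan is to compute $\E\ek{\cDD^{(\MM)}(s{\bm\theta},\cdot)}$ directly from the representation \eqref{eq:EDD_def}, which partitions $\Omega^{(\MM)}$ according to the value $\ell = \ell^\ast(s{\bm\theta},\omega) \in \{0,\ldots,\MM-1\}$ of the first $\TWR$ topping point. (The case $\ell^\ast = \MM$ contributes $0$, which is why $\Lambda_n^{(\MM,\MM,N)} := 0$.) On the event $\{\ell^\ast = \ell\}$ we have $\cDD^{(\MM)}(s{\bm\theta},\omega) = \log\TWR_{\ell+1}^\MM(s{\bm\theta},\omega) = \sum_{j=\ell+1}^\MM \log\!\rk{1 + s\langle\,{\bm t}_{\omega_j\emptyvar}^\top,{\bm\theta}\,\rangle}$, so grouping the logarithms by which period index $n \in \{1,\ldots,N\}$ appears gives $\sum_{n=1}^N \#\gk{i \mid \omega_i = n,\, i \geq \ell+1} \cdot \log\!\rk{1 + s\langle\,{\bm t}_{n\emptyvar}^\top,{\bm\theta}\,\rangle}$. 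Substituting this into \eqref{eq:EDD_def} and interchanging the (finite) sums over $\ell$, over $n$, and over $\omega$ yields exactly \eqref{eq:EDD} with $\Lambda_n^{(\ell,\MM,N)}({\bm\theta})$ collecting $\P(\{\omega\}) \cdot \#\gk{i \mid \omega_i = n,\, i \geq \ell+1}$ over all $\omega$ in the event $\{\ell^\ast = \ell\}$.

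The crux is therefore identifying the event $\{\omega : \ell^\ast(s{\bm\theta},\omega) = \ell\}$ with the index set appearing under the sum in \eqref{eq:Lambda}. By Definition~\ref{def:l*} and the characterisations \eqref{eq:run-up5.8)}--\eqref{eq:run-up5.9}, $\ell^\ast = \ell \geq 1$ means precisely that $\TWR_k^\ell(s{\bm\theta},\omega) > 1$ for $k = 1,\ldots,\ell$ (the run-up strictly tops at $\ell$, and $\ell$ is the first such time) and $\TWR_{\ell+1}^{\tilde k}(s{\bm\theta},\omega) \leq 1$ for $\tilde k = \ell+1,\ldots,\MM$ (no new high afterwards). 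For $\ell = 0$ the first block of conditions is vacuous and the second says the equity never rises above $1$. Here is where the restriction $s \in (0,\varepsilon]$ enters: invoking \eqref{eq:twr k} and \eqref{eq:TWR-5.10} — which are the block analogues of Lemma~\ref{lem:small_f}, valid for sufficiently small $s > 0$ uniformly in ${\bm\theta} \in \Se_1^{\KK-1}$ and $\omega \in \Omega^{(\MM)}$ since both sets are finite/compact — these $\TWR$ inequalities translate into the sign conditions $\sum_{j=k}^\ell \langle\,{\bm t}_{\omega_j\emptyvar}^\top,{\bm\theta}\,\rangle > 0$ and $\sum_{j=\ell+1}^{\tilde k} \langle\,{\bm t}_{\omega_j\emptyvar}^\top,{\bm\theta}\,\rangle \leq 0$ that index the sum in \eqref{eq:Lambda}. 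One should also note, to be careful about the minimality clause in Definition~\ref{def:l*}: if $\sum_{j=k}^\ell \langle\,{\bm t}_{\omega_j\emptyvar}^\top,{\bm\theta}\,\rangle > 0$ for all $k \le \ell$ then in particular $\TWR_1^\ell > \TWR_1^{k-1}$ for every $k \le \ell$, so $\ell$ is automatically the first maximiser, and the index sets of the partition are genuinely disjoint and exhaust $\Omega^{(\MM)}$.

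Finally, non-negativity $\Lambda_n^{(\ell,\MM,N)}({\bm\theta}) \geq 0$ is immediate since each summand is a product of the probability $\P(\{\omega\}) > 0$ and a cardinality $\#\gk{i \mid \omega_i = n,\, i \geq \ell+1} \geq 0$, and $d_{\text{cur}}^{(\MM)}(s,{\bm\theta}) \leq 0$ follows either from \eqref{current drawdown log series} directly or a posteriori from the formula. The main obstacle I anticipate is purely bookkeeping: making the translation from the nested-$\TWR$ description of $\ell^\ast$ to the sign-condition description airtight — in particular handling the $\ell = 0$ boundary case, the minimality of $\ell^\ast$, and the uniform choice of $\varepsilon$ so that \eqref{eq:twr k} and \eqref{eq:TWR-5.10} hold simultaneously for all the relevant $(k,\tilde k,\ell,\omega,{\bm\theta})$. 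No deep estimate is needed beyond the Taylor/concavity argument already packaged in Lemma~\ref{lem:small_f}; the content is the combinatorial reorganisation of a finite double sum.
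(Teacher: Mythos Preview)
Your proposal is correct and follows essentially the same route as the paper's own proof: start from \eqref{eq:EDD_def}, expand $\log\TWR_{\ell+1}^{\MM}$ as a sum of logarithms, use \eqref{eq:TWR-5.10} and \eqref{eq:twr k} to replace the condition $\ell^\ast(s{\bm\theta},\omega)=\ell$ by the linear sign conditions for small $s$, regroup the logarithms by the index $n$ via the counting $\#\{i\mid \omega_i=n,\ i\geq \ell+1\}$, and interchange the finite sums. Your additional remarks on the minimality of $\ell^\ast$, the $\ell=0$ boundary case, and the uniform choice of $\varepsilon$ are more explicit than the paper's treatment but do not alter the argument.
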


\begin{proof}
  Again the proof is very similar as the proof in the univariate case, see Theorem~5.4 in \cite{maier:raftf2017}.
  Starting with \eqref{eq:EDD_def} we get

  \begin{align*}
    \E\ek{\cDD^{(\MM)}(s{\bm\theta},\cdot)} = \sum_{\ell=0}^{\MM-1} \sum_{\substack{\omega\in\Omega^{(\MM)} \\ \ell^\ast(s{\bm\theta},\omega) = \ell}}
     \P\big(\{\omega\}\big) \cdot \sum_{i=\ell+1}^\MM \log\!\rk{1 + \langle\,{\bm t}_{\omega_i\emptyvar}^\top,s{\bm\theta}\,\rangle}
  \end{align*}
  and by \eqref{eq:TWR-5.10} and \eqref{eq:twr k} for all $s \in (0,\varepsilon]$
  \begin{align}\label{bewzeile} 
    & \E\ek{\cDD^{(\MM)}(s{\bm\theta},\cdot)}\,=\,
         \sum_{\ell=0}^{\MM-1}\hspace{-0.3cm}
         \sum_{\substack{\omega\in\Omega^{(\MM)} \\
                  \sum\limits_{j=k}^\ell \langle\,{\bm t}_{\omega_j\emptyvar}^\top,\, {\bm\theta}\,\rangle\,>\,0\                \text{for\, $k=1,\ldots,\ell$}               \\
                  \sum\limits_{j=\ell+1}^{\tilde{k}} \langle\,{\bm t}_{\omega_j\emptyvar}^\top,\, {\bm\theta}\,\rangle\,\leq\,0\ \text{for\,$\tilde{k}=\ell+1,\ldots,\MM$}}}   \hspace{-1.3cm}
      \P\big(\{\omega\}\big) \cdot \sum_{i=\ell+1}^\MM\log\!\rk{1 + s\,\langle\,{\bm t}_{\omega_i\emptyvar}^\top,\, {\bm\theta}\,\rangle}
  \end{align}
\vspace*{-0.3cm}
  \begin{align*}\hspace{0.3cm}
     & =\,\sum_{\ell=0}^{\MM-1}\hspace{-0.3cm}
          \sum_{\substack{\omega\in\Omega^{(\MM)} \\
                  \sum\limits_{j=k}^\ell \langle\,{\bm t}_{\omega_j\emptyvar}^\top,\, {\bm\theta}\,\rangle\,>\,0\                \text{for\,$k=1,\ldots,\ell$}                \\
                  \sum\limits_{j=\ell+1}^{\tilde{k}} \langle\,{\bm t}_{\omega_j\emptyvar}^\top,\, {\bm\theta}\,\rangle\,\leq\,0\ \text{for\,$\tilde{k}=\ell+1,\ldots,\MM$}}}   \hspace{-1.3cm}
       \P\big(\{\omega\}\big) \cdot \sum_{n=1}^N \# \gk{i\,\big|\,\omega_i = n,\,i \geq \ell + 1} \cdot \log\!\rk{1 + s\,\langle\,{\bm t}_{n\emptyvar}^\top,{\bm\theta}\,\rangle} \\ \\
     & =\,\sum_{n=1}^N\;\;\sum_{\ell=0}^{\MM-1}\,\Lambda_n^{(\ell,\MM,N)}({\bm\theta}) \cdot \log\!\rk{1 + s\,\langle\,{\bm t}_{n\emptyvar}^\top,{\bm\theta}\,\rangle}\,
       =\,d^{(\MM)}_{\textnormal{cur}}(s,{\bm\theta})
  \end{align*}

\noindent
 since $\Lambda^{(\MM,\MM,N)}_n = 0$.
\end{proof}

\noindent
 In order to simplify notation, we introduce formally the ``linear equity curve" for \\
 $1 \leq m \leq n \leq K\,,\ \omega \in \Omega^{(\MM)} = \big\{1,\ldots,N\big\}^{\MM}$\ and\ ${\bm\theta} \in \Se^{\KK-1}_1$:
 \begin{align}\label{eq:lin-equity-curve} 
    \linEQ^n_m({\bm\theta},\omega)\,:=\,\sum\limits^n_{j=m}\ \langle\,{\bm t}_{\omega_j\emptyvar}^\top,\, {\bm\theta}\,\rangle
 \end{align}
 Then we obtain similarly to the first topping point $\ell^{\ast} = \ell^{\ast}({\vek},\omega)$ of the $\TWR$--equity curve \eqref{eq:5.1-TWR}
 (cf. Definition~\ref{def:l*}) a first topping point for the linear equity:
\begin{definition}\label{def:first linEq} 
  \textbf{(first linear equity topping point)} \\ [0.1cm]
  For fixed\ $\omega \in \Omega^{(\MM)}$ and ${\bm\theta} \in \Se^{\KK-1}_1$\ define\ $\Lhat^{\ast}=\Lhat^{\ast}({\bm\theta},\omega) \in \{0,\ldots,\MM\}$ with
  \begin{enumerate}
  \item $\Lhat^{\ast} = 0$ in case $\max\limits_{1\,\leq\,\ell\,\leq\,\MM}\ \linEQ^{\ell}_1({\bm\theta},\omega) \leq 0$
  \item and otherwise choose\ $\Lhat^{\ast} \in \{1,\ldots,\MM\}$ such that
        \begin{align}\label{eq:def5.9(b)} 
           \linEQ^{\Lhat^{\ast}}_1({\bm\theta},\omega)\,=\,
           \max\limits_{1\,\leq\,\ell\,\leq\,\MM}\ \linEQ^{\ell}_1({\bm\theta},\omega) > 0\,,
        \end{align}
        where $\Lhat^{\ast}$ should be minimal with that property.
  \end{enumerate}
\end{definition}
\vspace{0.1cm}\noindent
 Let us discuss\ $\Lhat^{\ast}=\Lhat^{\ast}({\bm\theta},\omega)$ further for some fixed $\omega$. By Definition~\ref{def:first linEq}, in case
 $\Lhat^{\ast} \geq 1$, we get
 \begin{align}\label{eq:5.21-linEquity} 
     \linEQ^{\Lhat^{\ast}}_k({\bm\theta},\omega) > 0 \quad\text{for}\quad k=1,\ldots,\Lhat^{\ast}
 \end{align}
 since $\Lhat^{\ast}$ is the first time the run-up of the linear equity topped and, in case $\Lhat^{\ast} < \MM$
 \begin{align}\label{eq:5.22-linEquity} 
    \linEQ^{\widetilde{k}}_{\Lhat^{\ast}+1}({\bm\theta},\omega) \leq 0 \quad\text{for}\quad \widetilde{k} = \Lhat^{\ast} + 1,\ldots,\MM\,.
 \end{align}
\noindent
 Hence we conclude that\ $\omega \in \Omega^{(\MM)}$ satisfies\ $\Lhat^{\ast}({\bm\theta},\omega) = \ell$\ if and only if
 \begin{align}\label{eq:ToppingPoint} 
    \sum\limits_{j=k}^\ell \langle\,{\bm t}_{\omega_j\emptyvar}^\top,\, {\bm\theta}\,\rangle\,>\,0\                \text{for\,$k=1,\ldots,\ell$}
    \quad\text{and}\quad
    \sum\limits_{j=\ell+1}^{\tilde{k}} \langle\,{\bm t}_{\omega_j\emptyvar}^\top,\, {\bm\theta}\,\rangle\,\leq\,0\ \text{for\,$\tilde{k}=\ell+1,\ldots,\MM$}\,.
 \end{align}
 Therefore \eqref{eq:Lambda} simplifies to
 \begin{align}\label{eq:Lambda5.13} 
   \Lambda_n^{(\ell,\MM,N)}({\bm\theta})\,=\!\!\!\sum\limits_{\substack{\omega\in\Omega^{(\MM)} \\ \Lhat^{\ast}({\bm\theta},\omega) = \ell}}
   \P\big(\{\omega\}\big) \cdot \# \gk{i\,\big|\,\omega_i = n,\,i \geq \ell + 1}\,.
 \end{align}

\noindent
 Furthermore, according to \eqref{eq:TWR-5.10} and \eqref{eq:twr k}, for small $s > 0$\,,\ $\ell^{\ast}\, \text{ad}\ \Lhat^{\ast}$
 coincide, i.e.
 \begin{align}\label{eq:TWR5.10-5.11} 
    \Lhat^{\ast}({\bm\theta},\omega)\,=\,\ell^{\ast}(s{\bm\theta},\omega) \quad\text{for all}\quad s \in (0,\varepsilon]\,.
 \end{align}

\noindent
 A very similar argument as the proof of Theorem~\ref{theo:EDD} yields:
\begin{theorem}\label{theo:ER} 
  In the situation of Theorem~\ref{theo:EDD} for ${\bm\theta} \in \Se_1^{\KK-1}$ and all $s \in (0,\varepsilon]$
  \begin{align}\label{eq:ER} 
     \E\ek{\cU_{\textnormal{run}}^{(\MM)}(s{\bm\theta},\cdot)} = u_{\textnormal{run}}^{(\MM)}(s,{\bm\theta}) :=
         \sum_{n=1}^N\,\rk{\sum_{\ell=0}^\MM \Upsilon_n^{(\ell,\MM,N)}({\bm\theta})} \cdot \log\!\rk{1 + s\, \langle\,{\bm t}_{n\emptyvar}^\top,{\bm\theta}\,\rangle}
  \end{align}
  holds, where $\Upsilon_n^{(0,\MM,N)}:=0$ is independent from ${\bm\theta}$ and for $\ell\in\{1,\ldots,\MM\}$ the functions
  $\Upsilon_n^{(\ell,\MM,N)}({\bm\theta}) \ge 0$ are given as
  \begin{align}\label{eq:R_n^lMN} 
    \Upsilon_n^{(\ell,\MM,N)}({\bm\theta}) := \!\!\!\sum_{\substack{\omega\in\Omega^{(\MM)} \\ \Lhat^{\ast}({\bm\theta},\omega) = \ell}}
     \P\big(\{\omega\}\big)\cdot \# \gk{i\,\big|\,\omega_i=n\,,\,i\leq\ell}\,.
  \end{align} 
\end{theorem}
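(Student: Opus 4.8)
The plan is to follow the proof of Theorem~\ref{theo:EDD} almost verbatim, with the head of the trade sequence playing the role of the tail. First I would start from the definition of the run--up log series together with the case distinction \eqref{eq:R}, which gives, for ${\bm\theta} \in \Se_1^{\KK-1}$ and $s \in (0,\varepsilon]$,
\begin{align*}
  \E\ek{\cU_{\textnormal{run}}^{(\MM)}(s{\bm\theta},\cdot)}\,=\,\sum_{\ell=1}^{\MM}\,\sum_{\substack{\omega\in\Omega^{(\MM)}\\ \ell^\ast(s{\bm\theta},\omega)\,=\,\ell}}\,\P\big(\{\omega\}\big)\cdot\log\TWR_1^{\ell}(s{\bm\theta},\omega),
\end{align*}
since the case $\ell^\ast = 0$ contributes nothing. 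Writing $\log\TWR_1^{\ell}(s{\bm\theta},\omega) = \sum_{i=1}^{\ell}\log\!\rk{1 + s\langle\,{\bm t}_{\omega_i\emptyvar}^\top,{\bm\theta}\,\rangle}$ brings this into the same shape as in \eqref{bewzeile}.

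Next I would rewrite the condition $\ell^\ast(s{\bm\theta},\omega) = \ell$ as the purely linear condition $\Lhat^{\ast}({\bm\theta},\omega) = \ell$ from \eqref{eq:ToppingPoint}, which is legitimate for all $s \in (0,\varepsilon]$ thanks to \eqref{eq:TWR-5.10}, \eqref{eq:twr k} and the resulting identity \eqref{eq:TWR5.10-5.11}. Then, for fixed $\ell$ and fixed $\omega$ with $\Lhat^{\ast}({\bm\theta},\omega) = \ell$, I would regroup $\sum_{i=1}^{\ell}\log\!\rk{1 + s\langle\,{\bm t}_{\omega_i\emptyvar}^\top,{\bm\theta}\,\rangle}$ by collecting equal indices, so that it equals $\sum_{n=1}^N \#\gk{i\,\big|\,\omega_i = n,\,i\leq\ell}\cdot\log\!\rk{1 + s\langle\,{\bm t}_{n\emptyvar}^\top,{\bm\theta}\,\rangle}$. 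Interchanging the finite sums over $n$, $\ell$ and $\omega$ and absorbing the weights $\P(\{\omega\})$ into the coefficient then produces precisely $\sum_{n=1}^N\rk{\sum_{\ell=1}^{\MM}\Upsilon_n^{(\ell,\MM,N)}({\bm\theta})}\log\!\rk{1 + s\langle\,{\bm t}_{n\emptyvar}^\top,{\bm\theta}\,\rangle}$ with $\Upsilon_n^{(\ell,\MM,N)}$ as in \eqref{eq:R_n^lMN}; the value $\ell = 0$ contributes nothing, matching the convention $\Upsilon_n^{(0,\MM,N)} := 0$, and $\Upsilon_n^{(\ell,\MM,N)}({\bm\theta}) \ge 0$ is immediate as a sum of nonnegative terms.

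Since all sums are finite there are no convergence issues, so the only genuinely delicate point is the passage from the $\TWR$ topping point $\ell^\ast(s{\bm\theta},\omega)$ to the linear topping point $\Lhat^{\ast}({\bm\theta},\omega)$ for small $s$ --- but this has already been carried out in \eqref{eq:5.21-linEquity}--\eqref{eq:TWR5.10-5.11}, so what remains is pure bookkeeping. I expect the final write-up to be essentially a word-for-word mirror of the proof of Theorem~\ref{theo:EDD}, with ``$i \geq \ell+1$'' replaced by ``$i \leq \ell$'' and the tail product $\TWR_{\ell+1}^{\MM}$ replaced by the head product $\TWR_1^{\ell}$ throughout.
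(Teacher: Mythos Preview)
Your proposal is correct and follows precisely the approach the paper intends: the paper does not even spell out a separate proof but simply remarks that ``a very similar argument as the proof of Theorem~\ref{theo:EDD} yields'' the result. Your mirror of that proof --- starting from \eqref{eq:R}, replacing $\ell^{\ast}(s{\bm\theta},\omega)$ by $\Lhat^{\ast}({\bm\theta},\omega)$ via \eqref{eq:TWR5.10-5.11}, and regrouping the head product $\TWR_1^{\ell}$ by index counts $\#\{i\mid \omega_i=n,\,i\leq\ell\}$ --- is exactly this similar argument.
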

\begin{remark}\label{rem:drawdownlog-series} 
  Again, we immediately obtain a first order approximation for the expected current drawdown log series. For $s \in (0,\varepsilon]$
  \begin{align}\label{eq:log-series} 
      \E\ek{\cDD^{(\MM)}(s{\bm\theta},\cdot)}\,\approx\,\tilde{d}_{\textnormal{cur}}^{(\MM)}(s,{\bm\theta})\,:=\, s \cdot \sum\limits^N_{n=1}\,
      \rk{\sum\limits^\MM_{\ell=0}\ \Lambda_n^{\rk{\ell,\MM,N}}({\bm\theta})} \cdot \langle\,{\bm t}_{n\emptyvar}^\top,{\bm\theta}\,\rangle
  \end{align}
  holds. Moreover, since
  $\cDD^{(\MM)}({\vek},\omega) \leq \cD^{(\MM)}({\vek},\omega) \leq 0\,,\ d_{\textnormal{cur}}^{(\MM)}(s,{\bm\theta}) \leq d^{(\MM)}(s,{\bm\theta}) \leq 0$
  and $\tilde{d}_{\textnormal{cur}}^{(\MM)}(s,{\bm\theta}) \leq \tilde{d}^{(\MM)}(s,{\bm\theta}) \leq 0$ holds as well.
\end{remark}
 As discussed  in Section \ref{sec:4} for the down-trade log series,  we also want to study the current drawdown log series
 (\ref{current drawdown log series}) with respect to admissible convex risk measures.
\vspace*{0.3cm}
\begin{theorem}\label{theo4:ACRM} 
  For a trading game as in Setup~\ref{setup:Z} satisfying Assumption~\ref{no-risk-free-investment} the \\ function
  \begin{align}\label{eq:varACRM4} 
    \rho_{\textnormal{cur}}({\vek})\,=\,\rho^{(\MM)}_{\textnormal{cur}}({\vek})\,:=\,-\,\E\ek{\cD_{\textnormal{cur}}^{(\MM)}({\vek},\cdot)}
    \geq 0\,,\quad {\vek} \in \overset{\circ}{\msupp}\,,
  \end{align}
  is an admissible convex risk measure (ACRM).
\end{theorem}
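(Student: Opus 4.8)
The plan is to verify the three defining properties (a), (b), (c) of an admissible convex risk measure from Definition~\ref{def1:admissConvex} for $\rho_{\textnormal{cur}}$, mirroring closely the structure of the proof of Theorem~\ref{theo1:ACRM} for $\rho_{\textnormal{down}}$, since $\cDD^{(\MM)}$ differs from $\cD^{(\MM)}$ only by taking a minimum over more sub-equity-curves.

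\textbf{Property (a).} The domain $\cQ = \overset{\circ}{\msupp}$ is convex with $0\in\overset{\circ}{\msupp}$ by Lemma~\ref{lem:convexity}. From \eqref{current drawdown log series}, for every $\omega\in\Omega^{(\MM)}$ and ${\vek}\in\overset{\circ}{\msupp}$ one has $\cDD^{(\MM)}({\vek},\omega)\le 0$, so $\rho_{\textnormal{cur}}({\vek})\ge 0$; and since $\TWR^\ell_m(0,\omega)=1$ for all $\ell,m$, we get $\cDD^{(\MM)}(0,\omega)=0$, hence $\rho_{\textnormal{cur}}(0)=0$.

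\textbf{Property (b): continuity and convexity.} For fixed $\omega$, each map ${\vek}\mapsto \log\TWR^\MM_\ell({\vek},\omega) = \sum_{j=\ell}^\MM \log(1+\langle {\bm t}_{\omega_j\emptyvar}^\top,{\vek}\rangle)$ is concave (composition of the concave $\log$ with affine maps), and so is ${\vek}\mapsto\log\min\{1,\TWR^\MM_\ell({\vek},\omega)\}=\min\{0,\log\TWR^\MM_\ell({\vek},\omega)\}$. Then $\cDD^{(\MM)}({\vek},\omega)=\min_{1\le\ell\le\MM}\min\{0,\log\TWR^\MM_\ell({\vek},\omega)\}$ is a minimum of finitely many concave functions, hence concave, and clearly continuous in ${\vek}$. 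Taking expectations (a finite convex combination over $\omega$) preserves concavity and continuity, so $-\rho_{\textnormal{cur}}=\E[\cDD^{(\MM)}(\cdot,\cdot)]$ is concave and continuous, i.e. $\rho_{\textnormal{cur}}$ is convex and continuous.

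\textbf{Property (c): strict monotonicity along rays.} It suffices, exactly as in the proof of Theorem~\ref{theo1:ACRM}, to exhibit for each fixed ${\bm\theta}_0\in\Se_1^{\KK-1}$ at least one $\overline\omega\in\Omega^{(\MM)}$ for which $s\mapsto\cDD^{(\MM)}(s{\bm\theta}_0,\overline\omega)$ is strictly concave (equivalently strictly decreasing toward $-\infty$) on $\{s>0:s{\bm\theta}_0\in\overset{\circ}{\msupp}\}$, since all other summands contribute concave, hence concave-preserving, terms and every summand is non-positive. Using Assumption~\ref{no-risk-free-investment} pick $i_0=i_0({\bm\theta}_0)$ with $\langle{\bm t}_{i_0\emptyvar}^\top,{\bm\theta}_0\rangle<0$ and take $\overline\omega=(i_0,\ldots,i_0)$. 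Then $\TWR^\MM_\ell(s{\bm\theta}_0,\overline\omega)=(1+s\langle{\bm t}_{i_0\emptyvar}^\top,{\bm\theta}_0\rangle)^{\MM-\ell+1}<1$ for $s>0$ and all $\ell$, so the minimizing index is $\ell=1$ and
\begin{align*}
  \cDD^{(\MM)}(s{\bm\theta}_0,\overline\omega)\,=\,\MM\cdot\log\bigl(1+s\,\langle{\bm t}_{i_0\emptyvar}^\top,{\bm\theta}_0\rangle\bigr)\,,
\end{align*}
which is strictly concave and strictly decreasing in $s>0$. Hence $\E[\cDD^{(\MM)}(s{\bm\theta}_0,\cdot)]$ is strictly concave along the ray, so $s\mapsto\rho_{\textnormal{cur}}(s{\bm\theta}_0)$ is strictly convex and, combined with $\rho_{\textnormal{cur}}(0)=0$ (as in Remark~\ref{rem:AdmissStrict}), strictly increasing; in particular $\rho_{\textnormal{cur}}({\vek})>0$ for ${\vek}\neq 0$.

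The argument is essentially a transcription of Theorem~\ref{theo1:ACRM}, so no step is genuinely hard; the only point deserving slight care is confirming that $\cDD^{(\MM)}$ remains concave after the extra minimization over $\ell$, i.e. that the minimum of the finitely many concave curves $\{0\}\cup\{\log\TWR^\MM_\ell({\vek},\omega):1\le\ell\le\MM\}$ is again concave — which is standard — and then checking, for property (c), that introducing the minimum over $\ell$ does not destroy the strict concavity of the single chosen summand, which it does not because for $\overline\omega=(i_0,\ldots,i_0)$ the minimum is attained at $\ell=1$ and equals the already strictly concave full-product expression.
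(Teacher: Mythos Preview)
Your proposal is correct and follows exactly the approach the paper intends: the paper's own proof is a single sentence stating that the proof of Theorem~\ref{theo1:ACRM} can be adapted almost literally, and you have carried out precisely that adaptation, including the one extra observation needed --- that the additional minimum over $\ell$ in \eqref{current drawdown log series} preserves concavity because a pointwise minimum of finitely many concave functions is concave. Your verification of (c) via $\overline\omega=(i_0,\ldots,i_0)$ and the identification of the minimizing index $\ell=1$ is exactly the right specialization.
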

\begin{proof}
  It is easy to see that the proof of Theorem \ref{theo1:ACRM} can almost literally be adapted to the current drawdown case.

\end{proof}


  \begin{figure}[htb]
    \centering
    \begin{minipage}[c]{0.95\linewidth}
        \centering
\includegraphics[width=0.9\textwidth]{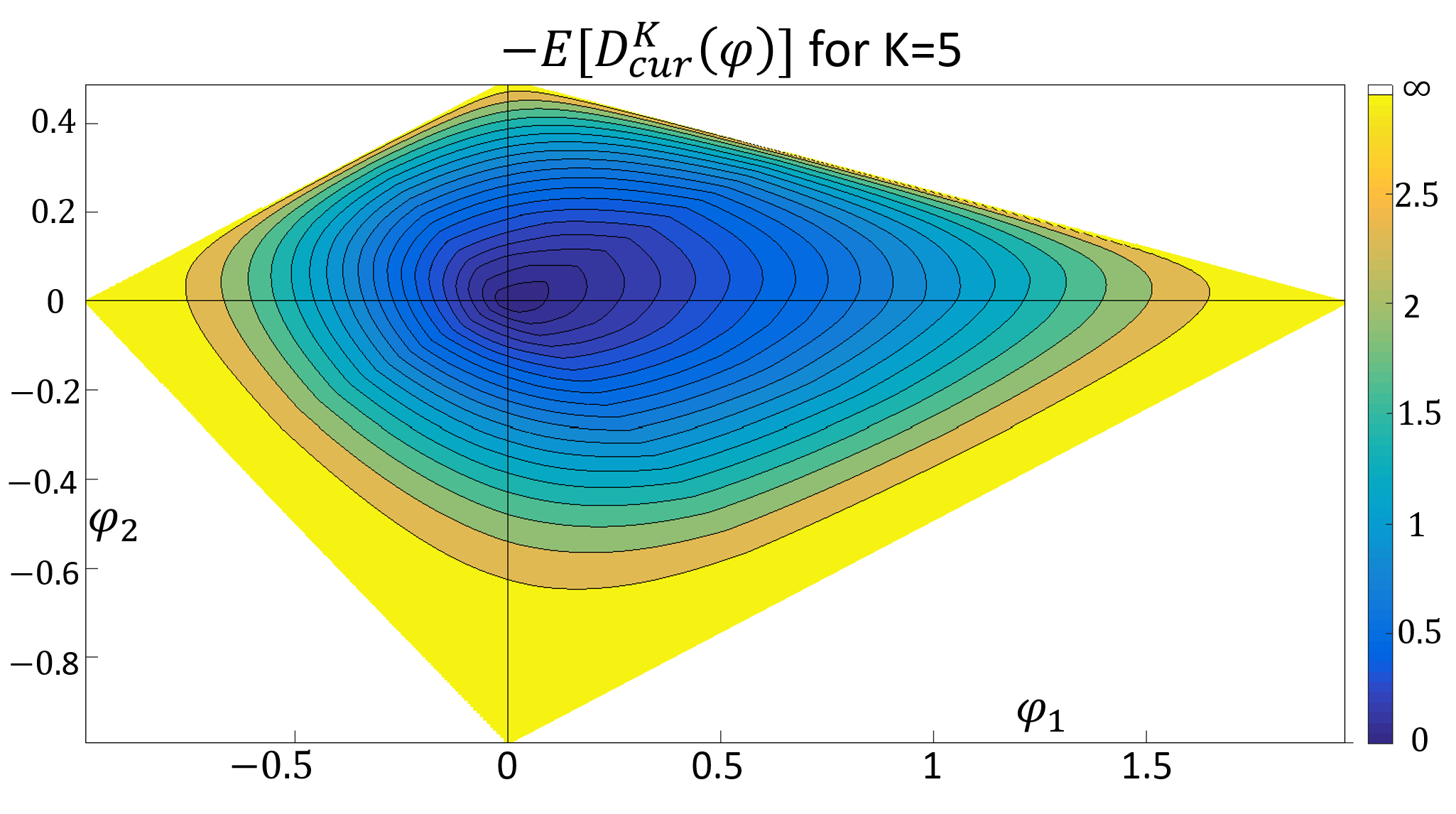}
    \end{minipage}
    \caption{Contour levels for $\rho^{(K)}_{cur}$ from \eqref{eq:varACRM4} with $\MM=5$ for
             Example~\ref{exa:risk measures} }
    \label{figExpected-LogDownTrades}
  \end{figure}


\noindent
 Confer Figure~\ref{figExpected-LogDownTrades} for an illustration of $\rho_{\textnormal{cur}}$. Compared to $\rho_{\textnormal{down}}$ in
 Figure~\ref{LogDownTrades} the contour plot looks quite similar, but near $0 \in \R^{\KK}$  obviously $\rho_{\textnormal{cur}}$ grows faster.
 Similarly, we obtain an ACRM for the first order approximation $\tilde{d}_{\textnormal{cur}}^{(\MM)}(s,{\bm\theta})$ in (\ref{eq:log-series}):         \newpage
\begin{theorem}\label{theo5:ACRM} 
  For the trading game of Setup~\ref{setup:Z} satisfying Assumption~\ref{no-risk-free-investment} the function\
  $\rho_{\textnormal{cur}X}\!:\,\R^\KK \to \R^+_0$,
  \begin{align*}
     \rho_{\textnormal{cur}X}({\vek}) = \rho^{(\MM)}_{\textnormal{cur}X}(s{\bm\theta})\,:=\,-\,\tilde{d}_{\textnormal{cur}}^{(\MM)}(s,{\bm\theta})\,
     =\,-\,s \cdot L_{\textnormal{cur}}^{(\MM,N)}({\bm\theta}) \geq 0\,,\; s \geq 0 \quad\text{and}\quad {\bm\theta} \in \Se_1^{\KK-1}
  \end{align*}
  with
  \begin{align}\label{eq:theorem5.9}  
     &L_{\textnormal{cur}}^{(\MM,N)}({\bm\theta})\,:=\,
         \sum_{\ell=0}^{\MM-1}\;
         \sum_{\substack{\omega\in\Omega^{(\MM)} \\ \Lhat^{\ast}({\bm\theta},\omega) = \ell}}
      \P\big(\{\omega\}\big) \cdot \sum_{i=\ell+1}^\MM\,\langle\,{\bm t}_{\omega_i\emptyvar}^\top,\, {\bm\theta}\,\rangle
  \end{align}
  is an admissible convex risk measure (ACRM) according to Definition~\ref{def1:admissConvex} which is moreover positive homogeneous.
\end{theorem}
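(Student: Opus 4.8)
The plan is to follow the template of the proof of Theorem~\ref{theo2:ACRM} for $\rho_{\textnormal{down}X}$; the one genuinely new ingredient is a short ``telescoping'' identity that rewrites $L_{\textnormal{cur}}^{(\MM,N)}$ as an $\omega$--wise maximum of linear functionals, after which continuity, convexity and positive homogeneity become transparent.

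Positive homogeneity is immediate: $\rho_{\textnormal{cur}X}(s{\bm\theta}) = -\,s\,L_{\textnormal{cur}}^{(\MM,N)}({\bm\theta})$ is linear in $s$, so $\rho_{\textnormal{cur}X}(\lambda{\vek}) = \lambda\,\rho_{\textnormal{cur}X}({\vek})$ for $\lambda > 0$. It thus remains to verify \textit{(a)}, \textit{(b)}, \textit{(c)} of Definition~\ref{def1:admissConvex}. The key step is to rewrite \eqref{eq:theorem5.9}. Using the linear equity curve \eqref{eq:lin-equity-curve}, the inner sum there is $\sum_{i=\ell+1}^{\MM}\langle\,{\bm t}_{\omega_i\emptyvar}^\top,{\bm\theta}\,\rangle = \linEQ_{\ell+1}^{\MM}({\bm\theta},\omega)$, and by Definition~\ref{def:first linEq} together with \eqref{eq:5.21-linEquity}--\eqref{eq:5.22-linEquity} one has $\linEQ_{1}^{\Lhat^{\ast}({\bm\theta},\omega)}({\bm\theta},\omega) = \max_{0\le\ell\le\MM}\linEQ_{1}^{\ell}({\bm\theta},\omega)$ (with the convention $\linEQ_{1}^{0} := 0$ for the empty sum). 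Hence
\begin{align*}
  \linEQ_{\Lhat^{\ast}({\bm\theta},\omega)+1}^{\MM}({\bm\theta},\omega)
    \,=\, \linEQ_{1}^{\MM}({\bm\theta},\omega) - \max_{0\le\ell\le\MM}\linEQ_{1}^{\ell}({\bm\theta},\omega)
    \,=\, \min_{0\le\ell\le\MM}\linEQ_{\ell+1}^{\MM}({\bm\theta},\omega)
\end{align*}
with the convention $\linEQ_{\MM+1}^{\MM}:=0$. Because the $\omega$ with $\Lhat^{\ast}({\bm\theta},\omega)=\MM$ contribute $0$ to \eqref{eq:theorem5.9}, the restriction $\ell\le\MM-1$ can be dropped, and summing over the partition of $\Omega^{(\MM)}$ according to the value of $\Lhat^{\ast}$ yields, for ${\vek}=s{\bm\theta}$,
\begin{align*}
  \rho_{\textnormal{cur}X}({\vek})
    \,=\, -\,s\,L_{\textnormal{cur}}^{(\MM,N)}({\bm\theta})
    \,=\, \sum_{\omega\in\Omega^{(\MM)}}\P\big(\{\omega\}\big)\cdot
      \max_{0\le\ell\le\MM}\Big\{-\!\sum_{j=\ell+1}^{\MM}\langle\,{\bm t}_{\omega_j\emptyvar}^\top,{\vek}\,\rangle\Big\}\,.
\end{align*}
This is the analogue of the $\min\{\cdot\,,0\}$--rewriting used for $\widetilde{d}^{(\MM)}$ in \eqref{eq:using TWR_1 equivsmallf}, and obtaining it cleanly is, I expect, the only non--routine part of the proof.

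From this representation everything follows. For each fixed $\omega$ the bracketed term is the maximum of finitely many linear functionals of ${\vek}$ (the index $\ell=\MM$ contributing the identically zero one), hence a nonnegative, convex, continuous, positively homogeneous function of ${\vek}$; the convex combination with weights $\P(\{\omega\})$ inherits all these properties. This gives \textit{(a)} (in particular $\rho_{\textnormal{cur}X}(0)=0$) and \textit{(b)}. For \textit{(c)} it suffices to show $L_{\textnormal{cur}}^{(\MM,N)}({\bm\theta})<0$ for each ${\bm\theta}\in\Se_1^{\KK-1}$: by Assumption~\ref{no-risk-free-investment} pick $i_0=i_0({\bm\theta})$ with $\langle\,{\bm t}_{i_0\emptyvar}^\top,{\bm\theta}\,\rangle<0$; evaluating the representation at ${\vek}={\bm\theta}$, the summand for $\overline{\omega}=(i_0,\ldots,i_0)$ equals $-\,\MM\,\langle\,{\bm t}_{i_0\emptyvar}^\top,{\bm\theta}\,\rangle>0$ (the maximum being attained at $\ell=0$), every other summand is $\ge0$, and $\P(\{\overline{\omega}\})=p_{i_0}^{\MM}>0$, so $-\,L_{\textnormal{cur}}^{(\MM,N)}({\bm\theta})>0$; hence $s\mapsto\rho_{\textnormal{cur}X}(s{\bm\theta})=-\,s\,L_{\textnormal{cur}}^{(\MM,N)}({\bm\theta})$ is strictly increasing. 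As an alternative route one could instead copy the uniform--limit argument in the proof of Theorem~\ref{theo2:ACRM} verbatim --- passing from the concavity of $d_{\textnormal{cur}}^{(\MM)}$ near $s=0$ (Theorems~\ref{theo:EDD} and~\ref{theo4:ACRM}) to that of $\widetilde{d}_{\textnormal{cur}}^{(\MM)}$ on all of $[0,\infty)$ --- but the representation above is shorter and also supplies the continuity directly.
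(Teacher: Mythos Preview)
Your proof is correct, and in fact cleaner than the paper's. The key difference is your ``telescoping'' identity
\[
  \rho_{\textnormal{cur}X}({\vek})
  \,=\, \sum_{\omega\in\Omega^{(\MM)}}\P\big(\{\omega\}\big)\cdot
        \max_{0\le\ell\le\MM}\Big\{-\!\sum_{j=\ell+1}^{\MM}\langle\,{\bm t}_{\omega_j\emptyvar}^\top,{\vek}\,\rangle\Big\},
\]
which exhibits $\rho_{\textnormal{cur}X}$ as a convex combination of maxima of finitely many linear functionals. From this representation continuity, convexity, nonnegativity and positive homogeneity are immediate, and the argument for \textit{(c)} via $\overline{\omega}=(i_0,\ldots,i_0)$ goes through exactly as you wrote.

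The paper instead proceeds by the route you mention as an alternative: it invokes the proof of Theorem~\ref{theo2:ACRM}, arguing convexity of $\rho_{\textnormal{cur}X}$ as a uniform limit of rescalings of the concave $d_{\textnormal{cur}}^{(\MM)}$ (which equals $\E\ek{\cDD^{(\MM)}}$ near $s=0$), and then separately argues continuity of $L_{\textnormal{cur}}^{(\MM,N)}({\bm\theta})$ by observing that although the topping point $\Lhat^{\ast}({\bm\theta},\omega)$ may jump as ${\bm\theta}$ varies, the value $\sum_{i=\Lhat^{\ast}+1}^{\MM}\langle\,{\bm t}_{\omega_i\emptyvar}^\top,{\bm\theta}\,\rangle$ does not, since the summation over all $\ell$ absorbs the jump. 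Your representation makes this continuity transparent (a pointwise minimum of finitely many continuous functions) and bypasses the limit argument entirely; the paper's approach, on the other hand, emphasises the structural link between $\rho_{\textnormal{cur}}$ and its linearisation $\rho_{\textnormal{cur}X}$.
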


\begin{proof}
  We use (\ref{bewzeile})  to derive the above formula for $L_{\textnormal{cur}}^{(\MM,N)}({\bm\theta})$.
  Now most of the arguments of the proof of Theorem~\ref{theo2:ACRM} work here as well once we know that  $L_{\textnormal{cur}}^{(\MM,N)}({\bm\theta})$
  is continuous in ${\bm\theta}$. To see that, we remark once more that for the first topping point
  $\widehat\ell^{\ast}\!=\widehat\ell^{\ast}({\bm\theta},\omega)\in\{0,\ldots,\MM\}$ of the linearized equity curve
  $\sum_{j=1}^n     \,\langle\,{\bm t}_{\omega_j\emptyvar}^\top,\, {\bm\theta}\,\rangle, \quad n=1, \ldots, \MM$, the following holds
  (cf. Definition~\ref{def:first linEq} and \eqref{eq:5.22-linEquity}):
  \begin{align*}
     \linEQ^{\MM}_{\Lhat^{\ast}+1}({\bm\theta},\omega)\,=\,\sum\limits^{\MM}_{i=\Lhat^{\ast}+1}\,
     \langle\,{\bm t}_{\omega_i\emptyvar}^\top,\, {\bm\theta}\,\rangle\,\leq\,0\,.
  \end{align*}

\noindent 
 Thus
  \begin{align*}
    &    L_{\textnormal{cur}}^{(\MM,N)}({\bm\theta})\,=\,
         \sum_{\ell=0}^{\MM-1}
         \sum_{\substack{\omega\in\Omega^{(\MM)} \\
      \hat\ell^{\ast}({\bm\theta},\omega)=\ell}}
      \P\big(\{\omega\}\big) \cdot \underbrace{\sum_{i=\ell+1}^\MM\,\langle\,{\bm t}_{\omega_i\emptyvar}^\top,\, {\bm\theta}\,\rangle}_{\le 0} \le 0\,.
\end{align*} 

\noindent
 Although the topping point $\hat\ell^{\ast}({\bm\theta},\omega)$ for $\omega \in \Omega^{(\MM)}$ may jump when $\bm\theta$ is varied in case\
 $\sum_{i\,=\,\Lhat^{\ast}+1}^j\,\langle\,{\bm t}_{\omega_i\emptyvar}^\top,\, {\bm\theta}\,\rangle=0$ for some\ $j \geq \Lhat^{\ast} + 1$, i.e.
 \begin{align*}
    \sum\limits^{\MM}_{i=\Lhat^{\ast}+1}\,\langle\,{\bm t}_{\omega_i\emptyvar}^\top,\,{\bm\theta}\,\rangle\,=\,
    \sum\limits^{\MM}_{i=j}\,\langle\,{\bm t}_{\omega_i\emptyvar}^\top,\,{\bm\theta}\,\rangle\,,
 \end{align*}
 the continuity of $L_{\textnormal{cur}}^{(\MM,N)}({\bm\theta})$ is still granted since over all $\ell =0,\ldots,\MM-1$  is summed. Hence, all claims are proved.

\end{proof}

\vspace*{0.3cm}

  \begin{figure}[htb]
    \centering
    \begin{minipage}[c]{0.95\linewidth}
        \centering
\includegraphics[width=0.9\textwidth]{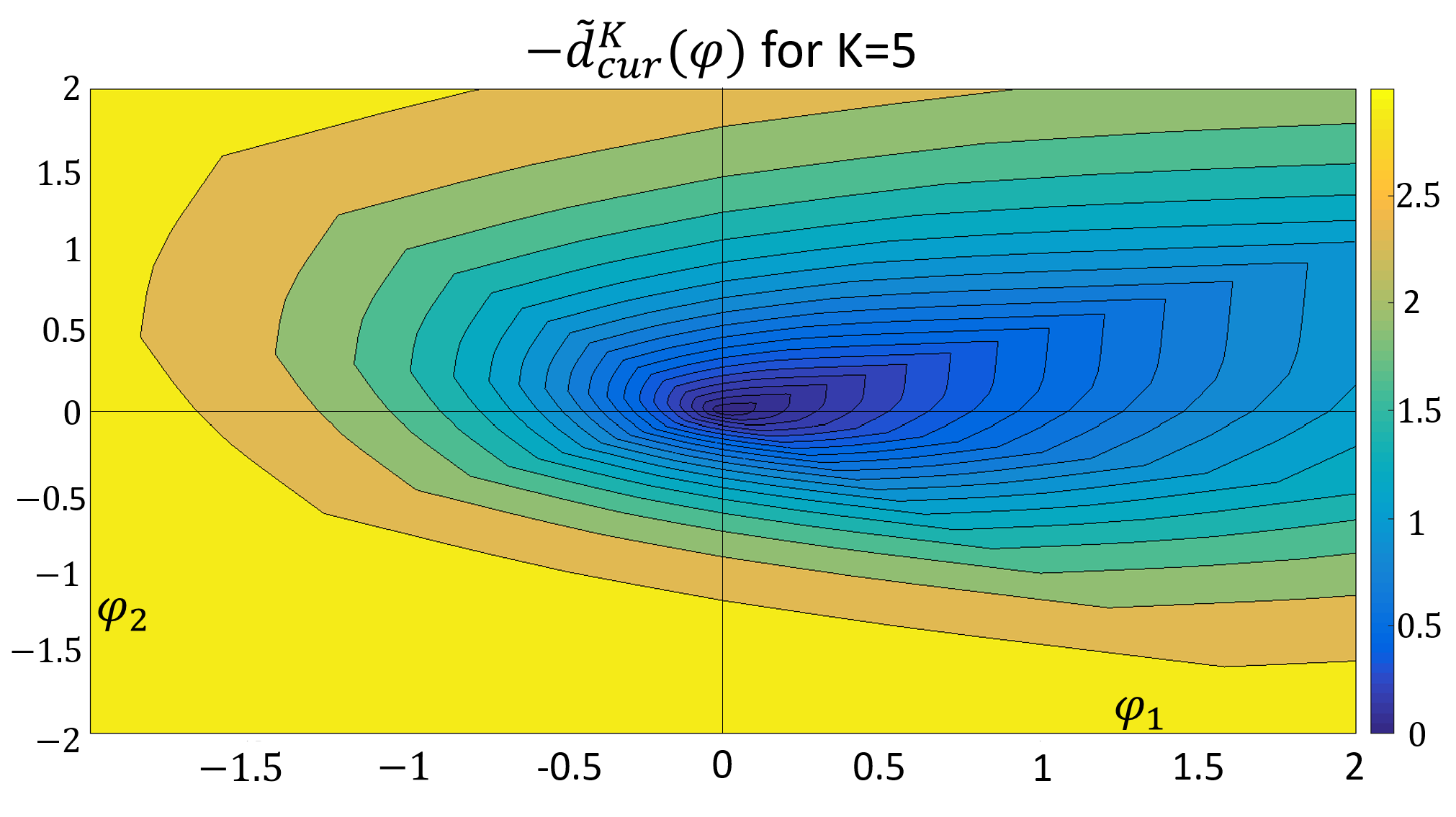}
    \end{minipage}
    \caption{Contour levels for $\rho^{(K)}_{curX}$ from Theorem~\ref{theo5:ACRM} with $\MM=5$ for
             Example~\ref{exa:risk measures} }
    \label{figSecond-LogDrawDown}
  \end{figure}


 A contour plot of $\rho_{\textnormal{cur}X}$ can be seen in Figure~\ref{figSecond-LogDrawDown}. The first topping point of the linearized equity
 curve will also be helpful  to order the risk measures $\rho_{\textnormal{cur}}$ and $\rho_{\textnormal{cur}X}$.
 Reasoning as in \eqref{eq:TWR-5.10} (see also Lemma~\ref{lem:small_f}) and using that \eqref{eq:5.22-linEquity} we obtain in case $\Lhat^{\ast}\!<\MM$
 for $s \in (0,\varepsilon]$ and\ $\widetilde{k} = \Lhat^{\ast} + 1,\ldots,\MM$ that
 \begin{align}\label{eq:5.23-linEquity} 
    \linEQ^{\widetilde{k}}_{\Lhat^{\ast}+1}({\bm\theta},\omega)\,=\,\sum\limits^{\widetilde{k}}_{j=\Lhat^{\ast}+1}\,
   \langle\,{\bm t}_{\omega_j\emptyvar}^\top,\, {\bm\theta}\,\rangle\,\leq\,0\;\Longrightarrow\;
    \sum\limits^{\widetilde{k}}_{j=\Lhat^{\ast}+1}\,\log\!\left(1\,+\,s\,\langle\,{\bm t}_{\omega_j\emptyvar}^\top,\, {\bm\theta}\,\rangle\right) \leq 0\,.
\end{align}

\noindent
 However, since $\log$ is concave, the above implication holds true even for all $s > 0$ with\ ${\vek} = s{\bm\theta} \in \overset{\circ}{\msupp}$.\
 Hence for\ $\widetilde{k} = \Lhat^{\ast} + 1,\ldots,\MM$\ and\ ${\vek} = s{\bm\theta} \in \overset{\circ}{\msupp}$
 \begin{align}\label{eq:5.24-logTWR} 
    \linEQ^{\widetilde{k}}_{\Lhat^{\ast}+1}({\bm\theta},\omega) \leq 0\;\Longrightarrow\;
  \log\TWR^{\widetilde{k}}_{\Lhat^{\ast}+1} (s{\bm\theta},\omega) \leq 0\,.
 \end{align}

\noindent
 Looking at \eqref{eq:run-up5.9} once more, we observe that the first topping point of the $\TWR$ equity curve $ \ell^{\ast}$ necessarily is
 less than or equal to $\Lhat^{\ast}$. Thus we have shown:
\begin{lemma}\label{lem:FirstTopPoint} 
  For all\ $\omega \in \Omega^{(\MM)}$ and ${\vek} = s{\bm\theta} \in \overset{\circ}{\msupp}$\ the following holds
  $\big($see also \eqref{eq:TWR5.10-5.11}$\big)$:
  \begin{align}\label{eq:5.25-1topPoint} 
     \ell^{\ast}(s{\bm\theta},\omega)\,\leq\,\Lhat^{\ast}({\bm\theta},\omega)\,.
  \end{align} 
\end{lemma}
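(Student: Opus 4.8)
The plan is to show that, beyond the linear topping point $\Lhat^{\ast}$, the $\TWR$ ``equity'' curve $\ell\mapsto\TWR^{\ell}_1(s{\bm\theta},\omega)$ never again exceeds its value at time $\Lhat^{\ast}$; once this is established, the first time this curve tops can only occur at some index $\leq\Lhat^{\ast}$, which is exactly \eqref{eq:5.25-1topPoint}. If $\Lhat^{\ast}=\MM$ there is nothing to prove, since $\ell^{\ast}\in\{0,\ldots,\MM\}$ anyway, so I would assume $\Lhat^{\ast}<\MM$ throughout.

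First I record that $\linEQ^{\widetilde{k}}_{\Lhat^{\ast}+1}({\bm\theta},\omega)\leq 0$ for every $\widetilde{k}=\Lhat^{\ast}+1,\ldots,\MM$: for $\Lhat^{\ast}\geq 1$ this is \eqref{eq:5.22-linEquity}, and for $\Lhat^{\ast}=0$ it is part \textit{(a)} of Definition~\ref{def:first linEq}. Plugging this into \eqref{eq:5.24-logTWR} --- which rests only on the concavity of $\log$ and hence is valid for \emph{all} $s$ with $s{\bm\theta}\in\overset{\circ}{\msupp}$, not just small ones --- yields $\TWR^{\widetilde{k}}_{\Lhat^{\ast}+1}(s{\bm\theta},\omega)\leq 1$ for $\widetilde{k}=\Lhat^{\ast}+1,\ldots,\MM$. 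Since $s{\bm\theta}\in\overset{\circ}{\msupp}$ every factor $1+\langle\,{\bm t}_{\omega_j\emptyvar}^\top,s{\bm\theta}\,\rangle$ is strictly positive, so $\TWR^{\Lhat^{\ast}}_1(s{\bm\theta},\omega)>0$ (reading $\TWR^{0}_1:=1$), and multiplying the previous inequality by this positive number together with the factorization $\TWR^{\widetilde{k}}_1=\TWR^{\Lhat^{\ast}}_1\cdot\TWR^{\widetilde{k}}_{\Lhat^{\ast}+1}$ gives
\begin{align*}
  \TWR^{\widetilde{k}}_1(s{\bm\theta},\omega)\,\leq\,\TWR^{\Lhat^{\ast}}_1(s{\bm\theta},\omega)\qquad\text{for all}\quad\widetilde{k}=\Lhat^{\ast}+1,\ldots,\MM\,.
\end{align*}

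It remains to read off the conclusion from Definition~\ref{def:l*}. The last display shows that $\max_{1\leq\ell\leq\MM}\TWR^{\ell}_1(s{\bm\theta},\omega)$ is already attained over $\ell\in\{1,\ldots,\Lhat^{\ast}\}$ when $\Lhat^{\ast}\geq 1$, while for $\Lhat^{\ast}=0$ it shows $\TWR^{\ell}_1(s{\bm\theta},\omega)\leq 1$ for all $\ell$. If that maximum is $\leq 1$ then $\ell^{\ast}=0\leq\Lhat^{\ast}$ by Definition~\ref{def:l*}\textit{(a)}; if it exceeds $1$ (only possible for $\Lhat^{\ast}\geq 1$), then $\ell^{\ast}$, being the \emph{minimal} index in $\{1,\ldots,\MM\}$ realizing the maximum, satisfies $\ell^{\ast}\leq\Lhat^{\ast}$ because a maximizing index already lies in $\{1,\ldots,\Lhat^{\ast}\}$. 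I do not anticipate a real obstacle: the analytic substance is entirely contained in \eqref{eq:5.24-logTWR} (i.e. ``$\log$ concave''), and what is left is the elementary remark that multiplying a positive curve by increments that are all $\leq 1$ cannot make it rise; the only points needing attention are the clean handling of the boundary cases $\Lhat^{\ast}\in\{0,\MM\}$ and invoking positivity of the $\HPR$ factors before multiplying through the inequality $\TWR^{\widetilde{k}}_{\Lhat^{\ast}+1}\leq 1$.
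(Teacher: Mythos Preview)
Your argument is correct and follows essentially the same route as the paper: the paper also deduces $\TWR^{\widetilde{k}}_{\Lhat^{\ast}+1}(s{\bm\theta},\omega)\leq 1$ from \eqref{eq:5.22-linEquity} and the concavity of $\log$ (this is exactly \eqref{eq:5.24-logTWR}), and then invokes \eqref{eq:run-up5.9} to conclude $\ell^{\ast}\leq\Lhat^{\ast}$. Your write-up is slightly more explicit than the paper's one-line ``Looking at \eqref{eq:run-up5.9} once more'' --- you spell out the multiplication by the positive quantity $\TWR^{\Lhat^{\ast}}_1$ and the minimality argument, and you handle the boundary cases $\Lhat^{\ast}\in\{0,\MM\}$ cleanly --- but the underlying idea is identical.
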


\noindent
 This observations helps to order\ $\E\ek{\cD_{\textnormal{cur}}^{(\MM)}(s{\bm\theta},\emptyvar)}$ and\ $d^{(\MM)}_{\textnormal{cur}}(s,{\bm\theta})$:
\begin{theorem}\label{theo:5.11-observation} 
  For all\ ${\vek} = s{\bm\theta} \in \overset{\circ}{\msupp}$,\ with $s > 0$ and\ ${\bm\theta} \in \Se^{\KK-1}_1$\ we have
 \begin{align}\label{eq:5.26-observation} 
     \E\ek{\cD_{\textnormal{cur}}^{(\MM)}(s{\bm\theta},\emptyvar)}\,\leq\,d^{(\MM)}_{\textnormal{cur}}(s,{\bm\theta})\,\leq\,
    \widetilde{d}^{(\MM)}_{\textnormal{cur}}(s,{\bm\theta})\,\leq\,0\,.
  \end{align}
\end{theorem}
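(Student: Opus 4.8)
The inequality chain \eqref{eq:5.26-observation} separates into three estimates, and I would dispose of the last two at once. The rightmost, $\widetilde{d}^{(\MM)}_{\textnormal{cur}}(s,{\bm\theta})\le 0$, is nothing new: writing $\widetilde{d}^{(\MM)}_{\textnormal{cur}}(s,{\bm\theta})=s\cdot L^{(\MM,N)}_{\textnormal{cur}}({\bm\theta})$, the non-positivity of $L^{(\MM,N)}_{\textnormal{cur}}$ is exactly what was checked inside the proof of Theorem~\ref{theo5:ACRM} (equivalently, $\rho_{\textnormal{cur}X}=-\widetilde{d}^{(\MM)}_{\textnormal{cur}}\ge 0$), and it is already recorded in Remark~\ref{rem:drawdownlog-series}. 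For the middle estimate $d^{(\MM)}_{\textnormal{cur}}(s,{\bm\theta})\le\widetilde{d}^{(\MM)}_{\textnormal{cur}}(s,{\bm\theta})$ I would argue exactly as $d^{(\MM)}\le\widetilde{d}^{(\MM)}$ was obtained in Corollary~\ref{coro:theorem3.8}(a): by \eqref{eq:EDD} and \eqref{eq:log-series} both sides have the form $\sum_{n=1}^N c_n({\bm\theta})\,x_n$ with the \emph{same} coefficients $c_n({\bm\theta}):=\sum_{\ell=0}^{\MM}\Lambda_n^{(\ell,\MM,N)}({\bm\theta})\ge 0$ from Theorem~\ref{theo:EDD}, and with $x_n=\log\rk{1+s\langle\,{\bm t}_{n\emptyvar}^\top,{\bm\theta}\,\rangle}$ on the left, $x_n=s\langle\,{\bm t}_{n\emptyvar}^\top,{\bm\theta}\,\rangle$ on the right; since $s{\bm\theta}\in\overset{\circ}{\msupp}$ forces $1+s\langle\,{\bm t}_{n\emptyvar}^\top,{\bm\theta}\,\rangle>0$ for every $n$, the bound $\log(1+u)\le u$ may be applied termwise and multiplied by the non-negative $c_n({\bm\theta})$.

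The leftmost estimate $\E\ek{\cDD^{(\MM)}(s{\bm\theta},\cdot)}\le d^{(\MM)}_{\textnormal{cur}}(s,{\bm\theta})$ is the one with content, and --- as in Theorem~\ref{theo:ED_smallf}(b) for the down-trade case --- it must be proved for \emph{all} $s>0$ with $s{\bm\theta}\in\overset{\circ}{\msupp}$, not just on $(0,\varepsilon]$; this is where Lemma~\ref{lem:FirstTopPoint} enters. The plan is to turn $d^{(\MM)}_{\textnormal{cur}}$ back into a sum over $\omega$: substituting the $s$-independent expression \eqref{eq:Lambda5.13} of $\Lambda_n^{(\ell,\MM,N)}$ into \eqref{eq:EDD} and reversing the ``collapse over $n$'' step from the proof of Theorem~\ref{theo:EDD} yields, for every such $s$,
\begin{align*}
  d^{(\MM)}_{\textnormal{cur}}(s,{\bm\theta})\,=\,\sum_{\ell=0}^{\MM-1}\;\sum_{\substack{\omega\in\Omega^{(\MM)}\\ \Lhat^{\ast}({\bm\theta},\omega)=\ell}}\P\big(\{\omega\}\big)\cdot\log\TWR^{\MM}_{\ell+1}(s{\bm\theta},\omega)\,,
\end{align*}
the $\ell=\MM$ term being absent since $\Lambda^{(\MM,\MM,N)}_n=0$. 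Comparing this with \eqref{eq:EDD_def} and using \eqref{eq:DD} (both sums, regrouped over all $\omega\in\Omega^{(\MM)}$, run with the empty-product convention $\TWR^{\MM}_{\MM+1}\equiv 1$), the per-$\omega$ contribution to $d^{(\MM)}_{\textnormal{cur}}$ is $\P(\{\omega\})\log\TWR^{\MM}_{\Lhat^{\ast}({\bm\theta},\omega)+1}(s{\bm\theta},\omega)$ while that to $\E\ek{\cDD^{(\MM)}(s{\bm\theta},\cdot)}$ is $\P(\{\omega\})\log\TWR^{\MM}_{\ell^{\ast}(s{\bm\theta},\omega)+1}(s{\bm\theta},\omega)$. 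As $\P(\{\omega\})\ge 0$, it suffices to show, for each fixed $\omega$,
\begin{align*}
  \log\TWR^{\MM}_{\ell^{\ast}(s{\bm\theta},\omega)+1}(s{\bm\theta},\omega)\,\le\,\log\TWR^{\MM}_{\Lhat^{\ast}({\bm\theta},\omega)+1}(s{\bm\theta},\omega)\,.
\end{align*}

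For this per-$\omega$ inequality I would invoke Lemma~\ref{lem:FirstTopPoint}, i.e. $\ell^{\ast}(s{\bm\theta},\omega)\le\Lhat^{\ast}({\bm\theta},\omega)=:\ell$, and factor (all $\TWR$-factors being positive on $\overset{\circ}{\msupp}$)
\begin{align*}
  \TWR^{\MM}_{\ell^{\ast}+1}(s{\bm\theta},\omega)\,=\,\TWR^{\ell}_{\ell^{\ast}+1}(s{\bm\theta},\omega)\cdot\TWR^{\MM}_{\ell+1}(s{\bm\theta},\omega)\,,
\end{align*}
the first factor being the empty product $1$ when $\ell^{\ast}=\ell$. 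It then remains to check $\TWR^{\ell}_{\ell^{\ast}+1}(s{\bm\theta},\omega)\le 1$; but by Definition~\ref{def:l*} the index $\ell^{\ast}=\ell^{\ast}(s{\bm\theta},\omega)$ realizes the running maximum $\TWR^{\ell^{\ast}}_1(s{\bm\theta},\omega)=\max_{0\le j\le\MM}\TWR^{j}_1(s{\bm\theta},\omega)$ (with $\TWR^{0}_1\equiv 1$), so $\TWR^{\Lhat^{\ast}}_1(s{\bm\theta},\omega)\le\TWR^{\ell^{\ast}}_1(s{\bm\theta},\omega)$ and hence $\TWR^{\ell}_{\ell^{\ast}+1}(s{\bm\theta},\omega)=\TWR^{\Lhat^{\ast}}_1(s{\bm\theta},\omega)/\TWR^{\ell^{\ast}}_1(s{\bm\theta},\omega)\le 1$. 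Summing the per-$\omega$ inequalities against the weights $\P(\{\omega\})$ then gives the leftmost estimate and closes the chain. I expect the only real friction to be organisational --- keeping the degenerate topping-point cases consistent (namely $\ell^{\ast}=\MM$, which by Lemma~\ref{lem:FirstTopPoint} forces $\Lhat^{\ast}=\MM$ and makes both per-$\omega$ contributions $0$, and $\Lhat^{\ast}=\MM$ with $\ell^{\ast}<\MM$, where the right contribution is $0$ and one only needs the trivial $\cDD^{(\MM)}(s{\bm\theta},\omega)\le 0$) and aligning the empty-product convention with the normalisation $\Lambda^{(\MM,\MM,N)}_n=0$ so that the $\omega$-by-$\omega$ bookkeeping is valid for all $s>0$ rather than only on $(0,\varepsilon]$. (One can also bypass Lemma~\ref{lem:FirstTopPoint} in the per-$\omega$ step via the identity $\cDD^{(\MM)}(s{\bm\theta},\omega)=\min\gk{0,\,\min_{1\le j\le\MM}\log\TWR^{\MM}_j(s{\bm\theta},\omega)}$, since $\Lhat^{\ast}({\bm\theta},\omega)+1$ is among the indices $1,\ldots,\MM$ whenever $\Lhat^{\ast}<\MM$.)
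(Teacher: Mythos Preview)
Your proof is correct and follows essentially the same route as the paper: the last two inequalities are handled identically (via $\log(1+u)\le u$ termwise against the non-negative coefficients $\sum_\ell\Lambda_n^{(\ell,\MM,N)}$, and Remark~\ref{rem:drawdownlog-series}), and the leftmost inequality is obtained by regrouping both sides as sums over $\omega\in\Omega^{(\MM)}$ indexed by their respective topping points $\ell^{\ast}(s{\bm\theta},\omega)$ and $\Lhat^{\ast}({\bm\theta},\omega)$, then comparing the per-$\omega$ contributions via Lemma~\ref{lem:FirstTopPoint}. Your write-up is in fact more explicit than the paper's, which compresses the per-$\omega$ step $\log\TWR^{\MM}_{\ell^{\ast}+1}\le\log\TWR^{\MM}_{\Lhat^{\ast}+1}$ into a single annotated inequality; your factorisation $\TWR^{\MM}_{\ell^{\ast}+1}=\TWR^{\Lhat^{\ast}}_{\ell^{\ast}+1}\cdot\TWR^{\MM}_{\Lhat^{\ast}+1}$ together with the running-maximum property of $\ell^{\ast}$ is exactly the missing justification, and your handling of the degenerate cases $\ell^{\ast}=\MM$ and $\Lhat^{\ast}=\MM$ is correct.
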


\begin{proof}
  Using \eqref{eq:EDD_def} for ${\vek}=s{\bm\theta} \in \overset{\circ}{\msupp}$
 \begin{align*}\begin{array}{rc@{\hspace{0.2cm}}l}
     \displaystyle \E\ek{\cD_{\textnormal{cur}}^{(\MM)}(s{\bm\theta},\emptyvar)}\,&=&\displaystyle \sum_{\ell=0}^{\MM-1}\;\sum_{\substack{\omega\in\Omega^{(\MM)} \\
                    \ell^{\ast}(s{\bm\theta},\omega)=\ell}}
                    \P\big(\{\omega\}\big) \cdot \log\TWR^{\MM}_{\ell + 1}(s{\bm\theta},\omega)\,.
\\ [1.0cm]
  & \overset{\text{\tiny Lemma~\ref{lem:FirstTopPoint}}}{\leq}
      &\displaystyle \sum_{\ell=0}^{\MM-1}\;\sum_{\substack{\omega\in\Omega^{(\MM)} \\ \hat\ell^{\ast}({\bm\theta},\omega)=\ell} }\,
                 \P\big(\{\omega\}\big) \cdot \sum\limits^{\MM}_{i=\ell + 1}\,
                  \log\!\left(1\,+\,s\,\langle\,{\bm t}_{\omega_i\emptyvar}^\top,\,{\bm\theta}\,\rangle\right)
\\ [1.0cm]
  & \overset{\text{\tiny\eqref{eq:ToppingPoint}}}{=}
      &\displaystyle \sum_{\ell=0}^{\MM-1}\hspace{-0.6cm} \sum_{\substack{\omega\in\Omega^{(\MM)} \\
                     \sum\limits_{j=k}^\ell \langle\,{\bm t}_{\omega_j\emptyvar}^\top,\, {\bm\theta}\,\rangle\,>\,0\ \text{for\,$k=1,\ldots,\ell$} \\
                    \sum\limits_{j=\ell+1}^{\tilde{k}} \langle\,{\bm t}_{\omega_j\emptyvar}^\top,\, {\bm\theta}\,\rangle\,\leq\,0\
                     \text{for\,$\tilde{k}=\ell+1,\ldots,\MM$}}}                                                                             \hspace{-1.6cm}
                    \P\big(\{\omega\}\big) \cdot \sum_{i=\ell+1}^\MM\log\!\rk{1 + s\,\langle\,{\bm t}_{\omega_i\emptyvar}^\top,\,{\bm\theta}\,\rangle}
\\ [2.6cm]
  & \overset{\text{\tiny\eqref{bewzeile}}}{=}
      & d^{(\MM)}_{\textnormal{cur}}(s,{\bm\theta})\,.
 \end{array}
 \end{align*}
 The second inequality in \eqref{eq:5.26-observation} follows as in Section~\ref{sec:3} from $\log(1 + x) \leq x$ (see \eqref{eq:EDD} and \eqref{eq:log-series})
 and the third inequality is already clear from Remark~\ref{rem:drawdownlog-series}.

\end{proof}


\vspace*{0.1cm}
            \section{Conclusion}  \label{sec:6}   

 Let us summarize the results of the last Sections. We obtained two down--trade log series related admissible convex risk measures (ACRM)
 according to Definition~\ref{def1:admissConvex}, namely                                                                             \vspace*{-0.3cm}
 \begin{align*}
    \rho_{\textnormal{down}}({\vek})\,\geq\,\rho_{\textnormal{down}X}({\vek})\,\geq\,0 \quad\text{for all}\quad {\vek} \in \overset{\circ}{\msupp}\,,
 \end{align*}
 see Corollary~\ref{coro:theorem3.8} and Theorems~\ref{theo1:ACRM} and \ref{theo2:ACRM}. Similarly we obtained two current drawdown related (ACRM),
 namely                                                                                                                              \vspace*{-0.3cm}
 \begin{align*}
    \rho_{\textnormal{cur}}({\vek})\,\geq\,\rho_{\textnormal{cur}X}({\vek})\,\geq\,0 \quad\text{for all}\quad {\vek} \in \overset{\circ}{\msupp}\,,
 \end{align*}
 cf. Theorems~\ref{theo4:ACRM} and \ref{theo5:ACRM} as well as Theorem~\ref{theo:5.11-observation}.
 Furthermore, due to Remark~\ref{rem:drawdownlog-series} we have the ordering
 \begin{align}\label{eq:curX-downX} 
    \rho_{\textnormal{cur}}({\vek})\,\geq\,\rho_{\textnormal{down}}({\vek}) \quad\text{and}\quad
     \rho_{\textnormal{cur}X}({\vek})\,\geq\,\rho_{\textnormal{down}X}({\vek})\,,\; {\vek} \in \overset{\circ}{\msupp}\,.
 \end{align}
\noindent

 All four risk measures can be used in order to apply the general framework for portfolio theory of \cite{maier:gfpt2017}.
 Since the  two approximated risk measures $\rho_{\textnormal{down}X}$ and $\rho_{\textnormal{cur}X}$ are positive homogeneous, according to \cite{maier:gfpt2017},
 the efficient portfolios will have an affine linear structure. Although we were able to prove a lot of results for these for practical applications
 relevant risk measures, there are still open questions. To state only one of them, we note that convergence of these risk measures for $K \to \infty$ is
 unclear, but empirical evidence seems to support such a statement (see Figure~\ref{figConvergence}).

  \begin{figure}[htb]
    \centering
    \begin{minipage}[c]{0.95\linewidth}
        \centering
\includegraphics[width=0.8\textwidth]{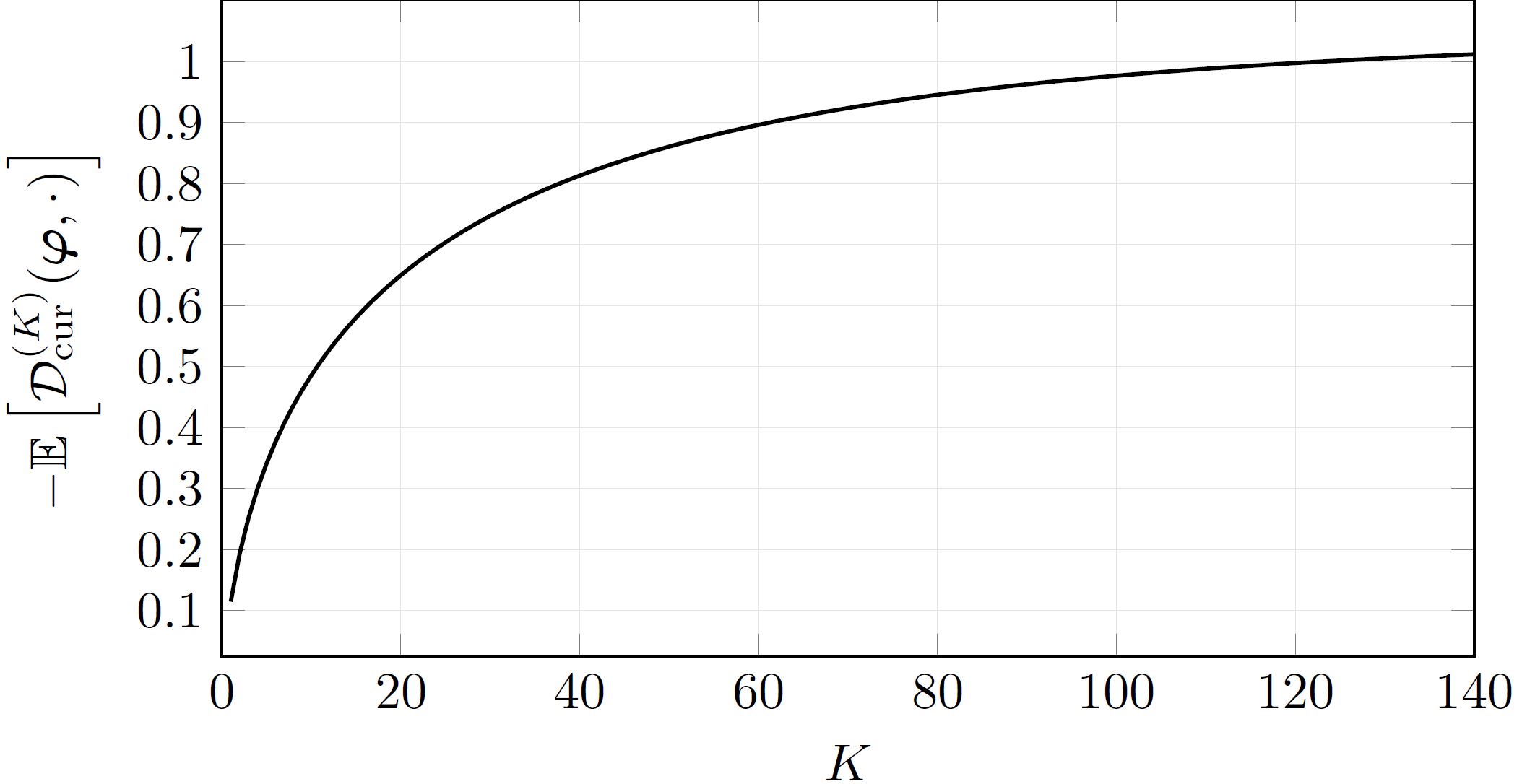}
    \end{minipage}
    \caption{Convergence of $\rho^{(\MM)}_{\textnormal{cur}}$ with fixed
             ${\vek}^{\ast}=({\vek}^{\ast}_1,\,{\vek}^{\ast}_2) ^T= \left(\,\frac{1}{5}\,,\,\frac{1}{5}\right)^T$
             for Example~\ref{exa:risk measures} }
    \label{figConvergence}
  \end{figure}

\begin{appendix}

\vspace*{0.5cm}

\section{\large\bfseries Transfer of a one--period financial market to the TWR setup} \label{secA1}

\vspace*{0.1cm}

 The aim of this appendix is to show that a one--period financial market can be transformed into the Terminal Wealth Relative (TWR)
 setting of Ralph Vince \cite{vince:mmm92} and \cite{vince:rpm09}. In particular we show how the trade return matrix $T$ of
 \eqref{eq:returnMatrix} has to be defined in order to apply the risk measure theory for current drawdowns of Section \ref{sec:4}
 and \ref{sec:5} to the general framework for portfolio theory of Maier-Paape and Zhu of Part I \cite{maier:gfpt2017}.

\vspace*{0.3cm}

\begin{setup}\label{setupA.1} 
  \textbf{(one--period financial market)} \\
  Let\ $S_t = \left(S^0_t,S^1_t,\ldots,\,S^M_t\right)\,,\ t \in \{0,1\}$\ be a financial market in a one--period economy.
 Here\ $S^0_0 = 1$ and $S^0_1 = R \geq 1$\ represents a risk free bond, whereas the other components\ $S^m_t,m = 1,\ldots,\,M$\
  represent the price of the $m$--th risky asset at time $t$ and\ $\Shat_t = \left(S^1_t,\ldots,\,S^M_t\right)$\ is the vector
  of all risky assets. $S_0$ is assumed to be a constant vector whose components are the prices of the assets at\ $t = 0$.
 Furthermore\ $\Shat_1 = \left(S^1_1,\ldots,\,S^M_1\right)$\ is assumed to be a random vector on a finite probability space\
  $\cA = \cA_N = \left\{\alpha_1,\ldots,\,\alpha_N\right\}$, i.e.\ $\Shat_1\!:\,\cA_N \to \R^M$\ represents the new price at\ $t = 1$\
  for the risky assets.
\end{setup}
\vspace*{0.1cm}

\begin{assumption}\label{assum:A.2} 
  To avoid redundant risky assets, often the matrix
  \begin{align}\label{period financial} 
     \widehat{T}_S\,=\,\begin{bmatrix} \\
                          S^1_1\!\left(\alpha_1\right) - R\,S^1_0 & & S^2_1\!\left(\alpha_1\right) - R\,S^2_0 & \ldots & S^M_1\!\!\left(\alpha_1\right) - R\,S^M_0 \\ \\
                          S^1_1\!\left(\alpha_2\right) - R\,S^1_0 & & S^2_1\!\left(\alpha_2\right) - R\,S^2_0 & \ldots & S^M_1\!\!\left(\alpha_2\right) - R\,S^M_0 \\ \\
                          \vdots                                  & & \vdots                                  &        & \vdots                                    \\ \\
                          S^1_1\!\left(\alpha_N\right) - R\,S^1_0 & & S^2_1\!\left(\alpha_N\right) - R\,S^2_0 & \ldots & S^M_1\!\!\left(\alpha_N\right) - R\,S^M_0 \\
                       \end{bmatrix} \in \R^{N \times M}
  \end{align}

\noindent
  is assumed to have full rank $\KK$, in particular $N \geq \KK$.
\end{assumption}

\noindent
  A portfolio is a column vector\ $x \in \R^{M+1}$\ whose components $x_m$ represent the investments in the $m$--th asset,\
  $m=0,\ldots,\,M$. In order to normalize that situation, we consider portfolios with unit initial cost, i.e.
  \begin{align}\label{eq:portfolio} 
     S_0 \cdot x = 1\,.
  \end{align}

\vspace*{0.1cm}\noindent
  Since\ $S^0_0 = 1$\ this implies \vspace*{-0.2cm}
  \begin{align}\label{eq:initial-cost} 
     x_0 + \Shat_0 \cdot \xhat = x_0 + \sum\limits^M_{m=1}\,S^m_0 x_m = 1\,.
  \end{align}

\noindent
 Therefore the interpretation in Table~\ref{tab:capital fraction} is obvious.

 \begin{table}[h]
 \begin{displaymath}\hspace{-0.1cm}
 \begin{tabular}{|@{\hspace{0.2cm}}c@{\hspace{0.2cm}}|l@{\hspace{0.2cm}}| }                              \hline
                   &                                                                          \\ [-0.2cm]
     $x_0$         &  portion of capital invested in bond                                    \\ [0.30cm]
     $S^m_0\,x_m$  &  portion of capital invested in $m$--th risky asset, $m=1,\ldots,\,M$   \\ [-0.2cm]
                   &                                                                          \\         \hline
 \end{tabular}
 \end{displaymath}
   \centering
   \caption{Invested capital portions}
   \label{tab:capital fraction}
 \end{table}

\noindent
 So if an investor has an initial capital of $C_{ini}$ in his depot, the invested money in the depot is divided
 as in Table~\ref{tab:moneydepot}.


 \begin{table}[h]
 \begin{displaymath}\hspace{-0.1cm}
 \begin{tabular}{|@{\hspace{0.25cm}}l@{\hspace{0.2cm}}|l@{\hspace{0.2cm}}| }                                     \hline
                           &                                                                               \\ [-0.2cm]
     $C_{ini}\,x_0$         &  cash position of the depot                                                   \\ [0.30cm]
     $C_{ini}\,S^m_0\,x_m$  &  invested money in $m$--th asset,\ $m=1,\ldots,\,M$                           \\ [0.30cm]
     $C_{ini}\,x_m$         &  amount of shares of $m$--th asset to be bought at\ $t=0\,,\ m=1,\ldots,\,M$  \\ [-0.2cm]
                           &                                                                               \\         \hline
 \end{tabular}
 \end{displaymath}
   \centering
   \caption{Invested money in depot for a portfolio $x$}
   \label{tab:moneydepot}
 \end{table}

\vspace*{0.3cm}
 Clearly\ $\left(S_1 - R\,S_0\right) \cdot x = S_1 \cdot x - R$\ is the (random) gain of the unit initial cost portfolio
 relative to the riskless bond. In such a situation the merit of a portfolio $x$ is often measured by its
 expected utility\ $\E\big[u\left(S_1 \cdot x\right)\big]$,\ where $u$ is an increasing concave utility function
 (see \cite{maier:gfpt2017}, Assumption~3.3).
 In growth optimal portfolio theory the natural logarithm\ $u = \log$\ is used yielding the optimization problem
\newpage
 \begin{align}\label{eq:natural-logarithm} 
 \begin{array}{c@{\hspace{0.4cm}}r@{\hspace{0.18cm}}c@{\hspace{0.18cm}}l}
                & \E\ek{\log\!\left(S_1 \cdot x\right)} & \overset{\text{\large!}}{=} &  \max\,,\qquad x \in \R^{M+1}, \\ [0.3cm]
    \text{s.t.} & S_0 \cdot x                                            &      =                      & 1\,.
 \end{array}
 \end{align}

 The following discussion aims to show that the above optimization problem \eqref{eq:natural-logarithm} is an alternative way
 of stating the Terminal Wealth Relative optimization problem of Vince (cf. \cite{hermes:twr2017}, \cite{vince:nmm1995}). 

\vspace*{0.2cm}\noindent
 Using\ $S^0_1 = R$\ we obtain\ $S_1 \cdot x = R\,x_0 + \Shat_1 \cdot \xhat$\ and hence with \eqref{eq:initial-cost}
 \begin{align*}
 \begin{array}{l@{\hspace{0.18cm}}c@{\hspace{0.18cm}}l}
    \E\Big[\log\!\left(S_1 \cdot x\right)\Big]
         & = &\displaystyle \E\Big[\log\!\left(R\big(1 - \Shat_0\cdot\xhat\big) + \Shat_1\cdot\xhat\right)\Big] \\ [0.5cm]
         & = &\displaystyle \sum\limits_{\alpha\,\in\,\cA_N}\,\Pe\Big(\{\alpha\}\Big) \cdot
                          \log\Big(R + \big[\Shat_1(\alpha) - R\,\Shat_0\big] \cdot \xhat\Big)\,.
 \end{array}
 \end{align*}

\noindent
 Assuming all\ $\alpha \in \cA_N$\ have the same probability (Laplace situation), i.e.
 \begin{align}\label{eq:Laplace-situation} 
   \Pe\Big(\left\{\alpha_i\right\}\Big) = \frac{1}{N} \quad\text{for all}\quad i = 1,\ldots,N\,,
 \end{align} 

 we furthermore get
 \begin{align}\label{eq:Laplace-Sum^N} 
 \begin{array}{l@{\hspace{0.18cm}}c@{\hspace{0.18cm}}l}
    \E\ek{\log\!\left(S_1 \cdot x\right)} - \log(R)
         & = &\displaystyle \frac{1}{N}\,\sum\limits^N_{i=1}\
                  \log\!\left(1 + \left[\,\frac{\Shat_1\!\left(\alpha_i\right) - R\Shat_0}{R}\right]\cdot\xhat\right)          \\ [0.5cm]
         & = &\displaystyle \frac{1}{N}\,\sum\limits^N_{i=1}\ \log\!\left(1 + \sum\limits^M_{m=1}\
                  \underbrace{\left[\frac{S^m_1\!\left(\alpha_i\right) - R\,S^m_0}{R\,S^m_0}\right]}_{=:\,t_{i,m}} \cdot
                  \underbrace{S^m_0\,x_m}_{=:\,\varphi_m}\right)\,.
 \end{array}
 \end{align}

\vspace*{0.2cm}\noindent
 This results in a ``trade return" matrix
 \begin{align}\label{eq:trade-return} 
    T = \left(t_{i,m}\right)_{\substack{1 \leq i \leq N \\ 1 \leq m \leq M}} \in \R^{N \times M}
 \end{align}
 whose entries represent discounted relative returns of the $m$--th asset for the $i$--th reali-\\zation $\alpha_i$. Furthermore,
 the column vector\ ${\bm\varphi} = \left(\varphi_m\right)_{1 \leq m \leq M} \in \R^M$\ with components\ $\varphi_m = S^m_0\,x_m$\
 has according to Table~\ref{tab:capital fraction} the interpretation given in Table~\ref{tab:TWRmodel}.


 \begin{table}[h]
 \begin{displaymath}\hspace{-0.1cm}
 \begin{tabular}{|@{\hspace{0.2cm}}c@{\hspace{0.2cm}}|l@{\hspace{0.2cm}}| }                                 \hline
                        &                                                                        \\ [-0.1cm]
     $\varphi_m$        & portion of capital invested in $m$--th risky asset,\ $m=1,\ldots,M$   \\ [-0.1cm]
                        &                                                                        \\         \hline
 \end{tabular}
 \end{displaymath}
   \centering
   \caption{Investment vector ${\bm\varphi}$ for the $\TWR$ model}
   \label{tab:TWRmodel}
 \end{table}


\vspace*{0.2cm}\noindent
 Thus we get
 \begin{align}\label{eq:column} 
 \begin{array}{l@{\hspace{0.18cm}}c@{\hspace{0.18cm}}l}
    \E\big[\log\!\left(S_1 \cdot x\right)\big] - \log(R)
         & = &\displaystyle \frac{1}{N}\,\sum\limits^N_{i=1}\
                     \log\!\left(1\,+\,\langle\,{\mathbf t}_{i\emptyvar}^\top,{\bm\varphi}\rangle_{\R^M}\right)      \\ [0.5cm]
         & = &\displaystyle \log\left(\left[\prod^N_{i=1}\,\left(1\,+\,\langle\,{\mathbf t}_{i\emptyvar}^\top,{\vek}\rangle_{\R^M}\right)
                     \right]^{\!\!1/N}\right)=   \log\!\left( \left[\TWR^{(N)}(\bm\varphi)  \right]^{\!\!1/N}\right)
 \end{array}
 \end{align}

\noindent 
 which involves the usual Terminal Wealth Relative ($\TWR$) of Ralph Vince \cite{vince:nmm1995} and therefore under the assumption of a
 Laplace situation \eqref{eq:Laplace-situation} the optimization problem \eqref{eq:natural-logarithm} is equivalent to
 \begin{align}\label{eq:TWR-Vince} 
    \TWR^{(N)}(\vek)\,\overset{\textrm{\large!}}{=} \,\max\,,\quad {\bm\varphi} \in \R^M\,.
 \end{align} 

\vspace*{0.1cm}
 Furthermore, the trade return matrix $T$ in \eqref{eq:trade-return} may be used to define admissible convex risk measures as
 introduced in Definition~\ref{def1:admissConvex} which in turn give nontrivial applications to the general framework for
 portfolio theory in Part I \cite{maier:gfpt2017}.

\vspace*{0.3cm}
 To see that, note again that by \eqref{eq:Laplace-Sum^N} any portfolio vector\ $x=\left(x_0,\widehat{x}\right)^T \in \R^{\KK}$\
 of a unit cost portfolio \eqref{eq:portfolio} is in one to one correspondence to an investment vector
 \begin{align}\label{eq:investmentVector} 
    {\vek}\,=\,\left({\vek}_m\right)_{1\,\leq\,m\,\leq\,\KK}\,=\,\left(S_0^m \cdot x_m\right)_{1\,\leq\,m\,\leq\,\KK}\,=:\,\Lambda \cdot \widehat{x}
 \end{align}
 for a diagonal matrix\ $\Lambda \in \R^{\KK\times\KK}$ with only positive diagonal entries $\Lambda_{m,m} = S^m_0$.
\noindent
Then we obtain: \vspace*{-0.18cm}
\begin{theorem}\label{theo:diagonalMatrix} 
   Let\ $\rho: \textnormal{Def}(\rho) \to \R^+_0$\ be any of our four down--trade or drawdown related risk measures\
   $\rho_{\textnormal{down}},\,\rho_{\textnormal{down}X},\,\rho_{\textnormal{cur}}$ and\ $\rho_{\textnormal{cur}X}$ (see \eqref{eq:curX-downX})
   for the trading game of Setup~\ref{setup:Z} satisfying Assumption~\ref{no-risk-free-investment}. 
   Then 
   \begin{align}\label{eq:A.11-Lambda} 
      \widehat{\rho}\left(\widehat{x}\right) := \rho\left(\Lambda \widehat{x}\right) = \rho({\vek})\,,\quad
      \widehat{x} \in \textnormal{Def}\left(\widehat{\rho}\right) := \Lambda^{-1}\,\textnormal{Def}(\rho) \subset \R^{\KK}
   \end{align}
   has the following properties: \vspace*{-0.18cm}
   \begin{enumerate} \leftskip 0.35cm
   \item[\textit{(r1)}]
         $\widehat{\rho}$ depends only on the risky part $\widehat{x}$ of the portfolio\ $x = \left(x_0,\widehat{x}\right)^T\in \R^{\KK+1}$.
   \item[\textit{(r1n)}]
         $\widehat{\rho}\left(\widehat{x}\right) = 0$\ if and only if\ $\widehat{x} = \widehat{0} \in \R^{\KK}$.
   \item[\textit{(r2)}]
         $\widehat{\rho}$ is convex in $\widehat{x}$.
   \item[\textit{(r3)}]
         The two approximations $\rho_{\textnormal{down}X}$ and $\rho_{\textnormal{cur}X}$ furthermore yield \\
         positive homogeneous $\widehat{\rho}$, i.e. $\widehat{\rho}\left(t\widehat{x}\right) = t\widehat{\rho}\left(\widehat{x}\right)$
         for all $t > 0$.
   \end{enumerate}
\end{theorem}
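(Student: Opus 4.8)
The plan is to exploit that $\Lambda \in \R^{\KK\times\KK}$ is a diagonal matrix with strictly positive diagonal, hence an invertible linear map of $\R^\KK$ onto itself fixing the origin; every assertion then follows by transporting, along the bijection $\widehat{x} \mapsto \Lambda\widehat{x} = {\vek}$ of \eqref{eq:investmentVector}, the properties of $\rho$ already established in Sections~\ref{sec:4} and~\ref{sec:5}. First I would collect what is known about $\rho$: by Theorems~\ref{theo1:ACRM}, \ref{theo2:ACRM}, \ref{theo4:ACRM} and~\ref{theo5:ACRM} each of $\rho_{\textnormal{down}},\,\rho_{\textnormal{down}X},\,\rho_{\textnormal{cur}},\,\rho_{\textnormal{cur}X}$ is an admissible convex risk measure in the sense of Definition~\ref{def1:admissConvex}, so that $\textnormal{Def}(\rho)$ is convex with $0 \in \textnormal{Def}(\rho)$, the function $\rho$ is convex and continuous, $\rho \geq 0$, and $\rho({\vek}) = 0$ holds if and only if ${\vek} = 0$ (combining Definition~\ref{def1:admissConvex}\,\textit{(a)} with \textit{(c)}); moreover $\rho_{\textnormal{down}X}$ and $\rho_{\textnormal{cur}X}$ are positive homogeneous.

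Then \textit{(r1)} is immediate from the definition $\widehat{\rho}(\widehat{x}) := \rho(\Lambda\widehat{x})$, whose right-hand side does not involve the bond component $x_0$; since $\Lambda^{-1}$ is linear and $\textnormal{Def}(\rho)$ is convex with $0 \in \textnormal{Def}(\rho)$, the transported domain $\textnormal{Def}(\widehat{\rho}) = \Lambda^{-1}\textnormal{Def}(\rho)$ is a convex subset of $\R^\KK$ containing $\widehat{0}$, as required. For \textit{(r1n)} use that $\Lambda\widehat{x} = 0 \iff \widehat{x} = \widehat{0}$ because $\Lambda$ is invertible, and combine this with $\rho({\vek}) = 0 \iff {\vek} = 0$. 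For \textit{(r2)}, $\widehat{\rho} = \rho \circ \Lambda$ is the composition of the convex function $\rho$ with the linear map $\widehat{x} \mapsto \Lambda\widehat{x}$ and is therefore convex on its convex domain $\Lambda^{-1}\textnormal{Def}(\rho)$. For \textit{(r3)}, when $\rho \in \{\rho_{\textnormal{down}X},\,\rho_{\textnormal{cur}X}\}$ and $t > 0$ one computes $\widehat{\rho}(t\widehat{x}) = \rho(\Lambda(t\widehat{x})) = \rho(t\,\Lambda\widehat{x}) = t\,\rho(\Lambda\widehat{x}) = t\,\widehat{\rho}(\widehat{x})$, using linearity of $\Lambda$ and positive homogeneity of $\rho$.

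There is no genuine obstacle here; the argument is pure bookkeeping once the correspondence \eqref{eq:investmentVector} between $\widehat{x}$ and ${\vek} = \Lambda\widehat{x}$ is in place. The only point deserving a word of care is the domain identity $\textnormal{Def}(\widehat{\rho}) = \Lambda^{-1}\textnormal{Def}(\rho)$: one should note that $\Lambda^{-1}$ is a linear homeomorphism of $\R^\KK$, so that for $\rho_{\textnormal{down}}$ and $\rho_{\textnormal{cur}}$ the open convex neighbourhood $\overset{\circ}{\msupp}$ of the origin is carried to an open convex neighbourhood of $\widehat{0}$, preserving exactly the structure needed for $\widehat{\rho}$ to be an admissible convex — and, for the two approximations, additionally positive homogeneous — risk measure on its domain.
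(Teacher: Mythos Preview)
Your proposal is correct and follows essentially the same approach as the paper: both rely on the fact that $\rho_{\textnormal{down}},\,\rho_{\textnormal{down}X},\,\rho_{\textnormal{cur}},\,\rho_{\textnormal{cur}X}$ are admissible convex risk measures (Theorems~\ref{theo1:ACRM}, \ref{theo2:ACRM}, \ref{theo4:ACRM}, \ref{theo5:ACRM}) and transport their properties through the invertible linear map $\Lambda$. The paper's own proof is in fact a one-line reference to those theorems, so your argument simply spells out the bookkeeping that the paper leaves implicit.
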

\begin{proof}
  See the respective properties of $\rho$ (cf. Theorems~\ref{theo1:ACRM}, \ref{theo2:ACRM}, \ref{theo4:ACRM} and \ref{theo5:ACRM}).
  In particular\ $\rho_{\textnormal{down}},\,\rho_{\textnormal{down}X},\,\rho_{\textnormal{cur}}$ and\ $\rho_{\textnormal{cur}X}$\
  are admissible convex risk measures according to Definition~\ref{def1:admissConvex} and thus\
  $\textit{(r1)}, \textit{(r1n)}, \text{and} \textit{(r2)}$ follow. 
\end{proof}
\begin{remark}\label{rem:theoA.3} 
  It is clear that therefore\
  $\widehat{\rho}=\widehat{\rho}_{\textnormal{down}},\,\widehat{\rho}_{\textnormal{down}X},\,\widehat{\rho}_{\textnormal{cur}}$ or\ $\widehat{\rho}_{\textnormal{cur}X}$
  can be evaluated on any set of admissible portfolios $A \subset \R^{\KK + 1}$ according to Definition~2.2 of \cite{maier:gfpt2017} if
  \begin{align*}
     \text{Proj}_{\R^{\MM}}\,A\,\subset\,\textnormal{Def}(\widehat{\rho})
  \end{align*}
  and the properties \textit{(r1)}, \textit{(r1n)}, \textit{(r2)} (and only for $\rho_{\textnormal{down}X}$ and $\rho_{\textnormal{cur}X}$ also \textit{(r3)})
  in \\ Assumption~3.1 of \cite{maier:gfpt2017} follow from Theorem~\ref{theo:diagonalMatrix}.
In particular\ $\widehat{\rho}_{\textnormal{down}X}$\ and\ $\widehat{\rho}_{\textnormal{cur}X}$\ satisfy the conditions of a deviation
  measure in \cite{rockafellar:mfpa2006} (which is defined directly on the portfolio space). 
\end{remark}

\begin{remark}\label{rem:drawdown-function} 
  Formally our drawdown or down--trade is a function of a TWR equity curve of a $K$\!--period financial market.
  But since this equity curve is obtained by drawing $K$ times stochastically independent from one and the
  same market in Setup~\ref{setupA.1}, we still can work with a one--period market model.
\end{remark}
 We want to close this section with some remarks on the often used no arbitrage condition of the one--period financial  market
 and Assumption \ref{no-risk-free-investment} which was necessary to construct admissible convex risk measures.

\begin{definition}\label{def:one-period-financial-market} 
  Let $S_t$ be a one--period financial market as in Setup~\ref{setupA.1}.
  \begin{enumerate}
  \item  A portfolio\ $x \in \R^{\KK+1}$\ is an \textbf{arbitrage} for $S_t$ if it satisfies
         \begin{align} \label{eq1a:one-period-financical-market} 
            \Big(S_1 - RS_0\Big) \cdot x \geq 0 \quad\text{and}\quad \Big(S_1 - RS_0\Big) \cdot x \not= 0\,,
         \end{align}
         or, equivalently, if\; $\xhat \in \R^{\KK}$\ is satisfying
         \begin{align}\label{eq1b:one-period-fincial} 
            \left(\Shat_1 - R\,\Shat_0\right) \cdot \xhat\,\geq\,0 \quad\text{and}\quad
            \left(\Shat_1 - R\,\Shat_0\right) \cdot \xhat\,\not=\,0\,.
         \end{align}
  \item  The market $S_t$ is said to have \textbf{no arbitrage}, if there exists no arbitrage portfolio.
  \end{enumerate}
\end{definition}
\noindent
 Once we consider the above random variables as vector\
 $\left[\,\left(\Shat_1\!\left(\alpha_i\right) - R\,\Shat_0\right) \cdot \xhat\right]_{1\,\leq\,i\,\leq\,N} \in \R^N$,\
 \eqref{eq1b:one-period-fincial} may equivalently be stated as
  \begin{align}\label{eq:random-variable} 
            \left(\Shat_1 - R\,\Shat_0\right) \cdot \xhat \in \cK \setminus \{0\}\,,
 \end{align}

\noindent
 where we used the positive cone\ $\cK := \left\{y \in \R^N\!: y_i \geq 0\;\text{for}\;i=1,\ldots,N\right\}$\ in $\R^N$.  \\
 Observe that the portfolio $\xhat = 0$ can never be an arbitrage portfolio.

\vspace*{0.1cm}\noindent
 Hence we get: \\ [0.1cm]\hspace*{0.5cm}
  market\ $S_t$\ has no arbitrage
  \begin{align}\hspace*{-4.5cm}
    & \Longleftrightarrow \quad \forall\ \xhat \in \R^M \setminus \{0\} \quad \text{holds} \quad
                         \left(\Shat_1 - R\,\Shat_0\right) \cdot \xhat \notin \cK \setminus \{0\}
      \label{eq12:market-arbitrage} 
    \\
    & \Longleftrightarrow \quad \forall\ \xhat \in \R^M \setminus \{0\} \quad \text{holds} \quad
                         \left(\Shat_1 - R\,\Shat_0\right) \cdot \xhat \notin \big(\cK\,\cup\,(-\cK)\big) \setminus \{0\}\,,
      \label{eq14:market-arbitrage}\hspace*{0.2cm} 
 \end{align}
\hspace*{0.5cm}
 where we used the negative cone\ $(-\cK) = \left\{y \in \R^N\!: y_i \le 0\;\text{for}\;i=1,\ldots,N\right\}$ in $\R^N$.

\vspace*{0.2cm}\noindent
 Note that the last equivalence leading to \eqref{eq14:market-arbitrage} follows, because with\ $\xhat \in \R^M \setminus \{0\}$\
 always\ $\left(-\xhat\right) \in \R^M \setminus \{0\}$\ also holds true. 
 According to Setup~\ref{setupA.1}, the matrix $\widehat{T}_S$ has full rank, and therefore\
 $\left(\Shat_1 - R\,\Shat_0\right) \cdot \xhat \not= 0$\ for all\ $\xhat \not= 0$\ anyway.
 Hence we proceed
 \begin{align}\label{eq:setupA1} 
 \begin{array}{@{\hspace{0.1cm}}c@{\hspace{0.2cm}}c@{\hspace{0.2cm}}l@{\hspace{0.2cm}}l@{\hspace{0.2cm}}l}
   \eqref{eq14:market-arbitrage}
     & \Longleftrightarrow & \forall\ \xhat \in \R^M \setminus \{0\} & \text{holds}\
                                       \left(\Shat_1 - R\,\Shat_0\right) \cdot \xhat \notin \big(\cK\,\cup\,(-\cK)\big)    \\ [0.3cm]
     & \Longleftrightarrow & \forall\ \xhat \in \R^M \setminus \{0\} & \text{exists some\; $\alpha_{i_0} \in \cA_N$\; with}
                                      \left(\Shat_1\!\left(\alpha_{i_0}\right) - R\,\Shat_0\right) \cdot \xhat < 0        \\ [0.3cm]
     &                     &                                         & \text{and some\; $\alpha_{j_0} \in \cA_N$\; with}
                                      \left(\Shat_1\!\left(\alpha_{j_0}\right) - R\,\Shat_0\right) \cdot \xhat > 0\,.
 \end{array}
 \end{align}

\noindent
 Observe that for all\ $\xhat \in \R^M \setminus \{0\}$\ and\ $i_0 \in \{1,\ldots,N\}$\ the following is equivalent 
 \begin{align}\label{eq6:equitable} 
 \begin{array}{l@{\hspace{0.3cm}}c@{\hspace{0.3cm}}l}
    \left(\Shat_1\!\left(\alpha_{i_0}\right) - R\,\Shat_0\right) \cdot \xhat < 0
       & \Longleftrightarrow &\displaystyle \sum\limits^M_{m=1}\ \underbrace{\left[\frac{S^m_1\!\left(\alpha_{i_0}\right) -
                                 R\,S^m_0}{R\,S^m_0}\right]}_{=\,t_{i_0,m}} \cdot \underbrace{S^m_0\,x_m}_{=\,\vek_m} < 0  \\ [0.3cm]
       & \overset{\eqref{eq:Laplace-Sum^N}}{\Longleftrightarrow}
                             &\displaystyle \langle\,{\mathbf t}_{i_0\emptyvar}^\top,{\bm\varphi}\rangle < 0\,.
 \end{array}
 \end{align}

\noindent
 Hence
 \begin{align}\label{eq:setupB1} 
 \begin{array}{@{\hspace{-0.2cm}}c@{\hspace{0.25cm}}c@{\hspace{0.25cm}}l@{\hspace{0.2cm}}l@{\hspace{0.2cm}}l@{\hspace{0.2cm}}l}
   \eqref{eq:setupA1}
     & \Longleftrightarrow  & \forall\ {\bm\varphi} \in \R^M \setminus \{0\} & \text{exists some}\;\; i_0,j_0 \in \{1,\ldots,N\}
                                                          & \text{with} & \langle\,\mathbf{t}_{i_0\emptyvar}^\top,{\bm\varphi}\rangle_{\R^M} < 0  \\ [0.3cm]
     &                      &      &                      & \text{and}  & \langle\,\mathbf{t}_{j_0\emptyvar}^\top,{\bm\varphi}\rangle_{\R^M} > 0
 \end{array}
 \end{align}
 \vspace*{-0.1cm}
 \begin{align}\label{eq:setupC1} 
 \begin{array}{@{\hspace{0.9cm}}c@{\hspace{0.2cm}}l@{\hspace{0.2cm}}l@{\hspace{0.2cm}}l@{\hspace{0.2cm}}l}
      \Longleftrightarrow & \forall\ {\bm\theta} \in {\Se}^{M-1}_1 & \text{exists some}\;\; i_0 \in \{1,\ldots,N\}
            & \text{with} & \langle\,\mathbf{t}_{i_{0}\emptyvar}^\top,{\bm\theta}\rangle_{\R^M} < 0
 \end{array}
 \end{align}
 using again the argument that\ ${\bm\theta} \in {\Se}^{M-1}_1$ also implies\ $\left(-{\bm\theta}\right) \in {\Se}^{M-1}_1$.
 To conclude, \eqref{eq:setupC1} is exactly \textit{Assumption~\ref{no-risk-free-investment}} and therefore we get:
\begin{theorem}\label{theo:financial-market} 
  Let a one--period financial market $S_t$ as in Setup~\ref{setupA.1} be given that satisfies  Assumption~\ref{assum:A.2}, i.e. $\widehat{T}_S$ from
\eqref{period financial}   has full rank $\KK$. Then the market $S_t$ has no arbitrage if and only if the in \eqref{eq:Laplace-Sum^N} and \eqref{eq:trade-return}
  derived trade return matrix $T$ satisfies Assumption~\ref{no-risk-free-investment}.
\end{theorem}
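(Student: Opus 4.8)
The plan is to chain together the equivalences \eqref{eq12:market-arbitrage}--\eqref{eq:setupC1} assembled in the discussion preceding the statement and to note that the last line \eqref{eq:setupC1} is verbatim Assumption~\ref{no-risk-free-investment}, i.e.\ condition \eqref{eq:asmdiscm}. First I would recall from Definition~\ref{def:one-period-financial-market} that ``no arbitrage'' means that no $\xhat \in \R^{M}$ makes $(\Shat_1 - R\,\Shat_0)\cdot\xhat$ lie in $\cK \setminus \{0\}$; since $\xhat = 0$ is never an arbitrage, this is exactly \eqref{eq12:market-arbitrage}. Replacing $\xhat$ by $-\xhat$, which again lies in $\R^{M}\setminus\{0\}$, upgrades this to the symmetric statement \eqref{eq14:market-arbitrage}, namely that $(\Shat_1 - R\,\Shat_0)\cdot\xhat \notin \big(\cK \cup (-\cK)\big)\setminus\{0\}$ for every $\xhat \neq 0$.

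Next I would invoke Assumption~\ref{assum:A.2}: since $\widehat{T}_S$ from \eqref{period financial} has full rank $\KK$, we have $(\Shat_1 - R\,\Shat_0)\cdot\xhat \neq 0$ whenever $\xhat \neq 0$, so the exceptional set ``$\setminus\{0\}$'' in \eqref{eq14:market-arbitrage} may be dropped. A vector of $\R^{N}$ fails to lie in $\cK \cup (-\cK)$ precisely when it has at least one strictly positive and at least one strictly negative component; writing this out over the realizations $\alpha_i \in \cA_N$ gives \eqref{eq:setupA1}.

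Then I would pass to the trade-return-matrix language. By \eqref{eq:Laplace-Sum^N} one has $t_{i,m} = \big(S^m_1(\alpha_i) - R\,S^m_0\big)/\big(R\,S^m_0\big)$ and $\varphi_m = S^m_0\,x_m$; since each $S^m_0 > 0$ the map $\xhat \mapsto {\bm\varphi}$ is a bijection of $\R^{M}$ fixing the origin, and $(\Shat_1(\alpha_{i_0}) - R\,\Shat_0)\cdot\xhat$ carries the same sign as $\langle\,{\bm t}_{i_0\emptyvar}^\top,{\bm\varphi}\rangle$, as recorded in \eqref{eq6:equitable}. This converts \eqref{eq:setupA1} into \eqref{eq:setupB1}. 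Finally, both conditions in \eqref{eq:setupB1} are invariant under ${\bm\varphi}\mapsto\lambda{\bm\varphi}$ with $\lambda>0$, so I may normalise to ${\bm\theta} \in {\Se}^{M-1}_1$; using once more that ${\bm\theta} \in {\Se}^{M-1}_1$ forces $-{\bm\theta} \in {\Se}^{M-1}_1$, the requirement $\langle\,{\bm t}_{j_0\emptyvar}^\top,{\bm\theta}\rangle > 0$ becomes redundant, leaving \eqref{eq:setupC1}, which is \eqref{eq:asmdiscm}. Reading the chain backwards supplies the converse implication.

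Every link above is a genuine equivalence, so there is no real difficulty in the argument; the one step that deserves attention is the use of Assumption~\ref{assum:A.2} to remove the ``$\setminus\{0\}$'', because it is precisely this full-rank hypothesis that makes the no-arbitrage condition collapse onto the clean geometric Assumption~\ref{no-risk-free-investment} rather than onto a strictly weaker statement.
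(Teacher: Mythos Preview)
Your proposal is correct and follows precisely the chain of equivalences \eqref{eq12:market-arbitrage}--\eqref{eq:setupC1} that the paper develops in the text immediately preceding the theorem; the paper itself offers no separate proof but simply notes that \eqref{eq:setupC1} is Assumption~\ref{no-risk-free-investment}. You have also correctly singled out the one substantive use of Assumption~\ref{assum:A.2}, namely the removal of ``$\setminus\{0\}$'' when passing from \eqref{eq14:market-arbitrage} to \eqref{eq:setupA1}.
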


\noindent
 A very similar theorem is derived in \cite{maier:gfpt2017}.
 For completeness we rephrase here that part which is important in the following.
\begin{theorem}(\cite{maier:gfpt2017}, Theorem~3.9)\label{theo:one-period-fin-market} 
  Let a one--period financial market $S_t$ as in Setup~\ref{setupA.1} with no arbitrage be given. Then the conditions in Assumption~2.3
  are satisfied for $T$ from \eqref{eq:Laplace-Sum^N} and \eqref{eq:trade-return} if and only if Assumption~\ref{assum:A.2} holds, i.e.
  if $\widehat{T}_S$ from \eqref{period financial}  has full rank $\KK$.
\end{theorem}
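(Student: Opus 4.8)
The plan is to deduce this from Theorem~\ref{theo:financial-market} together with the purely linear-algebraic fact that $T$ and $\widehat T_S$ have the same rank. First I would note that, comparing the definition of $\widehat T_S$ in \eqref{period financial} with that of the entries $t_{i,m}$ in \eqref{eq:Laplace-Sum^N}, one has $t_{i,m}=(\widehat T_S)_{i,m}/(R\,S^m_0)$ for all $i,m$; hence $T=\widehat T_S\,D$ with $D=\operatorname{diag}\!\big(1/(R\,S^1_0),\dots,1/(R\,S^M_0)\big)$ invertible, since $R\ge 1$ and each $S^m_0>0$. Consequently $\operatorname{rank}(T)=\operatorname{rank}(\widehat T_S)$, and because $T\in\R^{N\times\KK}$ with $N\ge\KK$, Assumption~\ref{assum:A.2} (full rank $\KK$ of $\widehat T_S$) is equivalent to $T$ having full column rank, i.e.\ to $\ker(T)=\{\vek[0]\}$.

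For the implication ``$\Leftarrow$'' I would simply invoke Theorem~\ref{theo:financial-market}: its standing hypothesis is exactly Assumption~\ref{assum:A.2}, and under the no-arbitrage hypothesis of the present theorem its conclusion says that $T$ satisfies Assumption~\ref{no-risk-free-investment}. For the converse ``$\Rightarrow$'' I would argue directly, without even using no arbitrage: if $T$ satisfies Assumption~\ref{no-risk-free-investment}, then by Remark~\ref{rem:allocationVector}(b) one has $\ker(T)=\{\vek[0]\}$, so by the first paragraph $\widehat T_S$ has full rank $\KK$, which is Assumption~\ref{assum:A.2}.

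I do not expect a genuine obstacle, as the statement is essentially a repackaging of Theorem~\ref{theo:financial-market} and Remark~\ref{rem:allocationVector}(b). The one place that deserves care --- and the spot where the full-rank hypothesis is actually used in the ``$\Leftarrow$'' direction --- is the chain of equivalences \eqref{eq12:market-arbitrage}--\eqref{eq:setupC1} underlying Theorem~\ref{theo:financial-market}: full rank of $\widehat T_S$ (equivalently of $T$) ensures $T\bm\varphi\neq\vek[0]$ for every $\bm\varphi\neq\vek[0]$, so that the no-arbitrage statement ``$T\bm\varphi\notin(\cK\cup(-\cK))\setminus\{0\}$'' collapses to ``$T\bm\varphi\notin\cK\cup(-\cK)$'', i.e.\ $T\bm\varphi$ has both a strictly positive and a strictly negative component; after rescaling to $\bm\theta\in\Se_1^{\KK-1}$ and using $\bm\theta\leftrightarrow-\bm\theta$, this is precisely Assumption~\ref{no-risk-free-investment}.
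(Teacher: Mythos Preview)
Your argument is correct, but it differs from the paper's own proof in an interesting way. The paper does not argue internally at all: it observes that Assumption~\ref{no-risk-free-investment} for $T$ is equivalent, via the chain \eqref{eq:setupA1}--\eqref{eq:setupC1}, to the condition ``for every $\widehat{x}\neq\widehat{0}$ there exists $\alpha\in\cA_N$ with $(\Shat_1(\alpha)-R\Shat_0)\cdot\widehat{x}<0$'', identifies this as condition (ii*) of Theorem~3.9 in Part~I \cite{maier:gfpt2017}, identifies Assumption~\ref{assum:A.2} as condition (iii) of that same theorem, and then simply invokes the equivalence (ii*)$\Leftrightarrow$(iii) proved there under no arbitrage. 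In other words, the paper's proof is a dictionary entry linking this paper's notation to Part~I.

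Your route is more self-contained: you stay entirely within the present paper, combining the rank identity $T=\widehat{T}_S D$ with Theorem~\ref{theo:financial-market} for ``$\Leftarrow$'' and Remark~\ref{rem:allocationVector}(b) for ``$\Rightarrow$''. This has the virtue of not appealing to an external reference, and it makes transparent that the ``$\Rightarrow$'' direction does not need the no-arbitrage hypothesis at all. The paper's approach, by contrast, buys consistency with the companion paper and avoids repeating an argument already made there. Both are valid; yours is the more elementary and informative one for a reader who has only this paper in hand.
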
 
\begin{proof}
  Note that the conditions in Assumption~\ref{no-risk-free-investment} for $T$ from \eqref{eq:Laplace-Sum^N} and \eqref{eq:trade-return}
  are equivalent to
  \begin{align}\label{eq:proof-theoA.8} 
     \begin{array}{l}
      \text{``for every risky portfolio\ $\widehat{x} \not= \widehat{0}$,\ there exists some\ $\alpha \in \cA_N$} \\ [0.1cm]
      \text{such that}\ \left(\Shat_1\!(\alpha) - R\,\Shat_0\right) \cdot \widehat{x} < 0"\,,
  \end{array}
  \end{align}
  which follows directly from the equivalence of \eqref{eq:setupC1} and \eqref{eq:setupA1}. But \eqref{eq:proof-theoA.8} is exactly the
  point (ii*) in \cite{maier:gfpt2017}, Theorem~3.9, and Assumption~\ref{assum:A.2} is exactly the point (iii) of that theorem.
  Therefore, under the no arbitrage assumption again by \cite{brenner:pmpt17}, Theorem~3.9, the claimed equivalence follows.   
\end{proof}

\noindent
 Together with Theorem~\ref{theo:financial-market} we immediately conclude: \vspace*{-0.18cm}
\begin{corollary}\label{col:three-imply-the-third} 
  \textbf{(two out of three imply the third)} \\ [0.1cm]
  Let a one--period financial market $S_t$ according to Setup~\ref{setupA.1} be given. Then any two of the following conditions
  imply the third: \vspace*{-0.18cm}
  \begin{enumerate}
   \item Market $S_t$ has no arbitrage.
   \item The trade return matrix $T$ from \eqref{eq:Laplace-Sum^N} and \eqref{eq:trade-return} satisfies Assumption~\ref{no-risk-free-investment}.
   \item Assumption~\ref{assum:A.2} holds, i.e. $\widehat{T}_S$ from \eqref{period financial} has full rank $\KK$.
  \end{enumerate} 
\end{corollary}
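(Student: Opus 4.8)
The plan is to assemble the corollary directly from the two conditional equivalences already in hand, Theorem~\ref{theo:financial-market} and Theorem~\ref{theo:one-period-fin-market}, each of which is an ``iff'' in which one of the three conditions is held fixed. Write (1), (2), (3) for the conditions ``$S_t$ has no arbitrage'', ``$T$ from \eqref{eq:Laplace-Sum^N} and \eqref{eq:trade-return} satisfies Assumption~\ref{no-risk-free-investment}'', and ``Assumption~\ref{assum:A.2} holds'', respectively. In this shorthand, Theorem~\ref{theo:financial-market} reads: if (3) holds, then (1) $\Leftrightarrow$ (2); and Theorem~\ref{theo:one-period-fin-market} reads: if (1) holds, then (2) $\Leftrightarrow$ (3).

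From here I would simply read off the three implications asserted by the corollary. Assuming (3), Theorem~\ref{theo:financial-market} gives (2) $\Rightarrow$ (1), i.e. ``(2) and (3) imply (1)'', and also (1) $\Rightarrow$ (2), i.e. ``(1) and (3) imply (2)''. Assuming (1), Theorem~\ref{theo:one-period-fin-market} gives (2) $\Rightarrow$ (3), i.e. ``(1) and (2) imply (3)''. These are precisely the three statements ``any two of (1), (2), (3) imply the third'', so the corollary follows with no additional argument.

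The only point that needs care is bookkeeping rather than mathematics: one must check that ``the conditions in Assumption~2.3'' invoked in the statement of Theorem~\ref{theo:one-period-fin-market} are the conditions of Assumption~\ref{no-risk-free-investment} in the present numbering (this is borne out by the first line of its proof), and that the trade return matrix $T$ occurring in both theorems is the same object constructed in \eqref{eq:Laplace-Sum^N}--\eqref{eq:trade-return}, under the standing hypothesis that $S_t$ is a one--period financial market as in Setup~\ref{setupA.1}. Once that identification is made, the two equivalences fit together exactly as described, and there is no genuine obstacle in the proof.
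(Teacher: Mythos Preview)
Your proof is correct and follows exactly the approach the paper intends: the corollary is stated immediately after the sentence ``Together with Theorem~\ref{theo:financial-market} we immediately conclude:'' and carries no separate proof, so the paper's argument is precisely the combination of Theorem~\ref{theo:financial-market} (under (c), (a)\,$\Leftrightarrow$\,(b)) and Theorem~\ref{theo:one-period-fin-market} (under (a), (b)\,$\Leftrightarrow$\,(c)) that you have spelled out. Your bookkeeping remark about ``Assumption~2.3'' in the statement of Theorem~\ref{theo:one-period-fin-market} being Assumption~\ref{no-risk-free-investment} is also on target and is confirmed by the first line of that theorem's proof.
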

\vspace*{-0.5cm} 
\begin{remark}\label{rem:standard-assumption} 
  The standard assumption on the market $S_t$ in Part I \cite{maier:gfpt2017} is ``no nontrivial riskless portfolio",
  where a portfolio\ $x = \left(x_0,\widehat{x}\right)^T \in \R^{\KK+1}$ is \textnormal{riskless} if
  \begin{align*}
     \left(S_1 - R\,S_0\right) \cdot x \geq 0
  \end{align*}
  and $x$ is \textnormal{nontrivial} if\ $\widehat{x} \not= \widehat{0}$.
\end{remark} 
\noindent
 Using this notation we get:\vspace*{-0.2cm}
\begin{corollary}\label{col:one-period-financial-market} 
  Consider a one--period financial market $S_t$ as in Setup~\ref{setupA.1}. Then there is no nontrivial riskless portfolio in $S_t$ if and 
  only if any two of the three statements \textit{(a), (b)}, and \textit{(c)} from Corollary~\ref{col:three-imply-the-third} are satisfied.
\end{corollary}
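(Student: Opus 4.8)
The plan is to reduce the whole statement to condition \textit{(b)} of Corollary~\ref{col:three-imply-the-third} --- i.e.\ to Assumption~\ref{no-risk-free-investment} for the derived matrix $T$ --- and then to invoke the two--out--of--three principle already proved there. The point is that ``no nontrivial riskless portfolio'' turns out to be \emph{equivalent to \textit{(b)} alone}, and \textit{(b)} alone already forces \textit{(a)} and \textit{(c)}, so everything collapses to that one equivalence.

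First I would record the elementary fact that, since $S_1^0 - R S_0^0 = R - R = 0$, one has $(S_1 - RS_0)\cdot x = (\Shat_1 - R\Shat_0)\cdot\xhat$; hence, by Remark~\ref{rem:standard-assumption}, a portfolio $x=(x_0,\xhat)^\top$ is riskless iff $(\Shat_1(\alpha_i) - R\Shat_0)\cdot\xhat \geq 0$ for all $i=1,\dots,N$, and it is nontrivial iff $\xhat \neq \widehat 0$ (in particular risklessness does not involve $x_0$). Next, via the diagonal change of variables $\vek = \Lambda\xhat$ of \eqref{eq:investmentVector} together with the computations \eqref{eq:Laplace-Sum^N} and \eqref{eq6:equitable}, one obtains for each $i$ the sign equivalence $(\Shat_1(\alpha_i)-R\Shat_0)\cdot\xhat \geq 0 \Leftrightarrow \langle\,{\bm t}_{i\emptyvar}^\top,\vek\rangle \geq 0$, where $\xhat \mapsto \vek = \Lambda\xhat$ is a bijection of $\R^{\KK}\setminus\{\widehat 0\}$ onto itself; note this step uses no rank hypothesis on $\widehat T_S$. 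Consequently: \emph{$S_t$ admits a nontrivial riskless portfolio} $\Leftrightarrow$ \emph{there is $\vek\neq 0$ with $\langle\,{\bm t}_{i\emptyvar}^\top,\vek\rangle \geq 0$ for all $i$}, and, writing $\vek = s{\bm\theta}$ and using positive homogeneity, the latter is precisely the \emph{negation} of Assumption~\ref{no-risk-free-investment} for $T$. Thus \textbf{``no nontrivial riskless portfolio''} is equivalent to statement \textit{(b)} of Corollary~\ref{col:three-imply-the-third}.

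Granting this equivalence, the corollary is immediate. For ``$\Leftarrow$'': if any two of \textit{(a)}, \textit{(b)}, \textit{(c)} hold, then by Corollary~\ref{col:three-imply-the-third} all three hold, in particular \textit{(b)}, and hence there is no nontrivial riskless portfolio. For ``$\Rightarrow$'': assume there is no nontrivial riskless portfolio. Then \textit{(b)} holds by the equivalence above; and \textit{(a)} holds as well, because any arbitrage portfolio $x$ satisfies $(\Shat_1 - R\Shat_0)\cdot\xhat \geq 0$ with $(\Shat_1-R\Shat_0)\cdot\xhat \neq 0$, which forces $\xhat \neq \widehat 0$ and thereby exhibits a nontrivial riskless portfolio --- a contradiction. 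Hence \textit{(a)} and \textit{(b)} are both satisfied (and then \textit{(c)} follows too, e.g.\ from Remark~\ref{rem:allocationVector}(ii), since Assumption~\ref{no-risk-free-investment} gives $\ker T = \{\vek[0]\}$ while $T$ and $\widehat T_S$ differ only by right multiplication with an invertible diagonal matrix, so $\operatorname{rank}\widehat T_S = \operatorname{rank} T = \KK$); in any case at least two of the three statements hold.

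The argument is essentially bookkeeping; the only point that needs genuine care is keeping the portfolio--space description (vectors $x,\xhat$, cone $\cK$) synchronized with the trade--return description (vectors $\vek,{\bm\theta}$, matrix $T$) through $\vek = \Lambda\xhat$ and the discounting by $R$, and in particular not conflating ``riskless'' (the condition $\widehat T_S\xhat\in\cK$) with the strictly weaker ``no arbitrage'' (the condition $\widehat T_S\xhat\notin\cK\setminus\{0\}$) --- this is exactly why absence of a nontrivial riskless portfolio is the strongest of the three conditions and coincides with \textit{(b)} by itself. I do not foresee any serious obstacle beyond this translation, since once it is in place the equivalence with ``two of \textit{(a)},\textit{(b)},\textit{(c)}'' is forced by Corollary~\ref{col:three-imply-the-third}.
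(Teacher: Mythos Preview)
Your proof is correct and in fact more transparent than the paper's own argument, which simply cites Proposition~3.7 and Theorem~3.9 of Part~I \cite{maier:gfpt2017} as a black box and applies them to Corollary~\ref{col:three-imply-the-third}. You instead work out directly, using only the material already present in this appendix (the relation $(S_1-RS_0)\cdot x = (\Shat_1-R\Shat_0)\cdot\xhat$, the diagonal change of variables \eqref{eq:investmentVector}, and the sign computation \eqref{eq6:equitable}), that ``no nontrivial riskless portfolio'' is \emph{equivalent to condition \textit{(b)} by itself}; the remaining implications then fall out of Corollary~\ref{col:three-imply-the-third} and Remark~\ref{rem:allocationVector}(ii). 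This self-contained route has the advantage of making explicit the hierarchy among the three conditions --- in particular that \textit{(b)} is the strongest and already forces \textit{(a)} and \textit{(c)} --- whereas the paper's citation-based proof leaves that structure hidden in the referenced results from Part~I. The trade-off is only that you re-derive a small piece of what Part~I already proves in greater generality.
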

\begin{proof}
  Just apply \cite{maier:gfpt2017}, Proposition~3.7 together with \cite{maier:gfpt2017}, Theorem~3.9 to the situation of
  Corollary~\ref{col:three-imply-the-third}.
\end{proof} 

\noindent
 To conclude, any two of the three conditions of Corollary~\ref{col:three-imply-the-third} on the market $S_t$ are
 sufficient to apply the theory presented in Part I \cite{maier:gfpt2017}.

\end{appendix}


\vspace*{0.8cm}


\end{document}